\newif\ifpublic
\theoremstyle{plain}
\newtheorem{lem}{Lemma}[section]
\newtheorem{theorem}[lem]{Theorem}
\newtheorem{maintheorem}[lem]{Main Theorem}
\newtheorem{lemma}[lem]{Lemma}
\newtheorem{corollary}[lem]{Corollary}
\newtheorem{conjecture}[lem]{Conjecture}
\newtheorem{claim}[lem]{Claim}
\newtheorem{proposition}[lem]{Proposition}
\newtheorem{definition}[lem]{Definition}
\theoremstyle{definition}
\newtheorem{remark}[lem]{Remark}
\newcommand{\phnote}[1]{\todo[color=red!20!green!15, size=\footnotesize]{ph: #1}}
\newcommand{\inote}[1]{\todo[color=blue!25!, size=\footnotesize]{Irit: #1}}
\newcommand{\gnote}[1]{\todo[color=red!20!, size=\footnotesize]{Guy: #1}}
\newcommand{\mathprob}[1]{\mbox{\textmd{\textsc{#1}}}}
\newcommand{\poly}{\text{poly}}
\renewcommand{\epsilon}{\varepsilon}
\renewcommand{\phi}{\varphi}
\newcommand{\acc}{\mbox{\textmd{\textsc{acc}}}}
\newcommand{\rej}{\mathprob{rej}}
\newcommand{\etal}{{\it et~al.}}
\newcommand{\ACSAT}{\mathprob{Alg-CktSAT}}
\newcommand{\NP}{\mathprob{NP}}
\newcommand{\CSAT}{\mathprob{CktSAT}}
\newcommand{\SAT}{\mathprob{SAT}}
\newcommand{\LDE}{\mathprob{LDE}}
\newcommand{\Vi}[1]{\widetilde{#1}}
\newcommand{\Vix}{{\Vi{x}}}
\newcommand{\ViPi}{{\Vi\Pi}}
\newcommand{\QF}{{\rm QH}}
\newcommand\remove[1]{}
\newcommand\card[1]{\left| #1 \right|}
\newcommand\set[1]{\left\{ #1 \right\}}
\newcommand\parenth[1]{\left( #1 \right)}
\newcommand\sett[2]{\left\{ #1 \left| \; \vphantom{#1 #2} \right. #2  \right\}}
\newcommand{\func}{f}
\newcommand\ceil[1]{{\lceil #1 \rceil}}
\newcommand{\eps}{\varepsilon}
\renewcommand{\epsilon}{\varepsilon}
\newcommand{\ignore}[1]{}
\newcommand{ \bits    }{\{0,1\}}
\newcommand{ \D       }{\mathcal{D}}
\newcommand{\g}{{g}}
\newcommand{\integers}{\mathbb{Z}^+}
\renewcommand{\epsilon}{\varepsilon}
\newcommand{\zo}{\{0,1\}}
\newcommand{\agr}{\mathrm{agr}}
\newcommand{\spn}{\mathrm{span}}
\newcommand{\union}{\cup}
\newcommand{\outr}{{\mathrm{out}}}
\newcommand{\proj}{{\mathrm{proj}}}
\newcommand{\inn}{{\mathrm{in}}}
\newcommand{\comp}{\mathrm{comp}}
\newcommand{\ve }{{\hbox{   and   }}}
\newcommand{\F}{\mathbb{F}}
\newcommand{\e}{\epsilon}
\newcommand{\twocomp}{\circledast}
\newcommand{\din}{\delta_\inn}
\newcommand{\dout}{\delta_\outr}
\newcommand{\as}{{A_S}}
\newcommand{\ag}{{A_\gamma}}
\newcommand{\m}{{m_3}}
\newcommand{\fakePi}{{\widetilde \Pi}}
\newcommand{\fakex}{{\widetilde x}}
\newcommand{\lref}[2][]{\hyperref[#2]{#1~\ref*{#2}}}
\renewcommand{\eqref}[1]{\hyperref[#1]{(\ref*{#1})}}
\numberwithin{equation}{section}
\begin{document}

\title{Polynomially Low Error PCPs with polyloglog n Queries via
  Modular Composition\thanks{A preliminary version of this paper
    appeared in the {\em Proc.\ $47$th ACM Symp.\ on Theory of
      Computing (STOC)}, 2015~\cite{DinurHK2015}.}}

\date{\today}
\author{
  Irit Dinur\thanks{Weizmann Institute of Science, ISRAEL. email: {\tt
      irit.dinur@weizmann.ac.il}. Research supported in part by a
    ISF-UGC grant 1399/4 and by an ERC grant 239985.}
  \and
  Prahladh Harsha\thanks{Tata Institute of Fundamental Research
    (TIFR), Mumbai, INDIA. email: {\tt prahladh@tifr.res.in}. Research
    supported in part by ISF-UGC grant 1399/4. Part of this work
was done while visiting the Simons Institute for the Theory of Computing, UC Berkeley.}
\and Guy Kindler\thanks{The Hebrew University of Jerusalem,
  ISRAEL. email: {\tt gkindler@cs.huji.ac.il}.  Research supported in part by an
    Israeli Science Foundation grant no. 1692/13 and by US-Israel Binational Science
    Foundation grant no. 2012220. Part of this work
was done while visiting the Simons Institute for the Theory of Computing, UC Berkeley.}}
\begin{titlepage}

\maketitle

\thispagestyle{empty}
\setcounter{page}{0}

\begin{abstract}
  We show that every language in NP has a PCP verifier that tosses
  $O(\log n)$ random coins, has perfect completeness, and a soundness error of at most
  $1/\poly(n)$, while making at most $O(\poly\log\log n)$ queries into a
  proof over an alphabet of size at most $n^{1/\poly\log\log n}$. Previous constructions that obtain $1/\poly(n)$ soundness error used
  either $\poly\log n $ queries or an exponential sized alphabet, i.e. of
  size $2^{n^c}$ for some $c>0$. Our result is an exponential
  improvement in both parameters simultaneously.

  Our result can be phrased as a polynomial-gap hardness for approximate
  CSPs with arity $\poly\log\log n$ and alphabet size $n^{1/\poly\log
    n}$. The ultimate goal, in this direction,
  would be to prove polynomial hardness for CSPs with constant arity and
  polynomial alphabet size (aka the sliding scale conjecture for inverse
  polynomial soundness error).

  Our construction is based on a modular generalization of previous
  PCP constructions in this parameter regime,
 which involves a
  composition theorem that uses an extra `consistency' query but
  maintains the inverse polynomial relation between the soundness
  error and the alphabet size.

Our main technical/conceptual contribution is a new notion of
soundness, which we refer to as {\em
    distributional soundness}, that replaces the previous notion of
  ``list decoding soundness'', and that allows us to prove a modular
  composition theorem with tighter parameters. This new notion of
  soundness allows us
  to invoke composition a super-constant number of
  times without incurring a blow-up in the soundness error.
\end{abstract}

\end{titlepage}

\section{Introduction}

Probabilistically checkable proofs (PCPs) provide a proof format that
enables verification with only a small number of queries into the
proof such that the verification, though probabilistic, has only a
small probability of error. This is formally captured by the
following notion of a probabilistic verifier.
\begin{definition}[PCP Verifier]\label{def:PCPver}
  A PCP verifier $V$ for a language $L$ is a polynomial time
  probabilistic algorithm that behaves as follows: On input $x$, and
  oracle access to a (proof) string $\Pi$ (over an alphabet $\Sigma$),
  the verifier reads the input $x$, tosses some random coins $R$, and
  based on $x$ and $R$ computes a (local) window $I=(i_1,\ldots,i_q)$
  of $q$ indices to read from $\Pi$, and a (local) predicate
  $\varphi:\Sigma^q \to \zo$. The verifier then accepts iff
  $\varphi(\Pi\vert_I)=1$.
\begin{itemize}
\item The verifier has {\em perfect completeness}: if for every $x\in
  L$, there is a proof $\Pi$ that is accepted with probability $1$. I.e., $ \exists \Pi ,\;\Pr_{R}[\varphi(\Pi\vert_I)=1]=1$.
\item The verifier has {\em soundness error} $\delta<1$: if for any $x\not\in L$, every proof $\Pi$ is
accepted with probability at most $\delta$. I.e., $\forall \Pi,\; \Pr_{R}[\varphi(\Pi\vert_I)=1]\le \delta$.
\end{itemize}
\end{definition}
The celebrated PCP Theorem~\cite{AroraS1998,AroraLMSS1998} states that
every language in NP has a verifier that has perfect completeness and
soundness error bounded by a constant $\delta< 1$, while using only a
logarithmic number of random coins, and reading only $q=O(1)$ proof
bits.  Naturally (and motivated by the fruitful connection to
inapproximability due to Feige~\etal~\cite{FeigeGLSS1996}), much attention has
been given to obtaining PCPs with desirable parameters, such as a
small number of queries $q$, smallest possible soundness error
$\delta$, and smallest possible alphabet size $\card \Sigma$.%

How small can we expect the soundness error $\delta$ to be? There are
a couple of obvious limitations. First observe that the soundness
error $\delta$ cannot be smaller than $1/\poly(n)$ just because there
are only $\poly(n)$ different random choices for the
verifier and at least one of the corresponding local predicates must
be satisfiable\footnote{One may assume that every
  local predicate $\phi$ is
  satisfiable. Otherwise the question of ``$x\stackrel{?}{\in} L$''
  reduces to the question of whether $\varphi$ is satisfiable for {\em
    any} of the predicates computed by the verifier. This cannot occur
  without a collapse of NP into
  $\mathprob{NTIME}(q\log\card\Sigma)$.}. Next, note that if the
verifier reads a total of ${k}$ bits from the proof (namely,
$q\log\card\Sigma \le {k}$), the soundness error cannot be smaller
than $2^{-k}$, just because a random proof is expected to cause the
verifier to accept with at least this probability.

The best case scenario is thus if one can have the verifier read ${k}
= O(\log n)$ bits from the proof and achieve a soundness error of
$1/2^k = 1/\poly(n)$. Indeed, the following is well known (obtained by
applying a randomness efficient sequential repetition to the basic PCP Theorem):
\begin{theorem}[PCP theorem + randomness efficient sequential repetition]\label{thm:basic-seqrep}
For every integer $k$, every language in NP has a PCP verifier that
tosses at most $O(k+\log n)$ random coins, makes $q=O({k})$ queries into a
proof over the Boolean alphabet $\bits$, has perfect completeness, and
soundness error $\delta = 2^{-{k}}$.

In particular, setting ${k} = \log n$ we get $q = O(\log n)$ and $\delta = 1/\poly(n)$.
\end{theorem}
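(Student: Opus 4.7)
The plan is to start from the standard PCP theorem of Arora--Safra and Arora--Lund--Motwani--Sudan--Szegedy, which gives a PCP verifier $V_0$ with $r_0 = O(\log n)$ random coins, $q_0 = O(1)$ Boolean queries, perfect completeness, and some fixed constant soundness error $\delta_0 < 1$. Naively running $V_0$ sequentially $t$ times with independent randomness drives the soundness down to $\delta_0^t$, at the cost of $t \cdot r_0 = \Theta(k \log n)$ random bits, which overshoots the $O(k+\log n)$ budget. The key idea is to recycle randomness across the $t$ runs by sampling the random tapes as the vertices of a random walk on a constant-degree expander.

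Concretely, fix an explicit constant-degree expander $G = (V, E)$ on vertex set $V = \zo^{r_0}$ whose second-largest eigenvalue $\lambda$ is small enough that $\delta_0 + \lambda \le 1/2$; such graphs exist off the shelf (e.g.\ by taking a sufficiently high power of a Ramanujan graph). The new verifier $V'$ samples a uniformly random starting vertex $v_1 \in V$ using $r_0 = O(\log n)$ bits, then takes a walk $v_1, v_2, \ldots, v_t$ of length $t$, using $O(1)$ bits per step since $G$ has constant degree. It simulates $V_0$ on random tape $v_i$ in step $i$, and accepts iff every one of the $t$ simulations accepts. The total randomness is $r_0 + O(t) = O(\log n + k)$ and the total query complexity is $t \cdot q_0 = O(k)$.

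Perfect completeness is inherited from $V_0$: if $x \in L$ then the same proof that makes $V_0$ always accept makes $V'$ always accept. For soundness, fix $x \notin L$ and an arbitrary proof $\Pi$, and let $B \subseteq V$ be the set of random tapes on which $V_0$ accepts $\Pi$; by assumption $\abs{B}/\abs{V} \le \delta_0$. The standard hitting bound for expander walks (Ajtai--Komlós--Szemerédi, Kahale) states that the probability that a $t$-step random walk in $G$ stays entirely inside $B$ is at most $(\delta_0 + \lambda)^t \le 2^{-t}$. Choosing $t = k$ yields soundness $\delta \le 2^{-k}$, and the specialization $k = \log n$ gives $q = O(\log n)$ queries with $\delta = 1/\poly(n)$.

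The only non-elementary ingredient, and what I would flag as the main technical point, is the expander-walk hitting bound; I would invoke it as a black box rather than reprove it. Everything else is bookkeeping of coins, queries, and the fact that the alphabet remains Boolean because each simulation of $V_0$ reads individual proof bits.
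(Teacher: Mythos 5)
Your proposal is correct and is exactly the argument the paper has in mind: the theorem is stated as the well-known consequence of the basic PCP theorem combined with randomness-efficient (expander-walk) sequential repetition, with the AKS/Kahale hitting bound invoked as a black box, and the paper gives no further proof. The only bookkeeping point worth making explicit is that the basic PCP theorem's constant soundness $\delta_0$ may exceed $1/2$, so one should first run $O(1)$ independent repetitions to push $\delta_0$ below a small constant (costing only a constant factor in randomness and queries) before requiring $\delta_0+\lambda\le 1/2$ for the expander walk.
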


This theorem gives a ballpark optimal tradeoff (up to constants)
between soundness error and the number of bits read from the
proof. However it does not achieve a {\em small number of queries}, a
fundamental requirement that is important, among other things, for
hardness of approximation.
The goal of constructing a PCP with both a small error and a {small
  number of queries} turns out to be much more challenging and has
attracted considerable attention. This was first formulated by
Bellare~\etal~\cite{BellareGLR1993} as the ``sliding
scale'' conjecture.
\begin{conjecture}[Sliding Scale
  Conjecture~\cite{BellareGLR1993}]
For any $\frac 1{\poly(n)}\le \delta <1$, every language in NP  has a
PCP verifier that tosses $O(\log n)$ random coins, makes
$q=O(1)$ queries\footnote{It is even conjectured that this constant can be made
  as low as 2.} into a proof over an alphabet $\Sigma$ of size
$\poly(1/\delta)$, has perfect completeness, and soundness error $\delta$.
\end{conjecture}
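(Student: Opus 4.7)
The plan is to attack this via PCP composition, starting from a high-query PCP with optimal soundness (such as \lref{thm:basic-seqrep} applied with $k = \log(1/\delta)$) and systematically reducing the query complexity while preserving the inverse-polynomial soundness. The natural strategy is to first obtain an ``outer'' PCP with very few queries but a large alphabet, and then reduce the alphabet by repeatedly composing with ``inner'' PCPs.

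First, I would construct an outer PCP that is a two-query projection game (\labelcover-style) using $O(\log n)$ random coins, with perfect completeness and soundness error roughly $1/\poly(n)$ over a very large alphabet $\Sigma_{\outr}$. Such objects are classically obtained from algebraic constructions based on low-degree testing and the sum-check protocol, and they already achieve $q = 2$; the whole battle then reduces to shrinking $\card{\Sigma_{\outr}}$ down to $\poly(1/\delta)$ without damaging the soundness or increasing the number of queries beyond $O(1)$.

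Next, I would iterate a composition step that encodes each symbol of the current outer alphabet by a PCP-of-proximity over a smaller alphabet. Each composition step preserves $q = O(1)$ while reducing the alphabet roughly from $\card\Sigma$ to $\card\Sigma^{\epsilon}$ for some $\epsilon < 1$. Going from the initial algebraic alphabet all the way down to $\poly(1/\delta)$ requires $\omega(1)$ composition rounds, so the composition theorem must be essentially lossless in the soundness parameter across all rounds.

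The main obstacle — indeed, the reason this remains a conjecture — is precisely that standard composition via ``list decoding soundness'' pays either an additive $\poly(1/\card{\Sigma_{\outr}})$ per step or a multiplicative blowup in queries. After $\omega(1)$ rounds this either pushes the error above $1/\poly(n)$ or pushes $q$ above $O(1)$. I would attempt to use the paper's new notion of \emph{distributional soundness}, which the abstract indicates can survive a super-constant number of composition rounds, and push it further: the goal would be to design inner verifiers satisfying distributional soundness with \emph{constant} query complexity over an alphabet already of size $\poly(1/\delta)$. This is where I expect to get stuck — available inner PCPs (Hadamard, long code, algebraic) each trade off query complexity against alphabet size or against the quality of the decoding guarantee, and reconciling all three simultaneously is exactly the barrier that keeps the full sliding scale conjecture open.
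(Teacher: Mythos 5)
There is a genuine gap, and it is unavoidable: the statement you were asked to prove is the Sliding Scale Conjecture itself, which the paper explicitly presents as an open problem (open for the polynomially-small-$\delta$ regime for two decades); the paper does not prove it, and its Main Theorem is only a partial result ($q=\poly\log\log n$ queries, alphabet $n^{1/\poly\log\log n}$). Your proposal, to its credit, does not actually claim a proof --- it sketches the known attack and honestly flags where it gets stuck --- but as a proof of the conjecture it therefore fails. Moreover, the route you sketch (algebraic outer PCP, then repeated alphabet reduction via composition with a soundness notion robust to many rounds) is essentially the paper's own construction, and the paper shows concretely why this route cannot reach $q=O(1)$ with the available components. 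First, the paper's modular composition (\lref[Theorem]{thm:comp}) is query-\emph{increasing}: the composed verifier has $k_\outr+k_\inn$ provers, since it must add the inner provers and an extra consistency query. Hence the $\omega(1)$ composition rounds you need to shrink the alphabet from $\exp(\poly\log n)$ down to $\poly(1/\delta)$ inherently produce $\omega(1)$ queries; the only known query-\emph{preserving} (two-query) composition \cite{MoshkovitzR2010b,DinurH2013} instead blows the alphabet up to $\exp(\poly(1/\delta))$, which violates the conjecture's alphabet bound. Second, you locate the barrier in the lossiness of list-decoding composition, but that is precisely the loss the paper already eliminates: distributional soundness makes composition essentially lossless ($\delta_\comp\le\delta_\outr+\delta_\inn+\eta_\inn$), and the conjecture still does not follow.

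The real bottleneck, as the paper argues in \lref[\S]{sec:optimal}, lies in the building blocks rather than in the composition calculus: any inner dPCP built from the Reed--Muller or Hadamard gadgets with soundness error $\delta$ needs field size at least $1/\delta$ and randomness at least $\log(1/\delta)$ per invocation, and keeping the overall construction polynomial-size then forces roughly $\Omega(1/\epsilon)$ rounds each costing $(\log N)^{1-\epsilon}$ randomness, which caps the achievable error at $N^{-1/\poly\log\log N}$ --- short of $1/\poly(n)$ --- and caps the query count at $\omega(1)$. So ``pushing distributional soundness further'' with these same components provably cannot close the gap; what is missing is a genuinely new inner object (constant queries, alphabet $\poly(1/\delta)$, composable without query growth), and constructing it is exactly the open content of the conjecture.
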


As we describe shortly below, this conjecture is known to hold for $1>
\delta \ge 2^{-(\log n)^{1-\eps}}$, namely where $\delta$ can be made
`almost' polynomially small. The interesting regime, that has remained
open for two decades, is that of (inverse) polynomially small
$\delta$. This is the focus of our work. Our main goal is to find the
smallest $q$ and $\card\Sigma$ parameters for which we can get
$\delta$ to be polynomially small. Our main result is the following.
\begin{maintheorem}\label{thm:main}
  Every language in NP has a PCP verifier that tosses $O(\log n)$
  random bits, makes $q= (\log\log n)^{O(1)}$ queries into a proof
  over an alphabet $\Sigma$ of size $\card\Sigma =n^{1/ {(\log\log
      n)}^{O(1)}}$, has perfect completeness, and soundness error
  $\delta=1/\poly(n)$.
\end{maintheorem}
Previous PCP constructions require at least $(\log n)^{\Omega(1)}$
queries in order to achieve polynomially small error (and this remains
true even for constructions that are allowed quasi-polynomial size,
see further discussion at the end of this introduction).

The first works making progress towards this conjecture are due to
Raz and Safra~\cite{RazS1997}, and Arora and Sudan~\cite{AroraS2003}, and rely on the classical
(algebraic) constructions of PCPs. They prove the conjecture for all
$\delta$ such that $\delta \ge 2^{-(\log n)^\beta}$ for some constant
$\beta>0$. These ideas were then extended by Dinur~\etal~\cite{DinurFKRS2011} with an
elaborate composition-recursion structure, proving the conjecture for
all $\delta \ge 2^{-(\log n)^{1-\eps}}$ for {\em any} $\eps>0$. The
small catch here is that the number of queries grows as $\eps$
approaches $0$. The exact dependence of $q$ on $\eps$ was not
explicitly analyzed in~\cite{DinurFKRS2011}, but we show that it can
be made $O(1/\eps)$ while re-deriving their result.
\begin{theorem}[\cite{DinurFKRS2011}]
  For every $\eps>0$ and $\delta = 2^{-(\log n)^{1-\eps}}$, every language in NP has a PCP verifier that
  tosses $O(\log n)$ random coins, makes $q= O(1/\eps)$ queries into
  a proof over an alphabet $\Sigma$ of size $\card\Sigma=
  1/\poly(\delta)$, has perfect completeness, and has soundness error
  $\delta$.
\end{theorem}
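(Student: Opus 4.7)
The plan is to re-derive this theorem by applying the composition theorem of the present paper iteratively to a classical algebraic base PCP, carefully tracking how the query count, alphabet size, and soundness evolve with the number of compositions. First, I would invoke the algebraic PCP construction of Raz--Safra or Arora--Sudan to obtain a base PCP for NP with $O(\log n)$ randomness, $O(1)$ queries (in the form of a projection game), alphabet $\Sigma_0$ of size $\poly(n)$, perfect completeness, and soundness error $1/\poly(n)$; crucially this base PCP already satisfies the polynomial tradeoff $\delta \le 1/|\Sigma|^{\Omega(1)}$ between alphabet size and soundness.

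Next, I would apply the composition theorem of this paper $k = \lceil 1/\epsilon \rceil$ times. Each composition step plugs in an inner PCP (an assignment tester) whose alphabet is a controlled shrinkage of the outer's, chosen on a geometric schedule so that after $k$ iterations one reaches $\log|\Sigma_k| \approx (\log n)^{1-\epsilon}$. A convenient schedule is $\log|\Sigma_{i+1}| = (\log n)^{-\epsilon/k} \cdot \log|\Sigma_i|$, which after $k$ steps yields $\log|\Sigma_k| = (\log n)^{1-\epsilon}$ exactly. Since each composition contributes only $O(1)$ queries (the inner's query complexity, plus the consistency query from the paper's composition theorem), the final query count is $O(k) = O(1/\epsilon)$.

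For the soundness I would use that the composition theorem, applied to an outer PCP with soundness $\delta_{\mathrm{out}}$ and an inner PCP with soundness $\delta_{\mathrm{in}}$, produces a composed PCP with soundness roughly $\delta_{\mathrm{out}} + \delta_{\mathrm{in}}$, without a multiplicative loss depending on the alphabet size. By picking the inner PCP so that $\delta_{\mathrm{in}} \le 1/|\Sigma_{i+1}|^{\Omega(1)}$ at every level, the invariant $\delta_i \le 1/|\Sigma_i|^{\Omega(1)}$ is preserved through the recursion. After $k$ compositions, the final soundness is $\delta_k \le 2^{-\Omega((\log n)^{1-\epsilon})}$, which can be tuned to exactly $2^{-(\log n)^{1-\epsilon}}$ by small adjustments of the polynomial exponents and, if needed, a constant amount of randomness-efficient sequential repetition.

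The hard part, which motivates the entire paper, is precisely this no-multiplicative-blow-up property of the composition step. A naive composition theorem loses a factor depending on the alphabet size (e.g., through a list-decoding argument in the soundness analysis), and such a factor would accumulate multiplicatively over $\omega(1)$ compositions and destroy the inverse-polynomial relation between $\delta$ and $|\Sigma|$. This is exactly why the paper introduces its new distributional soundness notion for inner PCPs, which replaces the list-decoding step and enables the composition to be iterated a super-constant number of times while preserving the polynomial soundness-vs.-alphabet tradeoff required to carry the recursion through $k = 1/\epsilon$ levels.
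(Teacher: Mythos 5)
Your outline gets the composition mechanics right (additive soundness $\delta_\outr+\delta_\inn+\eta_\inn$ with no list-size blowup, $O(1)$ extra queries per level, hence $O(1/\eps)$ queries after $O(1/\eps)$ levels), but it rests on a base case that does not exist. You start from a PCP with $O(\log n)$ randomness, $O(1)$ queries, alphabet of size $\poly(n)$, and soundness error $1/\poly(n)$, attributed to Raz--Safra/Arora--Sudan. Those works give soundness only $2^{-(\log n)^{\beta}}$ for some constant $\beta$; a constant-query PCP with polynomial alphabet, logarithmic randomness and polynomially small error is essentially the sliding scale conjecture in its open regime, and is in particular far stronger than the theorem you are asked to prove. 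Because of this, your induction invariant $\delta_i \le 1/\card{\Sigma_i}^{\Omega(1)}$, maintained from a $\poly(n)$-size alphabet downward, cannot be instantiated: at the early levels your schedule demands building blocks with alphabet $\poly(n)$, randomness $O(\log n)$ and soundness $1/\poly(n)$, and no known component (Reed--Muller or Hadamard based dPCP) supplies this. Calling the inner objects ``assignment testers'' is a further sign of trouble: the paper stresses that PCPs of proximity/assignment testers are unusable at low soundness, and the inner objects must be decodable PCPs with (distributional) soundness.

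The actual derivation runs in the opposite direction. One fixes a single field $\F$ with $\log\card\F \approx (\log n)^{1-\eps}$ at the outset, so that \emph{every} level of the recursion has soundness error $\card\F^{-\Omega(1)} \approx \delta$ and agreement error $\card\F^{-\Omega(1)}$; the quantity that shrinks under composition is not the soundness-to-alphabet ratio but the \emph{answer size}. The outermost Reed--Muller dPCP has answer size about $2^{(\log n)^{1-\eps}}$ field elements (alphabet doubly exponential in $(\log n)^{1-\eps}$, nowhere near $\poly(1/\delta)$), and each composition with a Reed--Muller dPCP over the same $\F$ divides the exponent of the answer size by a factor $(\log n)^{\Theta(\eps)}$ --- this is forced by the randomness budget, since keeping $\card\F^{m_i}=\poly(n)$ caps the dimension $m_i$ at $O((\log n)^{\eps})$ --- so $O(1/\eps)$ levels suffice, after which one final Hadamard-based dPCP brings the answer size to $O(1/\eps)$ field elements, i.e.\ alphabet $\poly(1/\delta)$ for constant $\eps$. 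Summing $O(1/\eps)$ errors of size $\card\F^{-0.1}$ gives soundness $\card\F^{-\Omega(1)} \le \delta$. Your plan, which ties each level's soundness to a shrinking alphabet starting from $\poly(n)$, inverts this structure and cannot be carried out with the available building blocks; repaired, it becomes exactly the paper's Stage I--III construction with $\eps$ set to a constant.
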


The focus of \cite{DinurFKRS2011} was on a constant number
of queries but their result can also be applied towards getting
polynomially small error with a non-trivially small number of
queries. This is done by combining it with sequential repetition. We
get,
\begin{corollary}[\cite{DinurFKRS2011} + randomness efficient
  sequential repetition]\label{cor:DFKRS} For every $\eps>0$, every
  language in NP has a PCP verifier that
  tosses $O(\log n)$ random coins, makes $q= O((\log n)^\eps/\eps)$ queries into
  a proof over an alphabet $\Sigma$ of size $\card\Sigma= 2^{(\log n)^{1-\eps}}$, has perfect completeness, and has soundness error
  $\delta=1/\poly(n)$.
\end{corollary}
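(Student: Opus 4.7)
The plan is to obtain the corollary by starting from the PCP verifier of Theorem~1.4 (due to~\cite{DinurFKRS2011}) and boosting its soundness via a randomness-efficient form of sequential repetition. Concretely, instantiate Theorem~1.4 with the given $\eps>0$ to get a base verifier $V_0$ for the target NP language, which tosses $r_0 = O(\log n)$ random coins, makes $q_0 = O(1/\eps)$ queries over an alphabet $\Sigma_0$ of size $|\Sigma_0| = 2^{O((\log n)^{1-\eps})}$, has perfect completeness, and has soundness error $\delta_0 = 2^{-(\log n)^{1-\eps}}$. The goal is to amplify $\delta_0$ down to $1/\poly(n)$ while keeping the random string logarithmic and paying only multiplicatively in the query count.

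Next, I would set the number of repetitions to $t = \Theta((\log n)^{\eps})$, chosen so that $\delta_0^{\,t} = 2^{-(\log n)^{1-\eps}\cdot(\log n)^{\eps}} = 2^{-\log n} = 1/n$. Naive sequential repetition would use $t\cdot r_0 = \Theta((\log n)^{1+\eps})$ random bits, which violates the randomness budget. Instead, I would use a standard randomness-efficient repetition based on a random walk of length $t$ on a constant-degree expander on the vertex set $\{0,1\}^{r_0}$: pick the starting vertex using $r_0$ fresh random bits and take $t-1$ expander steps using $O(t)$ additional bits. This uses $r_0 + O(t) = O(\log n) + O((\log n)^{\eps}) = O(\log n)$ random coins in total, as required. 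The composite verifier simulates $V_0$ on each of the $t$ visited random strings (each producing its own predicate over its own window) and accepts iff all $t$ local predicates accept.

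For analysis, perfect completeness is immediate since each $V_0$ invocation has perfect completeness on $x\in L$. For soundness, the standard expander Chernoff / hitting bound (e.g.\ the Ajtai--Komlós--Szemerédi / Gillman type inequality for random walks) gives that, for any fixed proof $\Pi$, the probability that all $t$ visited vertices lie in the ``accepting'' set (which for $x\notin L$ has measure at most $\delta_0$) is at most $(\delta_0 + \lambda)^{t}$, where $\lambda$ is the second eigenvalue of the expander. Choosing an explicit expander with $\lambda \le \delta_0$ (this only affects constants since the expander has constant degree), the bound becomes $(2\delta_0)^{t} \le 2^{-(1-o(1))\log n} = 1/\poly(n)$. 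The composite verifier reads at most $t\cdot q_0 = O((\log n)^{\eps}/\eps)$ symbols, each from the alphabet $\Sigma_0$ of size $2^{(\log n)^{1-\eps}}$, so all the advertised parameters are met.

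The only real content beyond bookkeeping is the randomness-efficient repetition, and the main (minor) obstacle is to make sure the expander-walk soundness amplification really gives the clean $(\delta_0+\lambda)^t$ bound in the PCP setting (where the ``bad'' event depends on the entire proof $\Pi$ rather than on each coordinate separately). This is standard: for every fixed $\Pi$, the set of random strings $R$ on which $V_0$ accepts is a \emph{fixed} subset of $\{0,1\}^{r_0}$ of density $\le \delta_0$, to which the expander walk bound applies verbatim; one then takes a union bound over proofs only after the probability has been driven below $1/\poly(n)$, which is well below $|\Sigma_0|^{-|\Pi|}$ is not needed here since soundness is stated per proof. Putting these pieces together yields the claimed verifier.
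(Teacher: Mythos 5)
Your overall route is exactly the paper's: instantiate the \cite{DinurFKRS2011} verifier of the theorem with soundness $\delta_0 = 2^{-(\log n)^{1-\eps}}$, $q_0=O(1/\eps)$ queries and alphabet $2^{O((\log n)^{1-\eps})}$, repeat it $t=\Theta((\log n)^{\eps})$ times so that $\delta_0^t = 1/\poly(n)$, and derandomize the repetition with an expander walk; the paper offers no more detail than this, so the burden is on getting the walk parameters right. That is where your write-up has a genuine error: you assert that one can take an \emph{explicit constant-degree} expander with second eigenvalue $\lambda \le \delta_0$, and that ``this only affects constants.'' No constant-degree graph can have $\lambda \le \delta_0 = 2^{-(\log n)^{1-\eps}}$; for degree $d$ one always has $\lambda = \Omega(1/\sqrt{d})$, so with constant degree $\lambda$ is a fixed constant. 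In that case the hitting bound $(\delta_0+\lambda)^t \approx \lambda^t = 2^{-\Theta(t)} = 2^{-\Theta((\log n)^{\eps})}$ is nowhere near $1/\poly(n)$: with only $t=(\log n)^\eps$ repetitions you need the per-repetition error reduction to be essentially the full factor $\delta_0$, not a constant, so the bound $\lambda^t$ is fatally weak. As written, the soundness claim does not follow.

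The fix keeps your structure but changes the degree: take a Ramanujan-type expander of degree $d = \poly(1/\delta_0)$, so that $\lambda \le \delta_0$ and the hitting bound gives $(2\delta_0)^t = 1/\poly(n)$. Each walk step then costs $\log d = O(\log(1/\delta_0)) = O((\log n)^{1-\eps})$ random bits rather than $O(1)$, and the total randomness is $r_0 + (t-1)\log d = O(\log n) + O((\log n)^{\eps})\cdot O((\log n)^{1-\eps}) = O(\log n)$, so the randomness budget still closes --- but this accounting, not the constant-degree one, is the calculation the corollary actually rests on. (Your remark that no union bound over proofs is needed, because soundness is a per-proof statement about a fixed accepting set of density $\le \delta_0$, is correct; the last sentence of your argument is garbled but the underlying point stands.)
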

\lref[Corollary]{cor:DFKRS} describes the previously known best result
in terms of minimizing the number of queries subject to achieving a
polynomially small error and using at most a logarithmic amount of
randomness. Whereas in \lref[Corollary]{cor:DFKRS} the number of
queries is $q=(\log n)^\eps$, our \lref[Main Theorem]{thm:main} requires
only $q=\poly\log\log n$ queries.
\subsection*{PCP Composition and dPCPs}
Like in recent improved constructions of PCPs
\cite{BenSassonGHSV2006,DinurR2006,BenSassonS2008,Dinur2007,MoshkovitzR2010b,DinurH2013}, our main theorem is obtained
via a better understanding of composition.
All known constructions of PCPs rely on proof composition. This
paradigm, introduced by Arora and Safra~\cite{AroraS1998}, is a
recursive procedure applied to PCP constructions to reduce the
alphabet size. The idea is to start with an easier task of
constructing a PCP over a very large alphabet $\Sigma$. Then,
proof composition is applied (possibly several times over) to PCPs
over the large alphabet to obtain PCPs over a smaller (even binary)
alphabet, while keeping the soundness error small.

In the regime of high soundness error (greater than 1/2), composition
is by now well understood using the notion of {\em PCPs of
  proximity}~\cite{BenSassonGHSV2006} (called {\em assignment testers}
in \cite{DinurR2006}) (see also \cite{Szegedy1999}). The idea is to
bind the PCP proof of a statement to an NP witness for it, so that the
verifier not only checks that the statement is correct but also that
the given witness is (close to) a valid one. This extension allows one
to prove a modular composition theorem, which is oblivious to the
inner makings of the PCPs being composed. This modular approach has
facilitated alternate proofs of the PCP theorem and constructions of
shorter PCPs~\cite{BenSassonGHSV2006,BenSassonS2008,Dinur2007}. However,
the notion of a PCP of proximity, or assignment tester, is not useful
for PCPs with low-soundness error. The reason is that for small
$\delta$ we are in a ``list decoding regime'', in that the PCP proof
can be simultaneously correlated with more than one valid NP witness.

The works mentioned earlier \cite{RazS1997,AroraS2003,DinurFKRS2011}
addressed this issue by using a notion of local list-decoding. This
was called a local-reader in~\cite{DinurFKRS2011} and
formalized nicely as a locally-decode-or-reject-code (LDRC) by
Moshkovitz and Raz~\cite{MoshkovitzR2010b}. Such a code allows ``local
decoding'' in that for any given string $w$ there is a list of valid
codewords $\set{c_1,\ldots,c_L}$ such that when the verifier is given
a tuple of indices $j_1,\ldots,j_k$, then but for an error probability
of $\delta$, the verifier either rejects or outputs
$(c_i)|_{j_1,\ldots,j_k}$ for some $i\in [L]$.

\paragraph{Decodable PCPs (dPCPs)} Dinur and Harsha~\cite{DinurH2013}
introduced the notion of a PCP decoder (dPCP), which extends the
earlier definitions of LDRCs and local-readers from codes to PCP
verifiers.  A PCP decoder is like a PCP verifier except that it also
gets as input an index $j$ (or a tuple of indices). The PCP decoder is
supposed to check that the proof is correct, and to also return the
$j$-th symbol of the NP witness encoded by the PCP proof.  As in
previous work, the soundness guarantee for dPCPs is that for any given
proof $\Pi$ there is a {\em short} list of valid NP witnesses
$\set{x_1,\ldots,x_L}$ such that except with probability $\delta$ the
verifier either rejects or outputs $(x_i)|_j$ for some $x_i$ in the
list.

The main advantage of dPCPs is that they allow a modular composition
theorem in the regime of small soundness error. The composition
theorem proved by Dinur and Harsha~\cite{DinurH2013} was a two-query
composition theorem, generalizing from the ingenious construction of
Moshkovitz and Raz~\cite{MoshkovitzR2010b}. The two-query requirement
is a stringent one, and in this construction it inherently causes an
exponential increase in the alphabet, so that instead of $\card\Sigma
= \poly(1/\delta)$ one gets a PCP with $\card\Sigma =
\exp(\poly(1/\delta))$.

In this work we give a different (modular) dPCP composition
theorem. Essentially, our theorem is a modular generalization of the
composition method, as done implicitly in previous
works~\cite{RazS1997,AroraS2003,DinurFKRS2011}, which uses an extra
`consistency' query but maintains the inverse polynomial relation
between $\delta$ and $\card\Sigma$.

We remark that unlike recent PCP constructions
\cite{BenSassonGHSV2006,DinurR2006,MoshkovitzR2010b,DinurH2013} which
recurse on the large (projection) query of the outer PCP, our
composition recurses on the entire test as was done originally by
Arora and Safra~\cite{AroraS1998}. This aspect of the
composition is explained  and abstracted nicely in \cite{Moshkovitz2014}.

\paragraph{Distributional Soundness}
Our main technical/conceptual contribution is a new notion of
soundness, which we refer to as {\em distributional soundness}, which
replaces the previous notion of {\em list decoding soundness} described above, and
allows us to apply a non-constant number of compositions without a
blowup in the error.

We say that a verifier has {\em distributional soundness $\delta$} if
its output is ``$\delta$-indistinguishable'' from the output of an
idealized verifier. The idealized verifier has access to a
distribution $\widetilde\Pi$ over {\em valid proofs} or $\bot$. When
it is run with random coins $R$ it samples a proof $\widetilde\Pi(R)$
from this distribution and either rejects if $\widetilde\Pi(R)=\bot$
or outputs what the actual verifier would output when given access to
$\widetilde\Pi(R)$. By $\delta$-indistinguishable, we mean that there
is a coupling between the actual verifier and the idealized verifier,
such that the probability that the verifier does not reject {\em and}
its output differs from the output of the idealized verifier, is at
most $\delta$.

The advantage of moving from list decoding soundness to distributional
soundness, is that it removes the extra factor of $L_\inn$ (the list
size) incurred in previous composition analyses. Recall that, e.g. in
the composition theorem of Dinur-Harsha~\cite{DinurH2013}, one takes an outer PCP
with soundness $\dout$ and an inner PCP decoder with soundness $\din$
and out comes a PCP with soundness $\din+L_\inn \cdot \dout$. This is
true in all (including implicit) prior composition analyses. When
making only a constant number of composition steps, this is not an
issue, but when the number $t$ of composition steps grows, the
soundness is at least $(L_\inn)^t \cdot\dout$ and this is too
expensive for the parameters we seek.  Using distributional soundness,
we prove that the composition of a PCP with soundness error $\dout$
and a dPCP with soundness error $\din$ yields a PCP with soundness
error $\dout+\din + \eta$ where $\eta$ is an error term that is
related to the distance of an underlying error correcting code, and
can be controlled easily. Thus, after $t$ composition steps the
soundness error will only be $O(t(\delta+\eta))$.

We remark that this notion of soundness, though new, is satisfied by
most of the earlier PCP constructions (at least the ones based on the
low-degree test). In particular, it can be shown that list-decoding
soundness and very good error-correcting properties of the PCP imply
distributional soundness.

\subsection*{Proof Overview}
At a high level, our main theorem is derived by adopting the recursive
structure of the construction in~\cite{DinurFKRS2011}. The
two main differences are the use of our modular composition theorem,
and the soundness analysis that relies on the notion of distributional
soundness\footnote{Looking at the construction as it is presented
  here, one may ask, why wasn't this  done back in 1999, when the
  conference version of~\cite{DinurFKRS2011} was
  published.  This construction is notoriously far from
  modular. Thus, tweaking parameters, following them throughout the
  construction, and making the necessary changes, would have been a
  daunting task. Without the modular approach it was not at all clear
  what the bottlenecks were, let alone address them.  }.

We fix a field $\F$ at the outset and use the same field throughout the construction. 
This is important for the interface between the outer and the inner
dPCPs, as it provides a convenient representation of the output of the
outer dPCP as an arithmetic circuit over $\F$, which is then the input
for the inner dPCP.

As in the construction of~\cite{DinurFKRS2011}, we take $\card\F\approx  2^{(\log n)^{1-\eps}}$
and begin by constructing PCPs over a fairly large alphabet size which
we gradually reduce via composition. The initial alphabet size is
$2^{2^{(\log n)^{1-\eps}}}$, and then it drops to $2^{2^{(\log
  n)^{1-{\bf 2}\eps}}}$ and then to $2^{2^{(\log n)^{1-{\bf
      3}\eps}}}$ and so on. After $1/\eps$ steps we make a couple of
final composition steps and end up with the desired alphabet size of
$2^{(\log n)^{1-\eps}}$, logarithmic in the initial alphabet size.

Unlike the construction in~\cite{DinurFKRS2011}, we can afford to plug
in a sub-constant value for $\eps$, and we take $\eps = c\log\log \log
n / \log\log n$ for some constant $c$ so that $2^{(\log
  n)^{1-\eps}}=2^{\log n /(\log\log n)^c} = n^{1/(\log\log n)^c}$.

The number of composition steps is $O( 1/\eps )$, resulting in a PCP
with $O(\log\log n)$ queries and soundness error $n^{1/(\log\log n)^c}$
for some constant $c<3$ (see \lref[Theorem]{thm:maincons}). Finally,
$(\log\log n)^c$ steps of (randomness-efficient) sequential repetition
yield a PCP with polynomially small error and $\poly\log\log n$
queries as stated in \lref[Main Theorem]{thm:main}. It can be shown
that the parameters obtained in \lref[Theorem]{thm:maincons} (namely,
soundness error $n^{1/(\log \log n)^{O(1)}}$) is tight given the basic Reed-Muller and Hadamard based building blocks (see
\lref[\S]{sec:optimal} for details).


\subsection*{Further Background and Motivation}

Every PCP theorem can be viewed as a statement about the local-vs.-global behaviour of proofs, in that the correctness of a proof, which is clearly a {\em global} property, can be checked by {\em local} checks, on average. The parameters of the PCP (number of queries, soundness error, alphabet size) give a quantitative measure to this local to global behavior. The sliding scale conjecture essentially says that even with a {\em constant} number of queries, this local to global phenomenon continues to hold, for all ranges of the soundness error.

Another motivation for minimizing the number of queries becomes apparent
when considering interaction with provers instead of direct access to
proofs (i.e. MIP instead of PCP). A PCP protocol can not in general be
simulated by a protocol between a verifier and a prover because the
prover might cheat by adaptively changing her answers. This can be sidestepped
by sending each query to a different prover, such that the provers are
not allowed to communicate with each other. This is the MIP model of
Ben-Or~\etal~\cite{BenOrGKW1988}. It is only natural to seek protocols using the smallest number of (non-communicating)
provers.

The importance of the sliding scale conjecture stems, in addition to
the fundamental nature of the question, from its applications to
hardness of approximation. First, it is known that every PCP theorem
can be phrased as a hardness-of-approximation for {\sc Max-CSP}: the
problem of finding an assignment that satisfies a maximal number of
constraints in a given constraint system. The soundness error
translates to the approximation factor, the alphabet of the proof is
the alphabet of the variables, and the number of queries becomes the
arity of the constraints in the CSP.

The main goal of this paper can be phrased as proving polynomial
hardness of approximation factors for CSPs with smallest possible
arity (and over an appropriately small alphabet). Our main theorem translates to the following result
\begin{theorem}\label{thm:main}
It is NP-hard to decide if a given CSP with $n$ variables and $\poly(n)$ constraints, is perfectly satisfiable or whether every assignment satisfies at most $1/\poly(n)$ fraction of the constraints. The CSP is such that each constraint has arity at most $ \poly\log\log n$ and the variables take values over an alphabet of size at most $n^{1/\log\log n^{O(1)}}$. 
\end{theorem}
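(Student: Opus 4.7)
The plan is to derive this as an essentially immediate consequence of the Main Theorem via the standard (and well known) correspondence between PCPs and \textsc{Max-CSP} hardness. I would not develop any new machinery here; the content is purely the dictionary between the two formulations.

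First, I would invoke the Main Theorem to obtain, for any $L\in \NP$ and any input $x$ of length $n$, a PCP verifier $V$ that uses $r = O(\log n)$ random coins, makes $q = (\log\log n)^{O(1)}$ queries into a proof $\Pi$ over an alphabet $\Sigma$ of size $n^{1/(\log\log n)^{O(1)}}$, has perfect completeness, and soundness error $\delta = 1/\poly(n)$. Let $N \le q \cdot 2^r = \poly(n)$ be the proof length. I would then define the CSP instance $\Phi_x$ as follows: introduce one variable $y_i$ over alphabet $\Sigma$ for each proof location $i \in [N]$, and for each random string $R \in \zo^r$ introduce one constraint $C_R$ whose scope is the tuple $I(x,R) = (i_1,\ldots,i_q)$ of indices queried by $V$ on input $x$ and randomness $R$, and whose predicate is the local predicate $\varphi_{x,R}: \Sigma^q\to\zo$ computed by $V$. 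The number of variables is $N=\poly(n)$, the number of constraints is $2^r = \poly(n)$, the arity of each constraint is $q = (\log\log n)^{O(1)} = \poly\log\log n$, and the alphabet has the desired size. The reduction is computable in polynomial time because $V$ runs in polynomial time.

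Next I would verify the gap. If $x\in L$, perfect completeness supplies a proof $\Pi$ such that $\varphi_{x,R}(\Pi|_{I(x,R)})=1$ for \emph{every} $R$; the assignment $y_i := \Pi(i)$ satisfies all $2^r$ constraints, so $\Phi_x$ is perfectly satisfiable. Conversely, if $x\notin L$, then any candidate assignment $y:[N]\to\Sigma$ is itself a legal proof, so by the soundness guarantee, $V$ accepts with probability at most $\delta=1/\poly(n)$; equivalently, the fraction of satisfied constraints is at most $1/\poly(n)$. Hence deciding which side of the gap $\Phi_x$ lies on decides membership of $x$ in $L$, giving \NP-hardness.

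There is no real obstacle in this argument; the only things to be mindful of are cosmetic: ensuring that the proof length $N$ is bounded by $\poly(n)$ (which follows from $r=O(\log n)$ and $q=\poly\log\log n$), that the predicates $\varphi_{x,R}$ are computable in polynomial time (which is part of the definition of a PCP verifier in Definition~\ref{def:PCPver}), and that the arity and alphabet bounds track those in the Main Theorem exactly. Since the Main Theorem already provides all the non-trivial content, the proof is essentially a rewriting step.
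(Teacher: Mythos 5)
Your proposal is correct and is exactly the paper's (implicit) argument: the paper proves this theorem only by invoking the standard PCP-to-\textsc{Max-CSP} dictionary (proof locations become variables, random strings become constraints, queries become arity, the proof alphabet becomes the CSP alphabet), which is precisely the translation you wrote out, with completeness and soundness of the Main Theorem giving the two sides of the gap. The only cosmetic point worth noting is that the CSP has $N=\poly(n)$ variables rather than $n$, but since $\log\log N = \Theta(\log\log n)$ all the stated parameter bounds carry over unchanged.
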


In addition to the syntactic connection to {\sc Max-CSP}, it is also
known that a proof of the sliding scale conjecture
would immediately imply polynomial factors inapproximability for {\sc
  Directed-Sparsest-Cut} and {\sc Directed-Multi-Cut}~\cite{ChuzhoyK2009}.

The results of \cite{DinurFKRS2011} were used in \cite{DinurS2004} for
proving hardness of approximation for a certain $\ell_p$ variant of
label cover. However, that work mis-quoted the main result from
\cite{DinurFKRS2011} as holding true even for a super-constant number
of queries, up to $\sqrt{\log \log n}$. In this work, we fill the gap
proving the required PCP statement. We thank Michael Elkin for
pointing this out. 

\subsection*{Further Discussion}
A possible alternate route to small soundness PCPs is via the
combination of the basic PCP theorem~\cite{AroraS1998,AroraLMSS1998}
with the parallel repetition theorem~\cite{Raz1998}. Applying $k$-fold
parallel repetition yields a two-query PCP verifier over alphabet of
size $\card\Sigma=2^{O(k)}$, that uses $O(k\log n)$ random bits, and
has soundness error $\delta= 2^{-\Omega(k)}$.

If we restrict to polynomial-size constructions, then parallel
repetition is of no help compared to \lref[Theorem]{thm:basic-seqrep}.
If we allow $k$ to be super constant, then more can be
obtained. First, it is important to realize that the soundness error
should be measured in terms of the output size, namely $N = 2^{k\log
  n}$. For $k = (\log n)^c$ a simple calculation shows $\log n = (\log
N)^{1/(c+1)}$, and hence the soundness error is $\delta(N) =
2^{-(\log n)^c}=2^{-(\log N)^{1-1/(c+1)}}$. This is no better than the
result of \cite{DinurFKRS2011} in terms of the soundness error, and in
fact, worse in terms of the instance size blow up ($N=2^{(\log
  n)^{c+1}}$ as opposed to $N= n^{O(1)}$). Even parameters similar to
our main theorem can be obtained, albeit with an almost exponential
blowup. Consider $k = \sqrt n$ for example. In this case $\log n = 2
\log \log N - \Theta(\log \log \log N)$, and so $\delta(N) =
2^{-\sqrt{n}}=N^{-1/\Theta(\log\log N)}$. From here, to get a
polynomially small error one can take $O(\log\log N)$ rounds of
(randomness efficient) sequential repetition, coming up with a result that is similar to our
\lref[Main Theorem]{thm:main} but with a huge blow up $(N=
n^{\sqrt{n}}$ as opposed to $N=n^{O(1)}$).

We remark that a natural approach towards the sliding scale conjecture
is to try and find a {\em randomness-efficient} version of parallel
repetition to match the parameters of \lref[Theorem]{thm:basic-seqrep}
but with $q=O(1)$. Unfortunately, this approach has serious
limitations~\cite{FeigeK1995} and has so-far been less successful than
the algebra-and-composition route, see also \cite{DinurM2011,
  Moshkovitz2014}.

\subsection*{Organization}
We begin with some preliminaries in \lref[\S]{sec:prelim}. We
introduce and define dPCPs and distributional soundness in
\lref[\S]{sec:defs}. Our dPCPs have $k$ provers which are analogous to
(and stronger than) PCPs that make $k$ queries. In
\lref[\S]{sec:composition}, we state and prove a modular composition
theorem for two (algebraic) dPCPs. In \lref[\S]{sec:together},
we prove the main theorem, relying on specific ``classical''
constructions of PCPs that are given in \lref[\S]{sec:components}, one
based on the Reed-Muller code and low degree test, and one based on the
quadratic version of the Hadamard code. These PCPs are the same as in
earlier
constructs~\cite{RazS1997,AroraS2003,DinurFKRS2011,MoshkovitzR2010b,DinurH2013}
except that here we prove that they have the stronger notion of small
{distributional soundness}.

\section{Preliminaries}\label{sec:prelim}

\subsection{Notation}

All {\em circuits} in this paper have fan-in 2 and fan-out 2, and we
allow only unary NOT and binary AND Boolean operations as internal
gates. The {\em size} of a Boolean circuit/predicate $\Phi$ is the
number of gates in $\Phi$.  Given a circuit/predicate
$\Phi:\set{0,1}^n\to \set{0,1}$, we denote by $\SAT(\Phi)$ the set of
satisfying assignments for $\Phi$, i.e.,
$$\SAT(\Phi) = \sett{x\in \set{0,1}^n}{\Phi(x) = 1 }.$$  We will refer to
the following $\NP$-complete language associated with circuits:
$$\CSAT = \sett{ \Phi}{\Phi \text{ is specified as a Boolean circuit
    and } \SAT(\Phi) \neq \emptyset}.$$
We will follow the following convention regarding input lengths: $n$
will refer to the length of the input to the circuit $\Phi$ (i.e.,
$\Phi:\set{0,1}^n\to\set{0,1}$) while $N$ will refer to the size of
the circuit/predicate (i.e., $\text{size}(\Phi) = N$). Thus, $N$ is
the input size to the problem $\CSAT$.

We will also refer to a similar language associated with arithmetic
circuits.  First, for some notation. Given a finite field $\F$, we
consider arithmetic circuits over $\F$ with addition ($+$) and
multiplication ($\times$) gates and constants from the field $\F$. For
a function $\Phi:\F^n \to \F$, the
size of $\Phi$ is the number of gates in the arithmetic
circuit specifying $\Phi$. We denote by $\SAT(\Phi)$ the set of all
$x$ such that $\Phi(x) =0$.

\begin{definition}[Algebraic Circuit SAT]\label{def:acsat} Given a field $\F$, the
  {\em Algebraic-Circuit-Satisfiability} problem, denoted by $\ACSAT_\F$,
  is defined as follows:
\[\ACSAT_\F = \sett{ \Phi}{\Phi \text{ is specified
    by an arithmetic circuit over } \F \text{ and } \SAT(\Phi) \neq \emptyset}.\]
As in the case of $\CSAT$, $n$ refers to the length of the input to
the function $\Phi$ (i.e., $\Phi: \F^n \to \F$), while $N$ refers to
the size of the arithmetic circuit $\Phi$.
\end{definition}

\subsection{Error Correcting Codes}
Let $E:\F^n\to \F^N$ be an error correcting code with relative
distance $1-\mu$, i.e., for every $x\neq x'$, $\Pr_{j\in
  [N]}[E(x)_j = E(x')_j ] \le \mu$. For a word $w\in \F^N$ that is not
necessarily a correct codeword, we can consider the list of all ``admissible'' codewords, i.e. codewords that have a non-negligible correlation with $w$.
We are interested in more than just a list: we want to associate with each index $j\in [N]$ an element in that list in a unique way. This will allow us to treat $w$ as a random variable $W$: for a random index $j$, the random variable $W(j)$ will output the list-element associated with the $j$th index.
\begin{definition}\label{def:agreedist}
Let $\tau>0$ be a parameter an let $w:[N]\to \F$. We define the {\em
  $\tau$-local
decoding function of $w$ with respect to the code $E$},\   $W:[N]\to \F^n\cup \set\bot$, as follows:
\begin{itemize}
\item The $\tau$-admissible words for $w$ are  $$agr_\tau(w) = \sett{ x\in \F^n}{\Pr_{j\in [N]}[E(x)_j = w_j ] \ge \tau }.$$
\item For any $j\in[N]$, if there is a unique word $x\in
  \agr_\tau(w)$ such that $E(x)_j = w_j$ we set $W(j) =
  x$. Otherwise, we set $W(j)=\bot$.
\end{itemize}
\end{definition}

\begin{claim}\label{claim:ambiguous}
Let $E:\F^n\to \F^N$ be an error correcting code with relative
distance $1-\mu$, let $w:[N]\to F$, and let  $W:[N]\to \F^n$ be its
$\tau$-local decoding function with respect to $E$. Also, suppose
that $v:[N]\to \F$ is a legal codeword, i.e., $v=E(y)$ for some $y \in
\F^n$. Then
\[
\Pr_{j\in [N]}\left[ v_j=w_j \ve W(j)\neq y  \right] \leq \tau+4\mu/\tau^2.
\]
\end{claim}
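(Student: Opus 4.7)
The plan is to split the event into two cases based on whether $y$ itself is $\tau$-admissible for $w$, i.e., whether $y \in \agr_\tau(w)$. In the first case, $y \notin \agr_\tau(w)$, which by definition means $\Pr_j[E(y)_j = w_j] < \tau$, and since $v = E(y)$ this immediately gives $\Pr_j[v_j = w_j] < \tau$ in this case. In the second case, $y \in \agr_\tau(w)$; and when additionally $v_j = w_j$, the only way to have $W(j)\neq y$ is for the decoding to fail due to ambiguity: some other admissible codeword must also land on $w_j$ at position $j$. Formally, case~(b) implies the existence of $x \in \agr_\tau(w)$ with $x\neq y$ and $E(x)_j = E(y)_j = w_j$.

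To handle case~(b), I would apply a union bound over $\agr_\tau(w) \setminus \{y\}$ together with the pairwise agreement bound: for any two distinct codewords, $\Pr_j[E(x)_j = E(y)_j]\le \mu$. Writing $L = |\agr_\tau(w)|$, this gives
\[
\Pr_j\bigl[v_j = w_j \ve W(j)\neq y \ve y \in \agr_\tau(w)\bigr] \;\le\; \sum_{\substack{x \in \agr_\tau(w) \\ x\neq y}} \Pr_j[E(x)_j = E(y)_j] \;\le\; (L-1)\mu.
\]
Combined with case~(a), the total probability is at most $\tau + (L-1)\mu$, so it remains to show $L \le 4/\tau^2$ (under a mild condition on the parameters).

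The list-size bound follows from a standard Johnson-style second-moment argument applied to $X_j = |\{x \in \agr_\tau(w) : E(x)_j = w_j\}|$. Since each $x \in \agr_\tau(w)$ satisfies $\Pr_j[E(x)_j = w_j]\ge \tau$, linearity of expectation gives $\E_j[X_j] \ge L\tau$. Expanding $\E_j[X_j^2]$ and using that distinct codewords collide at $\le \mu$ fraction of coordinates yields $\E_j[X_j^2] \le L + L(L-1)\mu$. Cauchy--Schwarz then gives $L^2\tau^2 \le L + L^2\mu$, whence $L \le 1/(\tau^2-\mu)$ whenever $\tau^2 > \mu$. Under the assumption $\mu \le 3\tau^2/4$ this yields $L \le 4/\tau^2$; when instead $\mu > 3\tau^2/4$ the target bound $\tau + 4\mu/\tau^2$ already exceeds $1$ and is vacuously true. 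Putting the pieces together yields the claimed inequality. I do not expect any serious obstacle here --- the only subtlety is the careful case split ensuring that ``$W(j)\neq y$'' in case~(b) is translated into the existence of a second admissible codeword agreeing with $w$ at position~$j$.
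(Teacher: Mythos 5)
Your proof is correct, and its skeleton matches the paper's: split on whether $y$ is $\tau$-admissible (cost $\tau$ when it is not), and when it is, charge the event to a collision at coordinate $j$ between $E(y)$ and some other admissible codeword (cost governed by the list size times $\mu$). The difference is in how you control the list size and count collisions. The paper bounds $\card{\agr_\tau(w)}$ by $2/\tau$ via inclusion--exclusion on the agreement sets $S_i$ (assuming WLOG $\tau\ge 2\sqrt{\mu}$), and then bounds the fraction of globally ``ambiguous'' coordinates by summing $\card{S_i\cap S_{i'}}$ over all pairs, giving $4\mu/\tau^2$. You instead prove the list bound $L\le 1/(\tau^2-\mu)\le 4/\tau^2$ by a Johnson-style second-moment/Cauchy--Schwarz argument, and union bound only over the pairs involving $y$, getting $(L-1)\mu\le 4\mu/\tau^2$; your handling of the degenerate regime ($\mu>3\tau^2/4$, where the claimed bound exceeds $1$) plays the same role as the paper's WLOG assumption. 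Your list bound $4/\tau^2$ is quantitatively weaker than the paper's $2/\tau$, but since it is multiplied by $\mu$ it yields exactly the same final expression; conversely, restricting attention to collisions with $y$ (rather than all ambiguous points) is slightly more direct and is what lets the weaker list bound suffice. The one step worth spelling out fully in a write-up is the translation of ``$v_j=w_j$ and $W(j)\neq y$ with $y\in\agr_\tau(w)$'' into ``some admissible $x\neq y$ has $E(x)_j=w_j$'': by \lref[Definition]{def:agreedist}, if $y$ were the unique admissible word agreeing with $w$ at $j$ then $W(j)=y$, so non-uniqueness is forced --- you noted this, and it is indeed the only subtlety.
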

\begin{proof}
  Without loss of generality, we may assume that $\tau \ge 2\sqrt \mu
  $. Let us write
  $\agr_\tau(w) =\set{x_1,x_2,\ldots }$ and let $S_i = \sett{j\in[N]}{w_j = (E(x_i))_j}$. We say that $j\in[N]$ is
  an ambiguous point for $w$ if $j\in S_i \cap S_{i'}$ for some
  distinct $i,i'$. We first bound the fraction of ambiguous points for
  $w$.

  By inclusion-exclusion,
  for any $\ell \le \card{\agr_\tau(w)}$
  \[ N \ge \card {\cup_{i=1}^\ell S_i} \ge \sum_{i=1}^\ell \card{S_i}
  - \sum_{i\neq {i'}\le \ell} \card{S_{i}\cap S_{i'}} ) \ge \ell \tau N -
  \binom{\ell }2 \mu N, \] where we have used that by definition
  $\card {S_i} \ge \tau N$ and by the distance of the code $\card {S_i
    \cap S_{i'}}\le \mu N$. This implies that for every
  $\ell\le\card{\agr_\tau(w)}$,
\[
1 \ge \ell \tau - \binom{\ell}2 \mu \ge  \ell \tau - \ell^2 \tau^2 /8
\]
which clearly fails if $\ell = 2/\tau $, so $\card{\agr_\tau(w)}\le 2/\tau$ and so
\begin{equation}
  \label{eq:1}
\Pr_{j\in [N]}[j\text{ is ambiguous}]\le  \frac 1 N\sum_{i\neq
  {i'}}\card{S_i\cap S_{i'}} \le 4\mu /\tau^2
\end{equation}

Now the event $\left[ v_j=w_j \ve W(j)\neq y  \right]$ can occur
either if $j$ is an ambiguous point for $w$, or if $v$ is not
$\tau$-admissible with respect to $w$. But the former happens with
probability at most $4\mu /\tau^2$ by \eqref{eq:1}, and the latter
happens with probability at most $\tau$, as otherwise $v$ would have
been $\tau$-admissible.
\end{proof}

\begin{remark}\label{rem:agreement}
We minimize the quantity $\tau + 4\mu/\tau^2$, by setting $\tau =
(4\mu)^{1/3}$. We refer to this minimum as the agreement parameter $\eta$
of the
code $E$. Thus, $\eta = 2\tau = 2(4\mu)^{1/3}$.
\end{remark}

\section{PCPs with distributional soundness}\label{sec:defs}

\subsection{Standard PCPs}

We begin by recalling the definition of a standard $k$-prover
projection PCP verifier.

\begin{definition}[PCP verifier]\label{def:pcpverifier}\
\begin{itemize}
\item A {\em $k$-prover projection PCP verifier} over alphabet $\F$
is a probabilistic-time algorithm $V$
that on input $\Phi$, a circuit of size $N$ and a random input $R$  of
$r(N)$ random bits generates a tuple $(q, \phi, g)$ where $q = (u,v_1\dots,v_{k-1})$ is a vector of $k$
queries, $\phi: \F^{m}\to \set{0,1}$ is a predicate, and $g =
(g_1,\dots,g_{k-1})$ is a list of $k-1$ functions $g_i : \F^{m} \to \F$ such that the size of the tuple $(\phi, g)$ is at most $s(N)$.
\item We write $(q,\phi,g) = V(\Phi; R)$ to denote the
  query-predicate-function tuple output by the verifier  $V$ on input
  $\Phi$ and random input $R$.
\item It is good to keep in mind the $k=2$ case as it captures all of the difficulty. In this case the output of $V$ is a {\em label cover} instance, when enumerating over all of $V$'s random inputs. (The query pairs specify edges $(u,v)$ and $(\varphi,g)$ specify which pairs of labels are acceptable).
\item We think of $V$ as a probabilistic oracle machine that on input
  $(\Phi;R)$ queries $k$ provers $\Pi=(A, B_1, \dots, B_{k-1})$ at positions
  $q = (u,v_1,\dots,v_{k-1})$ respectively to receive the answers
  $\Pi|_q := $ $(A(u),B_1(v_1),\dots, B_{k-1}(v_{k-1}))$ $\in \F^m \times
  \F^{k-1}$, and accepts iff the following checks pass: $\phi(A(u))
  =1$ and $g_i(A(u)) = B_i(v_i)$ for all $i \in \{1,\dots, k-1\}$.
\item Given $k$ provers $\Pi=(A,B_1,\dots, B_{k-1})$, we will sometimes
  collectively refer to them as the ``proof $\Pi$'' . Furthermore, we refer to $A$ as the large
  prover and the $B_i$'s as the projection provers. We call $\Pi|_q$
  the local view of the proof $\Pi$ on queries $q$ and denote by
  $V^{\Pi}(\Phi; R)$ the output of the verifier $V$ on input $(\Phi; R)$
  when interacting with the $k$ provers $\Pi$. Thus, $V^{\Pi}(\Phi;R) =
  \acc$ if the checks pass and is $\rej$ otherwise.
\item We call $N$ the {\em input size}, $k$ the {\em number of
    provers}, $r(N)$ the {\em randomness complexity}, and $s(N)$ the
  {\em answer size} of the verifier $V$.
\end{itemize}
\end{definition}

\begin{definition}[standard PCPs]\label{def:pcpsound} For a function $\delta: \integers
  \to [0,1]$, a $k$-prover projection PCP verifier $V$ is a {\em
    $k$-prover probabilistically checkable proof system} for $\CSAT$
  with soundness error $\delta$ if the following completeness and
  soundness properties hold for every circuit $\Phi$:
\begin{description}
\item[Completeness:] If $x \in \SAT(\Phi)$, then there exist $k$ provers $\Pi =
  (A,B_1,\dots, B_{k-1})$ that cause the verifier $V$ to accept with
  probability 1. Formally, $$\exists \Pi = (A, B_1,\dots, B_{k-1}),
  \qquad\Pr_{R}\left[V^{\Pi}(\Phi;R) = \acc \right] =1.$$ In this
  case, we say that $\Pi$ is a {\em valid proof} for the statement $x\in
\SAT(\Phi)$.
\item[Soundness:] If $\Phi \notin \CSAT$ (i.e, $\SAT(\Phi) =
  \emptyset$), then for every $k$ provers $\Pi =
  (A,B_1,\dots,B_{k-1})$, the verifier $V$ accepts $\Phi$ with
  probability at most $\delta(N)$. Formally, $$\forall \Pi =
  (A,B_1,\dots,B_{k-1}),\qquad \Pr_{R}\left[ V^\Pi(\Phi;R) = \acc
  \right] \leq \delta(N).$$
\end{description}
We then say that $\CSAT$ has a $k$-prover projective PCP with soundness
error $\delta$.
\end{definition}

\subsection{Distributional Soundness}

We now present {\em distributional soundness}, a strengthening of the standard PCP soundness
condition that we find to be very natural. Informally, distributional soundness means that the event of the verifier accepting is roughly the same as the event of the local view of the verifier being consistent with a globally consistent proof, up to "the soundness error". I.e.,
\[ \Pr [\hbox{ accept }] = \Pr[\hbox{ accept ~~and~~ the local view agrees with a correct proof }]\pm \delta .
\]
Thus, the local acceptance of the verifier is "fully explained" in terms of global consistency.

A little more formally, every purported proof $\Pi$ (valid or not) can be coupled with an ``idealized'' distribution $\ViPi(R)$ over valid
proofs and $\bot$ such that the behavior of the verifier on random
string $R$ when interacting with the  proof $\Pi$ is  identical
to the corresponding behavior when interacting with the ``idealized'' proof
$\ViPi(R)$ upto an error of $\delta$, which we call the distributional
soundness error. Formally,

\begin{definition}[Distributional Soundness for $k$-prover PCPs]\label{def:pcpdistsound} For a function $\delta: \integers
  \to [0,1]$, a $k$-prover projection PCP verifier $V$ for $\CSAT$
  is said to have {\em distributional soundness error} $\delta$ if for
  every circuit $\Phi$ and any set of provers $\Pi =
  (A,B_1,\dots,B_{k-1})$ there is an `idealized pair' of functions $\Vix(R)$ and
  $\ViPi(R)$ defined for every random string $R$ such that the following holds.
\begin{itemize}
\item For every random string $R$,
  $\ViPi(R)$ is either a valid proof for the statement $\Vix(R)\in
  SAT(\Phi)$, or $\ViPi(R)=\bot$.
\item With probability at least
  $1-\delta$ over the choice of the random string $R$, the local view
  of the provers $\Pi$ completely agrees with the local view of the
  provers $\ViPi(R)$ or is a rejecting local view. In other words,
$$\Pr_R \left[ V^\Pi(\Phi; R) = \bot
  \text{ or } \Pi|_q = \ViPi(R)|_q \right] \geq 1-\delta,$$
where $q$ is the query vector generated by the PCP verifier $V$ on
input $(\Phi;R)$.

When the local views agree, i.e. $\Pi|_q = \ViPi(R)|_q $, we say that $\ViPi$ is {\em successful} (in explaining the success of $\Pi$).
\end{itemize}
\end{definition}
The advantage of distributional soundness is that it explains the
acceptance probability of every proof $\Pi$, valid or otherwise, in
the following sense. Suppose a proof $\Pi$ is accepted with probability
$p$. I.e., $p$ fraction of the local views $\Pi|_q$ are
``accepting''. Then, it must be the case that but for an error
probability of $\delta$, each of these accepting views are projections
of (possibly different) valid proofs. It is an easy consequence of
this, that distributional soundness implies (standard) soundness.
\begin{proposition}\label{prop:dpcptopcp}
  If CSAT\ has a $k$-prover PCP with distributional soundness error
  $\delta$, then \CSAT\ has a $k$-prover PCP with (standard) soundness
  error $\delta$. Furthermore, all other parameters (randomness,
  answer size, alphabet, perfect completeness) are identical.
\end{proposition}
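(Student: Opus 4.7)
The plan is to directly invoke the distributional soundness definition on an unsatisfiable instance and observe that the ``idealized'' proof is forced to always equal $\bot$, which in turn forces the real verifier to reject with high probability. The key observation is that distributional soundness is a priori a stronger condition than standard soundness in that it explains \emph{every} accepting local view in terms of a valid proof; when no valid proof exists, every accepting local view must have arisen from the $\delta$-error slack.

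More precisely, let $V$ be the $k$-prover PCP with distributional soundness $\delta$. I would use $V$ itself as the standard PCP verifier, so that the randomness complexity, answer size, alphabet, and perfect completeness are inherited verbatim from $V$'s completeness property in \lref[Definition]{def:pcpdistsound} (which matches the completeness requirement of \lref[Definition]{def:pcpsound}). It therefore suffices to verify the (standard) soundness condition.

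To establish soundness, fix $\Phi \notin \CSAT$ and any purported proof $\Pi = (A, B_1,\ldots,B_{k-1})$. By distributional soundness, there exist idealized functions $\Vix(R), \ViPi(R)$ such that for every $R$, $\ViPi(R)$ is either $\bot$ or a valid proof for $\Vix(R) \in \SAT(\Phi)$. Since $\SAT(\Phi) = \emptyset$, no valid proofs exist, so $\ViPi(R) = \bot$ for every $R$. Consequently, the event $\Pi|_q = \ViPi(R)|_q$ never occurs (there is no local view to match), and the distributional soundness guarantee collapses to
\[
\Pr_R\left[V^\Pi(\Phi;R) = \rej\right] \geq 1 - \delta,
\]
which is exactly the standard soundness condition $\Pr_R[V^\Pi(\Phi;R) = \acc] \leq \delta$.

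There is no real obstacle here; the only subtle point is the convention that when $\ViPi(R) = \bot$ the equality $\Pi|_q = \ViPi(R)|_q$ is treated as failing (it has no local view to agree with), which is the reading implicit in the definition above. Modulo this convention, the proof is a one-line consequence of the definitions.
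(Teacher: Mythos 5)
Your proof is correct and is essentially the paper's argument, just stated in the direct rather than contrapositive form: the paper argues that acceptance probability exceeding $\delta$ forces some accepting local view to agree with a non-$\bot$ idealized proof, hence a valid proof exists and $\Phi\in\CSAT$, whereas you fix $\Phi\notin\CSAT$, note $\ViPi(R)=\bot$ for all $R$, and conclude rejection with probability at least $1-\delta$. The convention you flag (agreement with $\bot$ counts as failing) is indeed the intended reading of the definition, so there is no gap.
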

\begin{proof}
  Suppose there exists a circuit $\Phi$ and a proof $\Pi$
such that $\Pr_R\left[V^\Pi(\Phi;R) = \acc\right] > \delta$. Then, by the distributional soundness
  property it follows that there exists at least one local accepting
  view which is a projection of a valid proof. In particular, there
  exists a valid proof which implies $\Phi \in \CSAT$.
\end{proof}

\subsection{PCP decoders}

We now present a variant of PCP verifiers, called PCP decoders,
introduced by Dinur and Harsha~\cite{DinurH2013}.  PCP decoders, as
the name suggests, have the additional property that they not only
{\em locally check} the PCP proof $\Pi$, but can also {\em locally decode}
symbols of an encoding of the original NP witness from the PCP
proof $\Pi$. PCP decoders are implicit in many previous constructions
of PCPs with small soundness error and were first explicitly defined under the name of local-readers by
Dinur~\etal~\cite{DinurFKRS2011}, as locally-decode-or-reject-codes (LDRC) by Moshkovitz and
Raz~\cite{MoshkovitzR2010b} and as decodable PCPs by Dinur and
Harsha~\cite{DinurH2013}.
As in the case of
PCP verifiers, our PCP decoders will be projection PCP decoders.
\newpage

\begin{definition}[PCP decoder]\label{def:pcpdecoder} \
\begin{itemize}
\item A {\em $k$-prover $l$-answer projection PCP decoder} over
  alphabet $\F$ and encoding length $t$ is a probabilistic-time
  algorithm $\D$ that on input $(\Phi, F)$ of size $N$, a random input
  string $R$ of $r(N)$ random bits and an additional input index $j \in [t]$,
  where $\Phi: \F^{n} \to \set{0,1}$ is a predicate, and
  $F=(F_1,\dots,F_{l-1})$ a list of $l-1$ functions $F_i:\F^n \to \F$,
  generates a tuple $(q, \phi, g,\func)$ where $q =
  (u,v_1\dots,v_{k-1})$ is a vector of $k$ queries, $\phi: \F^{m}\to
  \set{0,1}$ is a predicate, $g = (g_1,\dots,g_{k-1})$ is a list of
  $k-1$ functions $g_i : \F^{m} \to \F$ and $\func = (\func_0,\dots,
  \func_{l-1})$ is a list of $l$ functions $\func_i : \F^m \to \F$ such
  that the size of the tuple $(\phi, g, \func)$ is at most $s(N)$.
\item We write $(q,\phi,g,\func) = \D(\Phi, F; R,j)$ to denote the
  query-predicate-functions tuple output by the decoder $\D$ on input pair
  $(\Phi, F)$, random input $R$ and input index $j$.

\item We think of $\D$ as a probabilistic oracle machine that on input
  $(\Phi,F;R,j)$ queries $k$ provers  $\Pi = (A, B_1, \dots,
  B_{k-1})$ at positions $q = (u,v_1,\dots,v_{k-1})$ respectively to
  receive the answers $\Pi|_q := $\\ $(A(u),B_1(v_1),\dots,
  B_{k-1}(v_{k-1}))$ $\in
  \F^m \times \F^{k-1}$, then checks if $\phi(A(u)) = 1$ and $g_i(A(u))
  = B_i(v_i)$ for all $i \in \{1,\dots, k-1\}$ and if these tests pass
  outputs the $l$-tuple
  $(\func_0(A(u)),\func_1(A(u)),\dots,\func_{l-1}(A(u))) \in \F^{l+1}$
  and otherwise outputs $\bot$.
\item Given $k$ provers $\Pi = (A, B_1,\dots,B_{k-1})$,  we will sometimes
  collectively refer to them as the ``proof $\Pi$'' . Furthermore, we refer to $A$ as the large
  prover and the $B_i$'s as the projection provers. We call $\Pi|_q$ the local
    view of the provers $\Pi$ on queries $q$ and denote by
  $\D^{\Pi}(\Phi,F; R,j)$ the output of the decoder $\D$ on input
  $(\Phi, F; R,j)$ when interacting with the $k$ provers $\Pi$. Note that the
  output is an element of $\F^{l+1} \union \set{\bot}$.
\item We call $N$ the {\em input size}, $k$ the {\em number of
    provers}, $l$ the {\em number of answers}, $r(N)$ the {\em
    randomness complexity}, and $s(N)$ the {\em answer size} of the
  decoder $\D$.
\end{itemize}
\end{definition}

We now equip the above defined PCP decoders with the new notion of
soundness, distributional
soundness.
%
%
We find it convenient (and sufficient) to define decodable PCPs only
for predicates and function tuples which have an algebraic structure
over the underlying alphabet, which is the field $\F$. In other words,
both the input tuple $(\Phi,F)$ and output tuple $(\phi,g,\func)$ have
the property that the predicates $\Phi, \phi$ and the functions $F,
\func, g$ are specified as arithmetic circuits over $\F$.  For the
above reasons, we define dPCPs for $\ACSAT$ (see \lref[Definition]{def:acsat}).

\begin{definition}[decodable PCPs with distributional soundness]\label{def:dpcpdistsound}For
  $\delta \in (0,1)$ and a code $E:\F^n\to \F^t$, a $k$-prover $l$-answer projection PCP decoder $\D$ is a {\em
    $k$-prover $l$-answer decodable probabilistically checkable proof
    system} for $\ACSAT_\F$ with respect to encoding $E$
  with distributional soundness error $\delta$ if the following
  properties hold for every input pair $(\Phi,F)$:
\begin{description}
\item[Perfect Completeness:] For every $x \in \SAT(\Phi)$, there exist
  $k$ provers
$\Pi = (A, B_1,\dots, B_{k-1})$ such that the PCP decoder $\D$ when
interacting with provers $\Pi$ outputs
$(E(x)_j, F_1(x),\dots, F_{l-1}(x))$ for every random input $R$ and index $j$. I.e.,
$$\Pr_{R,j}
\left[\D^\Pi(\Phi,F;R,j)=(E(x)_j, F_1(x), \dots, F_{l-1}(x))\right] =
1.$$
In other words, the decoder $\D$ on input $(\Phi, F; R,j)$ outputs the
$j$-th symbol of the encoding $E(x)$ and the tuple $F$ evaluated at
$x$.  In this case, we say that $\Pi$ is a {\em valid proof} for the statement $x\in
\SAT(\Phi)$.
\item[Distributional Soundness:] For any set of provers $\Pi =
  (A,B_1,\dots,B_{k-1})$ there exists an idealized pair of functions $\Vix(R)$ and
  $\ViPi(R)$ defined for every random string $R$ such that the following holds.
\begin{itemize}
\item For every random string $R$,
  $\ViPi(R)$ is either a valid proof for the statement $\Vix(R)\in
  SAT(\Phi)$, or $\ViPi(R)=\bot$.
\item For every $j\in\set{1,\ldots,t}$, with probability at least
  $1-\delta$ over the choice of the random string $R$, the local view
  of the provers $\Pi$ completely agrees with the local view of the
  provers $\ViPi(R)$ or is a rejecting local view. In other words,
$$\forall j \in [t],\qquad \Pr_R \left[ \D^\Pi(\Phi,F; R,j) = \bot
  \text{ or } \Pi|_q = \ViPi(R)|_q \right] \geq 1-\delta,$$
where $q$ is the query vector generated by the PCP decoder $\D$ on
input $(\Phi, F;R,j)$.

When the local views agree, i.e. $\Pi|_q = \ViPi(R)|_q $, we say that $\ViPi$ is {\em successful} (in explaining the success of $\Pi$). In this case we have that $\D^\Pi(\Phi, F;R,j)
= (E(\Vix(R))_j,F(\Vix(R)))$.
\end{itemize}
\end{description}
We then say that $\D$ is a $k$-prover $l$-answer PCP decoder for
$\ACSAT$ with respect to encoding $E$ with perfect completeness and distributional soundness
error $\delta$.
\end{definition}


\begin{remark}\label{rem:uniform}The above definition is a {\em non-uniform} one in the
  sense that it is defined for a particular choice of input lengths
  $n$, $N$, size of field $\F$ and encoding $E:\F^n \to \F^t$. A {\em
    uniform} version of the above definition can be obtained as
  follows: there exists a polynomial time uniform procedure that on input $n, N$ (both
  in unary), the field $\F$ (specified by a prime number and
  an irreducible polynomial) and the encoding $E$ (specified by the
  generator matrix) outputs the PCP decoder algorithm. We note that our
  construction satisfies this stronger uniform property.
\end{remark}

As in previous works~\cite{MoshkovitzR2010b,DinurH2013}, dPCPs imply
PCPs with similar parameters
\begin{proposition}\label{prop:dpcptopcp}
If \ACSAT\ has a $k$-prover dPCP with distributional soundness error
$\delta$, then \ACSAT\  has a $k$-prover PCP with distributional soundness
error $\delta$. Furthermore, all other parameters (randomness, answer
size, alphabet, perfect completeness) are identical.
\end{proposition}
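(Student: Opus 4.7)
The plan is to derive the PCP verifier $V$ directly from the dPCP decoder $\D$ by fixing the decoding index and discarding the decoding-output component of the decoder's output. Concretely, given a $k$-prover $l$-answer dPCP decoder $\D$ for $\ACSAT_\F$ with respect to an encoding $E:\F^n\to\F^t$ with distributional soundness error $\delta$, I would define $V$ as follows: on input $\Phi$ and random string $R$, first fix some canonical index $j_0\in[t]$ (say $j_0=1$) and the empty auxiliary function tuple $F=\emptyset$ (corresponding to $l=1$), then simulate $\D(\Phi,F;R,j_0)$ to obtain its tuple $(q,\phi,g,\func)$. The verifier $V$ simply outputs $(q,\phi,g)$, discarding the decoding functions $\func$. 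Since the acceptance predicate of the PCP verifier---checking $\phi(A(u))=1$ and $g_i(A(u))=B_i(v_i)$ for all $i$---is precisely the condition under which the decoder produces a non-$\bot$ output, we have the identity $V^\Pi(\Phi;R)=\acc$ iff $\D^\Pi(\Phi,\emptyset;R,j_0)\neq\bot$.

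Next, I would verify that all claimed parameters are preserved. The randomness complexity, answer size, alphabet, and number of queries/provers of $V$ coincide with those of $\D$ because $V$ performs the same simulation (for a single fixed $j_0$ and no auxiliary functions) and merely drops $\func$, whose size contribution is absorbed into the decoder's answer-size budget. Perfect completeness transfers immediately: if $x\in\SAT(\Phi)$ and $\Pi$ is the valid proof guaranteed by the perfect completeness of $\D$, then $\D^\Pi(\Phi,\emptyset;R,j_0)\neq\bot$ for every $R$, so $V^\Pi(\Phi;R)=\acc$ for every $R$.

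For the distributional soundness, I would take the idealized pair $(\Vix,\ViPi)$ produced for $\Pi$ by the dPCP and use exactly the same pair for the derived PCP verifier. The definition of dPCP distributional soundness says that for \emph{every} $j\in[t]$, in particular for $j=j_0$,
\[
\Pr_R\left[\D^\Pi(\Phi,\emptyset;R,j_0)=\bot \text{ or } \Pi|_q=\ViPi(R)|_q\right]\ge 1-\delta.
\]
Substituting the identity $V^\Pi(\Phi;R)=\bot \Leftrightarrow \D^\Pi(\Phi,\emptyset;R,j_0)=\bot$, this is precisely the distributional soundness condition for $V$ with the same error $\delta$.

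There is no real obstacle to this argument; the entire content is that the dPCP definition is strictly stronger than the PCP definition and carries the same underlying verifier structure. The only mild care required is making sure that the empty choice $F=\emptyset$ (or $l=1$) is permitted by the dPCP model---which it is, since in that case the promised output tuple is simply $(E(x)_{j_0})$ and the perfect completeness clause remains meaningful---and that fixing a single $j_0$ does not weaken the soundness guarantee, which is immediate from the quantification ``$\forall j\in[t]$'' in \lref[Definition]{def:dpcpdistsound}.
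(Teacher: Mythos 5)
Your proof is correct and is essentially the argument the paper intends (the paper states this proposition without an explicit proof, treating it as immediate from the definitions): run the decoder on a fixed decoding index, discard the decoding functions $\func$, observe that accepting is exactly the event $\D^\Pi \neq \bot$, and reuse the same idealized pair $(\Vix,\ViPi)$, whose guarantee holds for every $j$ and hence for the fixed $j_0$. The only cosmetic point is that if the given dPCP has $l>1$ you should instantiate $F$ with a fixed trivial tuple (e.g.\ all-zero arithmetic circuits) rather than an empty one; nothing else changes.
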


We conclude this section highlighting the differences/similarities
between the above notion of PCP decoders/dPCPs with that of Dinur and
Harsha~\cite{DinurH2013} besides the obvious difference in the soundness criterion.
\begin{remark}\

\begin{itemize}
\item The above definition of PCP decoders is a
generalization of the corresponding definition of Dinur and
Harsha~\cite{DinurH2013} to the multi-prover ($k>2$) setting. Since our PCP
verifiers are multi-prover verifiers and not just 2-prover verifiers, so
are our PCP decoders. Thus, in our notation, the PCP decoders of
\cite{DinurH2013} are $2$-prover $1$-answer projection PCP decoders.
\item The above defined PCP decoders locally decode symbols of some
  pre-specified encoding $E$ of the NP-witness. The PCP decoders of
  Dinur and Harsha~\cite{DinurH2013} is a special case of this when the encoding $E$
  is the identity encoding. However as we will see in
  the next section, it will be convenient to work with encodings which
  have good distance. In particular, the dPCP composition (considered
  in this paper) requires the encoding of the ``inner'' PCP decoder to
  have good distance.
\end{itemize}
\end{remark}

\section{Composition}\label{sec:composition}

In this section, we describe how to compose two PCP
decoders. Informally speaking, an ``outer'' PCP decoder $\D_\outr$ can
be composed with an ``inner'' PCP decoder $\D_\inn$ if the answer size of the
outer PCP decoder matches the input size of the inner PCP decoder and
the number of answers of the inner PCP decoder is the sum of the
number of answers of the outer PCP decoder and the number of provers of
the outer PCP decoder.

We begin with an informal description of the composition procedure. It might be useful to read this description while looking at \lref[Figure]{fig:comp}, in which there are three dPCPs: the inner, the outer, and the composed. We depict each dPCP as a bi-partite (or 4-partite) label-cover-like graph whose vertices correspond to proof locations, and whose (hyper-)edges correspond to local views of the PCP decoder.
\begin{figure*}[Composition]
\begin{center}\includegraphics[width=0.9\textwidth]{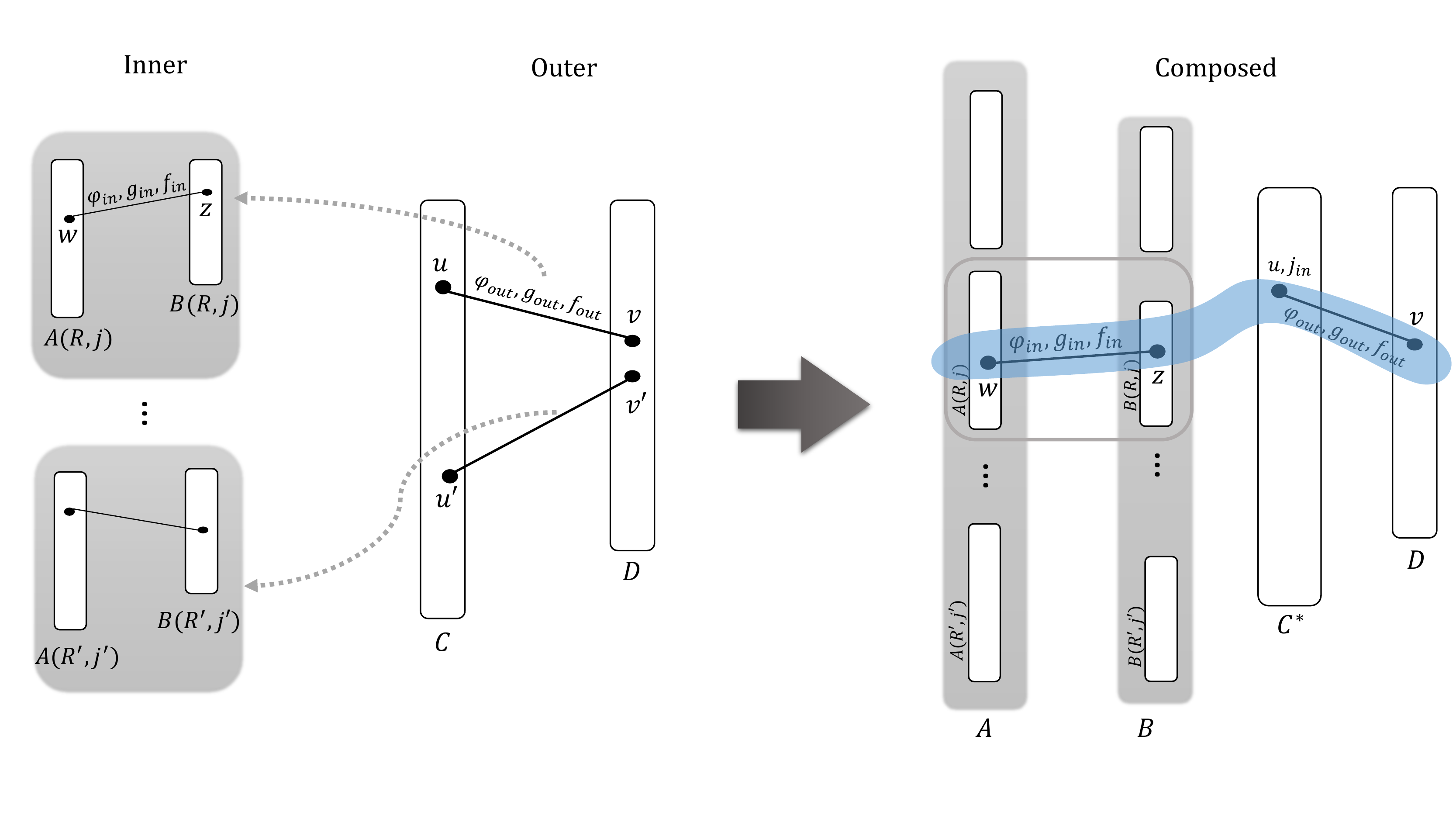}
\caption{Composition of two 2-prover dPCPs $\D_\outr$ and $\D_\inn$ to
  yield composed dPCP $\D_\comp$. Note that $\D_\inn,\D_\outr$ make two queries
  each, and $\D_\comp$ makes four queries: $w,z,(u,j_\inn),v$ to
  $A,B,C^*,D$ respectively.}
\end{center}\label{fig:comp}
\end{figure*}
The main goal of composition is to reduce the answer
size of the outer PCP decoder. By this we are referring to the answer size of the large prover; as it is always possible to reduce the answer size of projection provers at negligible cost. For simplicity, let us assume that each of the inner and the outer PCP decoders use only two provers. The inner PCP decoder $\D_\inn$ interacts with provers $A$ and $B$, and the outer PCP decoder $\D_\outr$ interacts with provers $C$ and $D$. The composed PCP decoder $\D_\comp$
works as follows: On input $\left(\Phi,F\right)$, $\D_\comp$ simulates $\D_\outr$ to obtain the tuple $\left(q_\outr, \phi_\outr, g_\outr,\func_\outr\right)$. Letting $q_\outr=(u,v)$ we picture this as an edge in the bipartite graph of the outer dPCP, and we label this edge with $\left(\phi_\outr, g_\outr, \func_\outr\right)$.
In its normal running
$\D_\outr$ generates queries $q_\outr=(u,v)$ and queries $C$ on $u$ and $D$ on $v$. It then checks that $\varphi_\outr(C(u))=0$ and $g_\outr(C(u)) = D(v)$ and if so it outputs $f_\outr(C(u))$.

However, the answer $C(u)$
is too large for $\D_\comp$, and we would like to use the inner PCP decoder $\D_\inn$ to replace
querying $C$ directly, reducing the
answer size  at the cost of a few extra queries. For this
purpose, the composed PCP decoder $\D_\comp$ now simulates the inner
PCP decoder $\D_\inn$ on input $\left(\phi_\outr,\left( g_\outr,\func_\outr\right)\right)$ to
generate the tuple $\left(q_\inn,\phi_\inn,g_\inn,\func_\inn\right)$. The
composed PCP decoder $\D_\comp$ then queries the inner provers
$A,B$ on queries $q_\inn=(w,z)$ to obtain the answers $\alpha = A(w)$ and $\beta = B(z)$. It then performs the
projection tests $g_\inn$ of the inner PCP decoder $\D_\inn$ and
produces its output $\func_\inn\left(\alpha\right)$. These answers are then used
to both perform the projection test of the outer PCP decoder as well
as produce the required output of the outer PCP decoder.

As usual in composition, we need to enforce consistency between the
different invocations of $\D_\inn$. The input for $\D_\inn$, namely $\left(q_\inn,\phi_\inn,g_\inn,\func_\inn\right)$, is generated using $\D_\outr$'s randomness, namely $R$ and $j$. The provers $A$ and $B$ must be told this input because they need to know what they are supposed to prove. Thus $A$ and $B$ are actually aggregates of prover-pairs $A(R,j),B(R,j)$ ranging over all possible $R,j$. There is a possibility that they could ``cheat'' by outputting a different answer for the same outer question, depending on $R$ and $j$. In particular, think of two outer query pairs $(u,v_1)$ and $(u,v_2)$ generated by two different random strings $R_1,j_1$ and $R_2,j_2$. We need to ensure that both invocations of $\D_\inn$ are consistent with the same answer $C(u)$.

We address this issue using the decoding feature of the inner PCP decoder $\D_\inn$. We replace the outer prover $C$ by a prover $C^*$, which we call the consistency prover. This prover is supposed to hold an encoding, via $E_\inn$, of the outer prover $C$.
The composed PCP decoder $\D_\comp$ expects
the inner PCP decoder $\D_\inn$ to decode a random symbol in this
encoding (i.e., in
$E_\inn\left(C(u)\right)$). This decoded value is then checked
against the consistency prover $C^*$, which unlike the inner provers
is not informed of the outer randomness $R$.
In all, the queries of $\D_\comp$ are $w,z,(u,j_\inn),v$ to $A,B,C^*,D$ respectively.

This additional consistency query
helps us get around the above mentioned issue at a small additional
cost of $\eta_\inn$ in the soundness error, where $\eta_\inn$ is
the agreement parameter of the encoding $E_\inn$ (see
\lref[Remark]{rem:agreement}).

It can be shown that this consistency
query ensures that the distributional soundness error of the composed
decoder is at most the sum of the distributional soundness errors of
the outer and inner PCP decoders and the agreement parameter of the
encoding $E_\inn$. Previous soundness analyses using list-decoding
soundness typically involved a $L_\inn$-fold multiplicative blowup in the soundness
error $\delta_\outr$ of the outer PCP decoder (i.e., $\delta_\comp
\geq L_\inn \cdot \delta_\outr$) where $L_\inn$ is the list-size of
the inner PCP decoder. Distributional soundness has the advantage of
getting rid of this $L_\inn$-fold blowup at the cost an additional
$\eta_\inn$ additive error.

The above description easily generalizes to $k>2$ by replacing $B$ by $B_1,\ldots,B_{k_\inn-1}$ and $D$ by $D_1,\ldots,D_{k_\outr-1}$.

As in the case of the definition of decodable PCPs, we find it
sufficient to describe composition of algebraic dPCPs and not general
dPCPs.

\begin{theorem}[Composition Theorem]\label{thm:comp} Let $\F$ be a
  finite field. Suppose that
  $N_\outr,N_\inn,r_\outr,r_\inn,s_\outr,$ $s_\inn,n_\outr,n_\inn,$  $t_\outr,t_\inn,
  k_\outr,k_\inn, l_\outr,l_\inn \in \integers$, and
  $\delta_\outr,\delta_\inn,\eta_\inn \in [0,1]$ are such that
\begin{itemize}
\item \ACSAT\ has a $k_\outr$-prover $l_\outr$-answer decodable
  PCP $\D_\outr$ with respect to encoding $E_\outr:\F^{n_\outr} \to \F^{t_\outr}$
with randomness complexity $r_\outr$,
answer size $s_\outr$,
and distributional soundness error $\delta_\outr$ on inputs $\Phi$ of
size $N_\outr$,
\item $\ACSAT_\F$  has a $k_\inn$-prover $l_\inn$-answer decodable PCP
  $\D_\inn$ with respect to encoding $E_\inn:\F^{n_\inn}\to \F^{t_\inn}$
with randomness complexity $r_\inn$, answer size $s_\inn$, and
distributional soundness soundness error $\delta_\inn$ on inputs
$\phi$ of
size $N_\inn$
\item $s_\outr \leq n_\inn \leq N_\inn$,
\item $l_\inn = k_\outr + l_\outr$,
\item the inner encoding $E_\inn$ has agreement parameter $\eta_\inn$
\end{itemize}
Then, \ACSAT\ has a $k_\outr+k_\inn$-prover
$l_\outr$-answer dPCP, denoted $\D_\comp = \D_\outr\twocomp \D_\inn$, with respect to encoding
$E_\outr$ on inputs $\Phi$ of size $N$ with
\begin{itemize}
\item randomness complexity $r_\outr+r_\inn+ \log_2( t_\inn)$,
\item answer size $s_\inn$, and
\item distributional soundness error $\delta_\outr+ \delta_\inn+\eta_\inn.$
\end{itemize}
Furthermore, there exists a universal algorithm with black-box access
to $\D_\outr$ and $\D_\inn$ that can perform the actions of $\D_\comp$
(i.e. evaluating $\D_\comp\left(\Phi, F;R,j\right)$).
On inputs of size $N$, this algorithm runs in
time $N^c$ for a universal constant $c$, with one call to $\D_\outr$ on an input of size
$N$ and one call to $\D_\inn$ on an input of size $s_\outr$.
\end{theorem}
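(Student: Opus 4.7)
The plan is to proceed in three stages: formally describe $\D_\comp$, verify perfect completeness and the randomness/size bounds (both routine), and prove distributional soundness, which is the main obstacle.

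I would first describe $\D_\comp$ explicitly. On composed randomness $R_\comp = (R_\outr, R_\inn, j_\inn)$ and decoding index $j_\comp \in [t_\outr]$, it simulates $\D_\outr$ on $(\Phi, F; R_\outr, j_\comp)$ to produce outer queries $q_\outr = (u, v_1, \dots, v_{k_\outr - 1})$ and outer tuple $(\phi_\outr, g_\outr, \func_\outr)$; it then feeds $(\phi_\outr, (g_\outr, \func_\outr))$ as the input instance to $\D_\inn$ with randomness $(R_\inn, j_\inn)$, producing inner queries $q_\inn = (w, z_1, \dots, z_{k_\inn - 1})$ and inner tuple $(\phi_\inn, g_\inn, \func_\inn)$. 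It queries the $k_\outr + k_\inn$ provers $A, B_1, \dots, B_{k_\inn - 1}, C^*, D_1, \dots, D_{k_\outr - 1}$ at $w, z_i, (u, j_\inn), v_i$ respectively; runs the inner predicate and projection checks on $(A(w), B_1(z_1), \dots)$; the consistency check $\func_{\inn, 0}(A(w)) = C^*(u, j_\inn)$; and the outer projection checks $\func_{\inn, i}(A(w)) = D_i(v_i)$ for $1 \leq i < k_\outr$; on success it outputs the remaining $l_\outr$ coordinates of $\func_\inn(A(w))$. The claimed randomness $r_\outr + r_\inn + \log_2 t_\inn$, answer size $s_\inn$, and black-box runtime are immediate. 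Perfect completeness follows by taking a canonical outer proof $(C, D)$, setting $C^*(u, j) := E_\inn(C(u))_j$, and plugging in, for each $(R_\outr, j_\comp)$, the canonical inner proof witnessing $\phi_\outr(C(u)) = 1$.

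For distributional soundness, fix composed provers $\Pi_\comp = (A, B, C^*, D)$. I would build $(\Vix_\comp, \ViPi_\comp)$ by combining two applications of the hypotheses. On the inner side, for each $(R_\outr, j_\comp)$, restrict $A, B$ accordingly and view them as inner provers for the instance $(\phi_\outr, (g_\outr, \func_\outr))$; invoke inner distributional soundness to extract $(\Vix_\inn^{R_\outr, j_\comp}, \ViPi_\inn^{R_\outr, j_\comp})$. On the outer side, for each $j_\inn \in [t_\inn]$ apply the local decoding function of \lref[Definition]{def:agreedist} to the string $C^*(u, \cdot)$ with respect to $E_\inn$ (at the optimal $\tau$ achieving agreement parameter $\eta_\inn$) to define an outer large prover $\tilde{C}_{j_\inn}(u) \in \F^{n_\inn}$ (treating any $\bot$ as a fixed default value), and invoke outer distributional soundness on $(\tilde{C}_{j_\inn}, D)$ to get $(\Vix_\outr^{j_\inn}, \ViPi_\outr^{j_\inn})$. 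Then set $\Vix_\comp(R_\comp, j_\comp) := \Vix_\outr^{j_\inn}(R_\outr, j_\comp)$, and let $\ViPi_\comp(R_\comp, j_\comp)$ be the composed proof obtained from $\ViPi_\outr^{j_\inn}(R_\outr, j_\comp)$ by filling in the canonical inner provers together with $C^*$ obtained by $E_\inn$-encoding the outer large prover, or $\bot$ when $\ViPi_\outr^{j_\inn}$ is $\bot$.

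The soundness bound then follows from a union bound over three bad events on $R_\comp$. The inner-fails event, that the inner view accepts yet disagrees with $\ViPi_\inn^{R_\outr, j_\comp}$, has probability at most $\delta_\inn$ by inner distributional soundness, averaged over $R_\outr, j_\inn$. The outer-fails event, that the outer view on $(\tilde{C}_{j_\inn}, D)$ accepts yet disagrees with $\ViPi_\outr^{j_\inn}$, has probability at most $\delta_\outr$ by outer distributional soundness, averaged over $j_\inn$. The remaining consistency-mismatch event is that the inner decoder returns $x_\inn := \Vix_\inn^{R_\outr, j_\comp}(R_\inn)$ for which the check $E_\inn(x_\inn)_{j_\inn} = C^*(u, j_\inn)$ holds yet $\tilde{C}_{j_\inn}(u) \neq x_\inn$. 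The key observation, and the main subtlety I expect to resolve, is that once $(R_\outr, j_\comp, R_\inn)$ are fixed both $u$ and $x_\inn$ are determined independently of $j_\inn$, so \lref[Claim]{claim:ambiguous} applied with $w = C^*(u, \cdot)$, $y = x_\inn$, $v = E_\inn(x_\inn)$ bounds the conditional probability over $j_\inn$ by $\eta_\inn$. When none of the three bad events occur, $\tilde{C}_{j_\inn}(u) = x_\inn$ forces the outer large idealized value to coincide with $x_\inn$, and hence the composed local view matches $\ViPi_\comp$; the rest is essentially bookkeeping to confirm the coupling.
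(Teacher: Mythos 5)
Your overall architecture is the same as the paper's: you locally decode the consistency prover $C^*(u,\cdot)$ to obtain, for each $j_\inn$, an outer large prover, you extract idealized pairs from the outer and the inner distributional soundness, and you bound the failure probability by a union of three events, using \lref[Claim]{claim:ambiguous} over the choice of $j_\inn$ (with the correct observation that $u$ and $x_\inn$ are fixed once $(R_\outr,j_\comp,R_\inn)$ are fixed) to charge the consistency mismatch to $\eta_\inn$. The parameter accounting, completeness, and the three bounds $\delta_\inn$, $\eta_\inn$, $\delta_\outr$ all match \lref[Lemma]{lem:distsound}.

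The genuine gap is in your definition of $\ViPi_\comp$: you build it from $\ViPi_\outr^{j_\inn}$ ``by filling in the canonical inner provers.'' Distributional soundness demands the \emph{exact} agreement $\Pi|_{q_\comp}=\ViPi_\comp(R_\comp)|_{q_\comp}$ at all queried coordinates, including the queries $w,z_1,\ldots,z_{k_\inn-1}$ sent to $A$ and the $B_i$'s. A canonical valid inner proof of the statement $\Vi C(u)\in\SAT(\phi_\outr)$ will in general disagree with the actual answers $A(R_\outr,j_\comp,w)$ and $B_i(R_\outr,j_\comp,z_i)$ even when every check passes and the inner idealized proof is successful, since the inner dPCP may admit many valid proofs of the same statement and the actual provers may be coupled to a non-canonical one; so with your definition $\Pr[\D_\comp \hbox{ accepts and } \Pi|_{q_\comp}\neq\ViPi_\comp|_{q_\comp}]$ can be close to $1$, and your closing claim that ``the composed local view matches $\ViPi_\comp$'' fails at the $A,B$ coordinates. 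The fix --- exactly the point the paper treats as the subtlety of the construction --- is a conditional definition: at each pair $(R,j)$, insert the \emph{extracted} inner idealized provers $\ViPi_\inn^{(R,j)}(R_\inn)$ whenever they are successful and their decoded value equals $\Vi C(u)$, and only otherwise fall back to a canonical valid proof of $\Vi C(u)\in\SAT(\phi_\outr)$ (the fallback is needed so that the composed proof remains a valid proof of $\Vix_\comp\in\SAT(\Phi)$; using the extracted inner proofs unconditionally would break validity when they prove a different statement or are $\bot$). With that change, your three-event analysis goes through as in the paper.
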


\begin{proof}

We will follow the following notation to describe the composed
decoder.

\paragraph{Provers of $\D_\comp$} Suppose the inner PCP
decoder $\D_\inn$ interacts with provers
$A,B_1,\dots,B_{k_\inn-1}$ (here $A$ is the large prover and $B_i$'s are the projection provers), and the outer PCP decoder interacts with provers $C,
D_1,\dots, D_{k_\outr-1}$ (here $C$ is the large
prover and $D_i$'s are the projection provers).

%
As mentioned in the informal description, the composed PCP decoder
$\D_\comp$ simulates $\D_\outr$ except that instead of
querying $C$, uses the inner PCP decoder $\D_\inn$ and an additional consistency prover
$C^*$. Thus, the provers for the composed PCP decoder $\D_\comp$ will
be the following: $A, B_1,\dots, B_{k_\inn-1}, C^*,
D_1,\dots, D_{k_\outr -1}$; the main prover being $A$
and the projection provers being the rest. As mentioned in the outline, for each choice of the outer randomness $R_\outr$ and index $j_\outr$ the inner PCP decoder $\D_\inn$
is simulated on a different input. Hence the corresponding inner provers for the
composed dPCP $\D_\comp$ (i.e., $A, B_1,\dots,
B_{k_\inn-1}$) are explicitly given the specification of the
outer randomness $R_\outr$ and index $j_\outr$ as part of their
queries. (Alternatively, one can think of $A$ and $B_i$ as an aggregate of separate provers $A(R_\outr,j_\outr)$ and $B_i(R_\outr,j_\outr)$ per $R_\outr,j_\outr$).

\paragraph{Randomness of $\D_\comp$} The randomness of $\D_\comp$ comes in three
parts: the randomness $R_\outr$ of $\D_\outr$, the randomness $R_\inn$
of $\D_\inn$ and a random index $j_\inn$ to perform the consistency
test. Thus, $R_\comp = \left(R_\outr, R_\inn, j_\inn\right)$.

\paragraph{Decoded Index of $\D_\comp$}  
The index $j_\comp$ being decoded
by $\D_\comp$ is passed as the index $j_\outr$ being decoded by
$\D_\outr$.


\paragraph{Indexing the answers of $\D_\comp$} Note that the number of answers $l_\inn$ of the inner PCP
decoder $\D_\inn$ is the sum of the number of answers $l_\outr$ of the
outer $\D_\outr$ and the number of provers $k_\outr$ of
outer $\D_\outr$. Thus, $\func_\inn$ is a list of $l_\outr+k_\outr$
functions. We will find it convenient to index the functions in
$\func_\inn$ with $\set{0} \union \left(\set{\outr} \times\{0,1,\dots,
l_\outr-1\}\right) \union \left(\set{\proj} \times \{1,\dots, k_\outr-1\}\right)$, such that $\func_{\inn,(\proj,i)}$, $i=1,\ldots,k_\outr-1$, are the answers to be compared with the outer projection provers, $\func_{\inn,0}$ is intended for the consistency test, and $\func_{\inn,(\outr,i)}$, $i=0,\ldots,\ell_\outr-1$, give the answers for the outer decoder.

With these conventions in place,
here is the description of the composed PCP decoder, $\D_\comp$:\\

\fbox{
\begin{minipage}{\textwidth}
$\D_\comp(\Phi, F;R_\outr, R_\inn, j_\inn)$:
\begin{itemize}
\item Input: $\left(\Phi, F\right)$
\item Random input string: $\left(R_\outr, R_\inn, j_\inn\right)$
\item Index to be decoded: $j_\outr$
\item Provers: $\Pi = \left(A, B_1,\dots, B_{k_\inn-1}, C^*,
D_1,\dots, D_{k_\outr -1}\right)$
\begin{enumerate}
\item Initial Computation:
\begin{enumerate}
\item {[Simulating $\D_\outr$]} Run $\D_\outr\left(\Phi,F; R_\outr, j_\outr\right)$ to obtain
$\left(q_\outr,\phi_\outr,g_\outr,\func_\outr\right)$.
\item {[Simulating $\D_\inn$]} Run $\D_\inn\left(\phi_\outr,\left(g_\outr,\func_\outr\right); R_\inn, j_\inn\right)$ to
obtain $\left(q_\inn,\phi_\inn,g_\inn,\func_\inn\right)$.
\end{enumerate}
\item Queries: Let $q_\outr = (u,v_1,\ldots,v_{k_\outr-1})$ and let $q_\inn = (w,z_1,\ldots,z_{k_\inn-1})$.
\begin{enumerate}
\item Send query $\left(R_\outr,j_\outr, w\right)$ to prover
  $A$ to obtain answer $\alpha = A(R_\outr,j_\outr,w)$.
\item For $i = 1, \ldots, k_{\inn-1}$, send query $\left(R_\outr,j_\outr, z_i\right)$ to
  prover $B_i$ to obtain answer $\beta_{i} = B_i(R_\outr,j_\outr,z_i)$.
\item Send query $\left(u, j_\inn\right)$ to prover $C^*$ to
  obtain answer $\gamma = C^*(u,j_\inn)$.
\item For $i =1 \dots, k_{\outr}-1$, send query $v_i$ to prover $D_i$ to
  obtain answer $\zeta_{i} = D_i(v_i)$.
\end{enumerate}
\item Checks:
\begin{enumerate}
\item\label{itm:chk-in-pred} {[Inner local predicate]} Check that
    $\phi_\inn\left(\alpha\right) = 1$.
\item\label{itm:chk-in-proj} {[Inner projection tests]} For $i = 1, \dots, k_{\inn -1}$, check that
  $g_{\inn,i}\left(\alpha\right)=\beta_{i}$.
\item\label{itm:chk-cons} {[Consistency test]} Check that
  $\func_{\inn,0}\left(\alpha\right)=\gamma$.
\item\label{itm:chk-out} {[Outer projection tests]} For $i = 1,\dots, k_{\outr-1}$, check
    that  $\func_{\inn, \left(\proj,i\right)}\left(\alpha\right)=\zeta_{i}.$
\end{enumerate}
\item {Output:} If all the checks in the above step pass, then return
  $\func_{\inn,\left(\outr,\cdot\right)}\left(\alpha\right)$ else
  return $\bot$.
\end{enumerate}
\end{itemize}
\end{minipage}
}

The claims about $\D_\comp$'s parameters (randomness complexity, answer
size, number of provers, number of answers) except completeness and
soundness error can be verified by inspection. Thus, we only need to
check completeness and soundness.

\paragraph{Completeness} Let $x \in SAT\left(\Phi\right)$. By the
completeness of outer $\D_\outr$, there exist provers
$\Pi^\outr = \left(C,D_1,\dots, D_{k_\outr-1}\right)$, such that
for all $\left(R_\outr, j_\outr\right)$ we have $$\D^{\Pi^\outr}_\outr\left(\Phi, F;
R_\outr,j_\outr\right) = \left(E_\outr\left(x\right)_{j_{\outr}}, F_1\left(x\right), \dots,
F_{l_\outr}\left(x\right)\right).$$
Fix any particular outer random string $R_\outr$ and index
$j_\outr$. Let $\D_\outr\left(\Phi, F; R_\outr, j_\outr\right) =
\left(q_\outr, \phi_\outr,g_\outr,\func_\outr\right)$. Since the outer
decoder $\D_\outr$ does not reject, we must have that
$y_{\left(R_\outr,j_\outr\right)} := C(u)$
satisfies $\phi_\outr$. In other words,
$y_{\left(R_\outr,j_\outr\right)} \in
\SAT\left(\phi_\outr\right)$. Now, by the completeness of the inner
$\D_\inn$, we have that for these $\left(R_\outr, j_\outr\right)$ there
exist provers $\Pi^{\inn}_{\left(R_\outr,j_\outr\right)} =
\left(A_{\left(R_\outr,j_\outr\right)},B_{\left(R_\outr,j_\outr\right),1}, \dots,
  B_{\left(R_\outr,j_\outr\right),k_\inn-1}\right)$ such that for
all $\left(R_\inn, j_\inn\right)$ we have
$$\D^{\Pi^{\inn}_{\left(R_\outr,j_\outr\right)}}_\inn\left(\phi_\outr,
\left(g_\outr,f_\outr\right); R_\inn, j_\inn\right) =
\left(E_\inn\left(y_{\left(R_\outr,j_\outr\right)}\right)_{j_\inn},
g_\outr\left(y_{\left(R_\outr,j_\outr\right)}\right),
\func_\outr\left(y_{\left(R_\outr,j_\outr\right)}\right)\right).$$

We are now ready to define the provers $$\Pi = \left(A, B_1,\dots,
B_{k_\inn-1}, C^*, D_1, \dots, D_{k_\outr-1}\right)$$ for
the composed decoder $\D_\comp$. As the name suggests, the projection
provers $D_i, i=1, \dots, k_{\outr}-1$ are exactly the same as
the outer projection provers in $\Pi^\outr$. The consistency prover
$C^*$ is defined by encoding $C(u)$ separately for each $u$ as
\[C^*\left(u, j_\inn\right) :=
E_\inn\left(C(u)\right)_{j_\inn}.\]
The projection provers
$B_i, i =1, \dots, k_{\inn} -1$ are defined as
$B_i\left(R_\outr,j_\outr, z_i \right) := B_{\left(R_\outr, j_\outr\right),i}\left(z_i\right)$. Finally, the large prover $A$ is defined as
$A\left(R_\outr, j_\outr,w\right) :=
A_{\left(R_\outr,j_\outr\right)}(w)$. It is
easy to check that according to this
definition of $\Pi$, for each
$\left(\left(R_\outr, R_\inn, j_\inn\right),
j_\outr\right)$ it holds that
$$\D^\Pi_\comp\left(\Phi, F; \left(R_\outr,R_\inn, j_\inn\right), j_\outr\right) =
\left(E_\outr\left(x\right)_{j_\outr}, F_1\left(x\right),\dots,
  F_{l_\outr}\left(x\right)\right).$$
This proves the completeness of $\D_\comp$.\qed

\paragraph{Distributional Soundness of $\D_\comp$} We prove the
following statement about the distributional soundness of $\D_\comp$.

\begin{lemma}\label{lem:distsound}
Suppose the outer PCP decoder $\D_\outr$ has distributional soundness error $\delta_\outr$
with respect to encoding $E_\outr$, and the inner PCP decoder $\D_\inn$ has distributional soundness
error $\delta_\inn$ with respect to encoding $E_\inn$, and suppose $E_\inn$ has
agreement parameter $\eta_\inn$ (see
\lref[Remark]{rem:agreement}). Then, the composed PCP decoder $\D_\comp
= \D_\outr \twocomp \D_\inn$ has distributional soundness error
$\delta_\comp \leq \delta_\outr+\delta_\inn+\eta_\inn$ with respect to encoding $E_\outr$.
\end{lemma}

\begin{proof}
Suppose $\D_\comp$ on input $\left(\Phi,F\right)$ interacts with provers
\[\Pi=\left(A,B_1,\ldots,B_{k_\inn-1},C^*,D_1,\ldots,D_{k_\outr-1}\right).\]
To prove soundness of $\D_\comp$, we need to construct for each
composed random
string $R_\comp:=\left(R_\outr, R_\inn, j_\inn\right)$, functions
$\Vix\left(R_\comp\right)$ and  $\ViPi\left(R_\comp\right)$ such
that $\ViPi\left(R_\comp\right)$ is either $\bot$ or a valid proof
for the statement $\Vix\left(R_\comp\right) \in
\SAT\left(\Phi\right)$, and for every $j_\outr$ we have
\begin{align*}
\Pr_{R_\comp}\left[\D_\comp^\Pi\left(\Phi, F;R_\comp, j_\outr\right) = \bot \text{ or } \Pi|_{q_\comp} =
\ViPi\left(R_\comp \right)|_{q_\comp} \right] \geq1-( \delta_\outr +
\delta_\inn + \eta_\inn),
\end{align*}
where $q_\comp$ is the query vector generated by $\D_\comp$.

\bigskip \noindent The construction of $\Vix\left(R_\comp\right)$ and
$\ViPi\left(R_\comp\right)$ relies on the soundness properties of
$\D_\outr$ and $\D_\inn$. We first locally-decode $C^*$ to obtain a distribution over outer provers $C$. We then use the distributional soundness of $\D_\outr$ to obtain an idealized outer proof $(\Vi C,\Vi D_i)$. We then use the soundness of $\D_\inn$ to obtain idealized inner proofs $(\Vi A, \Vi B_i)$.
\begin{description}
\item[Outer main prover $C_{j_\inn}$:] For each $j_\inn$, we
  define an outer main prover $C_{j_\inn}$ using the
  consistency prover $C^*$ as follows. Let $\tau = \eta_\inn/2$ be the agreement parameter of $E_\inn$, as in
  \lref[Remark]{rem:agreement}. For each query $u:= q_{\outr,0}$ to the
  outer main prover, $C_{j_\inn}\left(u\right)$ is defined to be the
  $j_{\inn}$-th entry of the $\tau$-local decoding of $C^*\left(u, \cdot\right)$
  as in \lref[Definition]{def:agreedist}  if well-defined and $\bot$
  otherwise.
  \inote{ removed the explicit def, see latex}

\item[Idealized outer pairs $\Vix^\outr_{j_\inn}\left(R_\outr\right)$
  and $\ViPi^\outr_{j_\inn}\left(R_\outr\right)$:] For each $j_\inn$ we have an
  outer proof \[\Pi^\outr_{j_\inn}=\left(C_{j_\inn},D_1,\ldots,D_{k_\outr-1}\right).\]
  Note that only the $C$ provers are different in the various outer
  provers $\Pi^\outr_{j_\inn}$ as we range over $j_\inn$.
  From the soundness of  $\D_\outr$ for every $\Pi^\outr_{j_\inn}$ there is
  an idealized pair $\Vix^\outr_{j_\inn}\left(R_\outr\right)$ and
  $\ViPi^\outr_{j_\inn}\left(R_\outr\right)=\left(\Vi C,\Vi
    D_1,\ldots,\Vi D_{k_\outr-1}\right)$%
  \footnote{The proofs $\left(\Vi C,\Vi
    D_1,\ldots,\Vi D_{k_\outr-1}\right)$ depend on $j_{\inn}$
  and $R_{\outr}$ so more formally could be written as $\left(\Vi C_{j_{\inn}}(R_\outr),\Vi
    D_{j_\inn,1}(R_\outr),\ldots,\Vi
    D_{j_\inn,k_\outr-1}(R_\outr)\right)$, but we will drop the
  indices for ease of readability.} %
  that ``explain'' its success.
  \inote{removed repeating of definition, see latex}

\item [Idealized inner pairs
  $\Vi y^\inn_{\left(R,j\right)}\left(R_\inn\right)$ and
  $\ViPi^\inn_{\left(R,j\right)} \left(R_\inn\right)$:]
  For every outer randomness
  $R$ and index $j$ let
\[
\Pi^\inn_{\left(R,j\right)} :=
\left(A\left(R,j,\cdot\right),B_1\left(R,j,\cdot\right),\ldots,B_{k_\inn-1}\left(R,j,\cdot\right)\right)\]
be the relevant part of the proof for $\D_\inn$.

Let $\left(q,\phi,g,\func\right) = \D_\outr\left(\Phi,F; R,
  j\right)$. When $R_\outr=R$ and $j_\outr=j$, the composed
decoder $\D_\comp$ simulates running $\D_\inn$ with input
$\left(\phi,\left(\g,\func\right); R_\inn, j_\inn \right)$ and with the proof
$\Pi^\inn_{\left(R,j\right)}$.


%
  For each $\Pi^\inn_{(R,j)}$, the soundness of $\D_\inn$ guarantees idealized prover pairs $\Vi y^\inn_{\left(R,j\right)}$ and
  $\ViPi^\inn_{\left(R,j\right)}$ (functions of $R_\inn$) that "explain" its success.
  \inote{again removed repeating the properties of $\ViPi$ from the def}
%

\end{description}

We are ready to define the idealized $(\Vi x,\ViPi)$ pairs for the composed
decoder $\D_\comp$.

\begin{description}
\item[Idealized composed pairs $\Vix\left(R_\comp\right)$ and $\ViPi\left(R_\comp
  \right)$:] Recall that $R_\comp$ is short for
$\left(R_\outr,R_\inn, j_\inn\right)$.
Define $\Vix\left(R_\comp\right)  :=\Vix^\outr_{j_\inn}\left(R_\outr\right)$. We then set
$\ViPi\left(R_\comp\right)$ as follows:  If
$\ViPi^\outr_{j_\inn}\left(R_\outr\right)$ is successful, in
particular $C_{j_\inn}(u) = \Vi C(u)$, set $\ViPi\left(R_\comp\right)$ to be the set of provers  \\
$\left({\Vi A},
{\Vi B}_1,\ldots,{\Vi B}_{k_\inn-1}, {\Vi C}^*,
{\Vi D}_1,\ldots,{\Vi D}_{k_\outr-1}\right)$ defined next.
\begin{itemize}
\item The outer projection provers ${\Vi D}_i$ are defined to be the
  same as in $\ViPi^\outr_{j_\inn}\left(R_\outr\right)$.
\item Let ${\Vi C}$ be the main prover in
  $\ViPi^\outr_{j_\inn}\left(R_\outr\right)$. We define ${\Vi C}^*$ as
  follows:
  \[{\Vi C}^*\left(u,j\right) :=E_\inn\left({\Vi C}(u)\right)_j.\]
\item For any pair $(R,j)$ where $R\in\set{0,1}^{r_\outr}$ and
  $j\in[t_\outr]$, we define ${\Vi A}\left(R,j,\cdot\right)$ and
  ${\Vi B}_i\left(R,j,\cdot\right)$ (for
  $i=1,\ldots,k_\inn-1$) as follows. Denote $\left(q,\phi,g,\func\right) = \D_\outr\left(\Phi,F;
      R, j\right)$ and suppose $q=(u,v_1,\ldots,v_{k_\outr-1})$.  If
    $\Vi y^\inn_{(R,j)}(R_\inn)= {\Vi C}(u)$ and also $\ViPi^\inn_{\left(R,j\right)}
  \left(R_\inn\right)$ is successful, we set ${\Vi
    A}\left(R,j,\cdot\right)$ and the provers ${\Vi
    B}_i\left(R,j,\cdot\right)$ to be the main prover
  and projection provers in
  $\ViPi^\inn_{\left(R,j\right)}\left(R_\inn\right)$
  respectively.

  Otherwise,
  \inote{need to add a remark explaining this subtlety}
  if $\Vi y^\inn_{(R,j)}(R_\inn) \neq  {\Vi
    C}(u)$ or $\ViPi^\inn_{\left(R,j\right)}
  \left(R_\inn\right)$ is not successful, we define  ${\Vi A}\left(R,j,\cdot\right)$
  and ${\Vi B}_i\left(R,j,\cdot\right)$ by letting them be some valid proofs
  for the statement $ {\Vi
    C}(u)\in SAT \left(\phi\right)$ (Note that $ {\Vi
    C}(u) $ satisfies $\phi$
  since $\ViPi^\outr_{j_\inn}\left(R_\outr\right)$ is successful).
\remove{
\item
We define the main prover ${\Vi A}^\inn\left(R_\outr,j_\outr,\cdot\right)$ to be equal to the
 main prover of $\ViPi^\inn_{\left(R_\outr,j_\outr\right)}\left(R_\inn,j_\inn\right)$ and we define
  the projection prover ${\Vi B}_i^\inn\left(R_\outr,j_\outr,\cdot\right)$ to be the same as the $i$'th
  projection prover in $\ViPi^\inn_{\left(R_\outr,j_\outr\right)}\left(R_\inn,j_\inn\right)$.
\item For any pair $\left(R,j\right)\neq
  \left(R_\outr,j_\outr\right)$, we let $\left(q,\phi,g,f\right)=\D_\outr\left(\Phi,F;R,j\right)$, and set
  $\Vi y:={\Vi A}^\outr\left(q_{\outr,0}\right)$. We define the answers ${\Vi A}^\inn\left(R,j,\cdot\right)$
  and ${\Vi B}_i^\inn\left(R,j,\cdot\right)$ by letting them be valid proofs
  for the statement $y\in SAT \left(\phi\right)$ (Note that $y$ satisfies $\phi$
  since $\ViPi^\outr_{j_\inn}\left(R_\outr,j_\outr\right)$ is a valid
  proof).
}
\end{itemize}
If either $\ViPi^\outr_{j_\inn}\left(R_\outr\right)$ is not successful
or any of the intermediate objects in the above definition are $\bot$,
then we set $\ViPi\left(R_\comp\right)$ to $\bot$.
\end{description}

It remains to show that the pair $\Vix\left(R_\comp\right)$ and $\ViPi\left(R_\comp\right)$
  has the desired properties. \phnote{Needs more explanation} It
  follows  by inspection of the definition of
  $\ViPi\left(R_\comp\right)$ that whenever it is not
  $\bot$, it is a valid proof
  of the statement $\Vix\left(R_\comp\right)\in SAT\left(\Phi\right)$  and agrees with the
  local view \gnote{elaborate!} of $\Pi$ on input $\left(\Phi, F;R_\comp, j_\outr\right)$.

So it remains to show that for every $j_\outr$,
\begin{align*}
\Pr_{R_\comp}\Bigl[\ViPi\left(R_\comp\right)=\bot
\ve
\D_\comp^{\Pi}\parenth{\Phi,F;R_\comp,j_\outr}\neq\bot\Bigr]
\leq \delta_\outr+\delta_\inn+\eta_\inn.
\end{align*}
We partition the above event intro three parts according to the
highest indexed condition among the following three conditions that does not hold --- one of them must not hold for
$\ViPi\left(R_\comp\right)$ to be equal to $\bot$.

\begin{enumerate}
\item \label{cond1} $\ViPi^\outr_{j_\inn}\left(R_\outr\right)$ is successful, in particular $C_{j_\inn}(u) = \Vi C(u)$.
\item\label{cond2} $\Vi y^\inn_{\left(R_\outr,j_\outr\right)} \left(R_\inn\right)=\Vi C_{j_\inn}(u)$.
\item \label{cond3} $\ViPi^\inn_{\left(R_\outr,j_\outr\right)}\left(R_\inn\right)$ is successful.
\end{enumerate}

We separately bound the probability of each event in this partition.
  \begin{itemize}

   \item We bound the probability that \lref[Condition]{cond3} does not
    hold, namely  that
    $\ViPi^\inn_{\left(R_\outr,j_\outr\right)}\left(R_\inn\right)$ is
    not successful,  and yet $\D_\comp$ does not reject. If $\D_\comp$ doesn't reject then in particular checks \lref[Check]{itm:chk-in-pred} and \lref[Check]{itm:chk-in-proj} pass,
    which  means that
    $\D_\inn^{\Pi^\inn_{(R_\outr,j_\outr)}}\left(\phi_\outr,\left(g_\outr,\func_\outr\right);
      R_\inn, j_\inn\right) \neq\bot$. But the soundness of $\D_\inn$ implies that the probability over
    the choice of $R_\inn$ that this occurs and yet
    $\ViPi^\inn_{\left(R_\outr,j_\outr\right)}\left(R_\inn\right)$ is
    not successful is bounded by $\delta_\inn$.

  \item Now we bound the probability that \lref[Condition]{cond3} holds,
    \lref[Condition]{cond2} does {\em not} hold, and yet $\D_\comp$ does not
    reject. When \lref[Condition]{cond3} holds, the output of the
    $\D_\inn$ simulation for the encoding is
    $E_\inn\left(y^\inn_{\left(R_\outr,j_\outr\right)}\left(R_\inn\right)\right)_{j_\inn}$. It is thus
    enough to bound the probability that
    $\Vi y^\inn_{\left(R_\outr,j_\outr\right)}\left(R_\inn\right)\neq
    \Vi C(u)$ and yet
    $E_\inn\left(\Vi y^\inn_{\left(R_\outr,j_\outr\right)}\left(R_\inn\right)\right)_{j_\inn}=C^*\left(u,j_\inn\right)$, i.e. \lref[Check]{itm:chk-cons} passes.
Since by definition $C_{j_\inn}(u)$ is the $\tau$-local decoding of
$C^*\left(u,\cdot\right)$ at position $j_\inn$,
\lref[Claim]{claim:ambiguous} and \lref[Remark]{rem:agreement} imply that the probability of this event
over the choice of $j_\inn$ is bounded by the agreement parameter $\eta_\inn$.

\remove{
    The above inequality can occur either because
    $y^\inn\left(j_\outr,R^\outr,R^\inn\right)\right)$ is not $\tau$-admissible with
    respect to $A^*\left(q^\outr_{A^\outr},\cdot\right)$, in which case the
    latter equality holds with probability at most $\tau$ over the
    random selection of $j_\inn$, or because
    $y^\inn\left(j_\outr,R^\outr,R^\inn\right)\right)$ is $\tau$-admissible and yet
    $E^\inn\left(y^\inn\left(j_\outr,R^\outr,R^\inn\right)\right)\right)$ is ambiguous at $j_\inn$
    with respect to $A^*\left(q^\outr_{A^\outr},\cdot\right)$. However according
    to \lref[Claim]{claim:ambiguous} this happens with probability at
    most $4\eta/\tau^2$ over the choice of $j_\inn$.}

  \item It remains to bound the probability that
    Conditions~\ref{cond3} and~\ref{cond2} hold but \lref[Condition]{cond1}
    does not, and yet $\D_\comp$ does not reject. When
    Condition~\ref{cond2} and~\ref{cond3} hold it means that
    $\Vi C\left(u\right)=
    \Vi y^\inn_{\left(R_\outr,j_\outr\right)}\left(R_\inn\right)\in SAT\left(\phi_\outr\right)$ and the output of
    the simulated $\D_\inn$ computed by $\D_\comp$
    is
    \[
    f_\inn\left(A\left(R_\outr,j_\outr,w \right)\right)=
    \parenth{
      \parenth{ E_\inn\left(\Vi C\left(u\right)\right)_{j_\inn},
        g_\outr\left( \Vi C\left(u\right)\right),
        f_\outr\left( \Vi C\left(u\right) \right)}}.
    \]
    If $\D_\comp$ does not reject it means that the values of $g_\outr\left(
    \Vi C(u) \right)$ match the ones obtained from
    the outer projection provers, to which it is compared in \lref[Check]{itm:chk-out}. But these are also the values used by $\D_\outr$ when it
    is run on input $\left(\Phi, F; R_\outr,j_\outr\right) $ with the proof $\ViPi^\outr_{j_\inn}$, which means that
    $\D_\outr^{\ViPi^\outr_{j_\inn}} \left(\Phi, F;
    R_\outr,j_\outr\right)\neq \bot$. But the probability over the choice of
    $R_\outr$ that this happens while
    \lref[Condition]{cond1} fails is bounded by $\delta_\outr$, the distributional
    soundness error of $\D_\outr$.
  \end{itemize}
This proves the distributional soundness of $\D_\comp$.

\end{proof}
This completes the proof of the \lref[Composition Theorem]{thm:comp}.
\end{proof}

\section{Proof of Main Theorem}\label{sec:together}

\begin{theorem}[Main Construct]\label{thm:maincons} Every language $L$
  in $\NP$ has a $O(\lg \lg N/\lg\lg\lg N)$-prover projective PCP with
  the following parameters. On input a Boolean predicate/circuit $\Phi$ of size $N$, the PCP has
\begin{itemize}
\item randomness complexity $O(\lg N)$,
\item query complexity $O(\lg\lg N/\lg\lg\lg N)$,
\item answer size $O(\lg N/\poly\lg\lg N)$,
\item perfect completeness, and
\item soundness error $N^{1/(\lg \lg N)^{\Omega(1)}}$.
\end{itemize}
\end{theorem}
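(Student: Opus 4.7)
The plan is to instantiate the modular composition machinery (\lref[Theorem]{thm:comp}) in the recursive framework of~\cite{DinurFKRS2011}, iterating it $O(1/\eps)$ times with a sub-constant $\eps$. First I would fix a field $\F$ of size roughly $2^{(\log N)^{1-\eps}}$, used uniformly throughout the construction, where
\[
\eps = c \cdot \frac{\log\log\log N}{\log\log N}
\]
for a sufficiently large constant $c$. This ensures that $2^{(\log N)^{1-\eps}}=N^{1/(\log\log N)^{c}}$, which will be the target answer size. The arithmetization step reduces the given $\CSAT$ instance to an $\ACSAT_\F$ instance, so that all intermediate dPCPs can be composed as arithmetic-circuit dPCPs over the same $\F$.

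The main step is the iteration. I would start from the Reed--Muller/low-degree-test dPCP of \lref[\S]{sec:components}: a $2$-prover algebraic dPCP for $\ACSAT_\F$ with randomness $O(\log N)$, distributional soundness error $\dout \le 1/\poly(N)$, and answer size roughly $2^{(\log N)^{1-\eps}}$ (the initial ``very large'' alphabet). I would then compose it with a sequence of inner algebraic dPCPs $\D_\inn^{(1)}, \D_\inn^{(2)}, \ldots$, each a Reed--Muller-based dPCP instantiated on a smaller input (the size equal to the outer answer size $s_\outr$). At step $i$, composition shrinks the answer size from $2^{(\log N)^{1-i\eps}}$ to $2^{(\log N)^{1-(i+1)\eps}}$ by picking the inner encoding to be a Reed--Muller code with good distance (so $\eta_\inn\le 1/\poly(N)$ by \lref[Remark]{rem:agreement}); by \lref[Theorem]{thm:comp}, the randomness increases by $r_\inn+\log t_\inn = O(\log N)$, each composition adds one prover, and soundness and distributional error only add up: after $T=O(1/\eps)$ steps the error is bounded by $T\cdot(\dout+\din+\eta_\inn)\le O(1/\eps)\cdot N^{-\Omega(1)} = N^{-\Omega(1)}$. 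Finally, after $T$ Reed--Muller steps I would apply a constant number of Hadamard-based dPCP composition steps (also from \lref[\S]{sec:components}) to bring the answer size down to $O(\log N/\poly\log\log N)$; the Hadamard step again only increases the error additively, preserving the $N^{-\Omega(1)}$ bound because the distributional soundness of those components is at most $1/\poly(N)$ by the strengthened analysis promised in \lref[\S]{sec:components}.

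The accounting is then direct. Randomness: each composition adds $O(\log N)$, and the number of compositions is $T+O(1) = O(1/\eps) = O(\log\log N/\log\log\log N)$; reusing randomness in the standard way gives the final $O(\log N)$ bound. Query complexity and number of provers: the initial dPCP uses $2$ provers, each composition adds at most $k_\inn$ new provers (with $k_\inn=O(1)$ for each building block), giving $O(1/\eps)=O(\log\log N/\log\log\log N)$ queries. Answer size: $O(\log N/\poly\log\log N)$, as set by the last Hadamard layer. Perfect completeness: inherited from completeness of each building block, closed under composition by \lref[Theorem]{thm:comp}. Soundness: $O(T)\cdot N^{-\Omega(1)}=N^{-1/(\log\log N)^{\Omega(1)}}$.

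The main obstacle I anticipate is the careful choice and verification of the distributional-soundness parameters $\din,\dout,\eta_\inn$ at every level. In previous analyses a multiplicative blowup by the inner list size would have destroyed the error after $T=\omega(1)$ steps; here the key is that \lref[Theorem]{thm:comp} gives only the additive error $\dout+\din+\eta_\inn$, so one must certify that each building block from \lref[\S]{sec:components} meets the distributional notion (not merely the weaker list-decoding notion) with $1/\poly(N)$-small error, and that the agreement parameter $\eta_\inn$ of the inner encoding is $1/\poly(N)$. Given those certifications, the $T$-step budget $O(T)\cdot N^{-\Omega(1)}$ safely remains $N^{-1/(\log\log N)^{\Omega(1)}}$, which is the claimed soundness bound.
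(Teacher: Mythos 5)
Your overall architecture is the paper's: arithmetize over a single field $\F$ of size $2^{(\log N)^{1-\eps}}$ with $\eps=\Theta(\log\log\log N/\log\log N)$, start from the Reed--Muller dPCP, compose $O(1/\eps)$ times with inner Reed--Muller dPCPs via \lref[Theorem]{thm:comp} (using distributional soundness to get additive rather than list-size-multiplicative error), and finish with Hadamard-based steps. However, there is a genuine parameter gap at the heart of your accounting. You repeatedly assert that the building blocks have distributional soundness error and agreement parameter $1/\poly(N)$, and that the accumulated error is $T\cdot N^{-\Omega(1)}$. This is not what \lref[Theorem]{thm:sumcheck} and \lref[Theorem]{thm:hadamard} give, and it cannot be achieved with these components under an $O(\log N)$ randomness budget: their error is $1/\card\F^{\,0.1}$ and the LDE agreement parameter is $1/\card\F^{\,\Omega(1)}$, where $\card\F = 2^{(\log N)^{1-\eps}} = N^{1/\poly\log\log N}$, so each step contributes $N^{-1/\poly\log\log N}$, not $N^{-\Omega(1)}$. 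Indeed, if a $1/\poly(N)$-error, $O(\log N)$-randomness building block of this kind existed, the whole construction would give far stronger parameters than the theorem claims; the paper's \lref[\S]{sec:optimal} argues this is impossible with the RM/Hadamard blocks. Your closing identity ``$O(T)\cdot N^{-\Omega(1)}=N^{-1/(\log\log N)^{\Omega(1)}}$'' is internally inconsistent and papers over this: the correct chain is per-step error $1/\card\F^{\,\Omega(1)}$, summed over $O(1/\eps)$ steps, which still equals $N^{-1/(\log\log N)^{\Omega(1)}}$ because $\card\F^{-\Omega(1)}$ already is of that form.

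The second gap is the randomness accounting. You say each composition adds $O(\log N)$ randomness and that ``reusing randomness in the standard way'' yields $O(\log N)$ total; but \lref[Theorem]{thm:comp} simply adds $r_\outr + r_\inn + \log t_\inn$ with no reuse, so $\omega(1)$ compositions each costing $\Theta(\log N)$ would blow past the budget. What actually saves the paper is that every inner dPCP runs on an instance of size only $2^{(\log N)^{1-\eps-i\eps'}}$, so its randomness (and the $\log t_\inn$ term for its LDE block length) is $O(m_i\log\card\F)=O(\log N/(\log\log N)^2)$ per step, and the sum over the $O(\log\log N/\log\log\log N)$ steps is $o(\log N)$. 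Relatedly, you should make explicit the bookkeeping that the number of answers must grow as $l_\inn=k_\outr+l_\outr$ at each level, and that the Hadamard dPCP (randomness $O(N_\inn^2\log\card\F)$, exponential in its input) can only be invoked after an intermediate $h=2$ Reed--Muller stage has brought the answer size down to roughly $\sqrt{\log\log N}$; your plan applies Hadamard steps directly to reduce from a much larger answer size, which the randomness bound does not permit.
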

The PCP with inverse polynomial soundness error stated in \lref[Main
Theorem]{thm:main} is obtained by sequentially repeating the above PCP
$\poly(\lg\lg N)$ times in a randomness efficient manner.

\subsection{Building Blocks}

The two building blocks, we need for our construction, are two decodable
PCP based on the Reed-Muller code and the 
Hadamard code respectively. The constructions of both these objects is standard
given the requirements of the dPCP. These PCPs are based on two
encodings the low-degree encoding $\LDE$ and the quadratic Hadamard
encoding $\QF$ respectively. The definition of these codes is given in the next
section (\lref[\S]{sec:components}). For the purpose of this section,
it suffices that these are error correcting codes with very good
distance.

\begin{theorem}[Reed-Muller based dPCP]\label{thm:sumcheck} For any finite field $\F$,
  and parameter $h$ such that $1 < h < |\F|^{0.1}$ and any $\ell>0$, there
    is a 2-prover $\ell+1$-answer decodable PCP $\D$ with respect
    to the
    encoding $\LDE_{\F,h}$ for the language $\ACSAT_\F$ with the following parameters:
    On inputs (i) a predicate $\Phi:\F^{n}\to\bits$ and (ii) functions
    $F_1,\ldots,F_{\ell}:\F^{n}\to\F$ given by arithmetic circuits over $\F$ whose total size is $N$, the dPCP $\D$ has (let $m = \log N/\log h$),
\begin{itemize}
\item randomness complexity $O(\log {N}+ m\log |\F|) = O(m\log\card\F)$\phnote{???},
\item answer size $s,s'=O(m(m+\ell))$,
\item and distributional soundness error $1/|\F|^{0.1}$.
\end{itemize}
\end{theorem}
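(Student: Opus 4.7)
The plan is to build $\D$ from the two classical algebraic ingredients: the Raz--Safra plane-point low-degree test (which gives list-decodability for the Reed--Muller code) and the sum-check protocol for arithmetization. Fix a subset $\subF\subset\F$ with $|\subF|=h$ and embed $[n]$ into $\subF^m$, where $m=\lceil \log N/\log h\rceil$; an assignment $x\in\F^n$ is then identified with a function $\subF^m\to\F$ whose low-degree extension $p=\LDE_{\F,h}(x)$ has individual degree less than $h$ in each of its $m$ variables. The large prover $A$ is meant to store $p|_\pi$ on every affine plane $\pi\subset\F^m$, together with the standard sum-check auxiliary polynomials arising from the arithmetization of $\Phi$ and each $F_i$; the projection prover $B$ is meant to store point evaluations of $p$.

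On random input $R$ and decoding index $j\in\F^m$, the verifier samples a uniformly random affine plane $\pi$ containing $j$ together with a uniformly random second point $j'\in\pi$, reads $A(\pi)$ and $B(j')$, performs the plane-point consistency check $A(\pi)(j')=B(j')$, and uses the (low-degree) polynomial $A(\pi)$ to carry out the sum-check verifications arithmetizing the statements $\Phi(x)=0$ and the claimed values of $F_i(x)$. The output extracted from the accepting local view is $\bigl(A(\pi)(j),F_1\text{-claim},\dots,F_\ell\text{-claim}\bigr)$. Completeness is immediate by setting $A,B$ to be the honest proof for a satisfying $x$. The randomness needed to sample $\pi$ and $j'$ is $O(m\log|\F|)$, and by the standard succinct encoding of the restricted sum-check polynomials the answer sizes match the stated $O(m(m+\ell))$ bound.

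The heart of the proof, and the main obstacle, is the distributional soundness bound $1/|\F|^{0.1}$. Given arbitrary provers $(A,B)$, I would invoke the Raz--Safra list-decoding analysis of the plane-point test at proximity parameter $\tau=\Theta(1/|\F|^{0.1})$: it yields a short list $\{p_1,\dots,p_L\}$ of low-degree polynomials (with $L\le 1/\tau$) together with a $\tau$-local decoding function of $B$ with values in $\{p_1,\dots,p_L,\bot\}$. For each random string $R=(\pi,j')$ I would set $\widetilde x(R)$ to be the $p_i$ (restricted to $\subF^m$) to which $A(\pi)$ aligns on the test, and set $\widetilde\Pi(R)$ to be the honest proof built from $p_i$ (plane restrictions, point evaluations, sum-check companions) when $p_i\in\SAT(\Phi)$, and $\bot$ otherwise. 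The crux is to show that except with probability $1/|\F|^{0.1}$, either the verifier rejects or the local view of $(A,B)$ at $R$ matches $\widetilde\Pi(R)$. This bound decomposes into three contributions: (i) the low-degree test error on random planes, (ii) a Schwartz--Zippel estimate on the degree $O(h(m+\ell))$ sum-check polynomial ruling out the event ``$p_i\notin\SAT(\Phi)$ yet sum-check passes'', and (iii) an ambiguity estimate in the style of \lref[Claim]{claim:ambiguous} on planes where two distinct $p_i$'s happen to coincide with $A(\pi)$. The assumption $h<|\F|^{0.1}$ ensures each of the three contributions is absorbed into the final $1/|\F|^{0.1}$ soundness bound.
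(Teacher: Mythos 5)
Your high-level plan (a low-degree test plus sum-check, then a list-decoding-based definition of the idealized proof) is the right family of ideas, but as described the protocol cannot be executed, and the soundness set-up does not match the definition of distributional soundness. The central gap is the step ``use the (low-degree) polynomial $A(\pi)$ to carry out the sum-check verifications.'' Verifying the sum-check requires reading the auxiliary partial-sum polynomials and the product polynomial at $\Theta(hm)$ \emph{prescribed, correlated} points (e.g.\ identities of the form $s^p_{i-1}(x_1,\dots,x_{i-1},\bar 0)=\sum_{h\in H}s^p_i(x_1,\dots,x_{i-1},h,\bar 0)$), and these points do not lie on a uniformly random affine plane through $j$; moreover you leave the sum-check companions as separate tables held by $A$, with no mechanism in a two-query protocol to low-degree test them or to force them to be consistent with one and the same proof polynomial across different planes. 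The paper resolves both problems at once: it first reduces $\Phi$ via the initial verifier of \lref[Lemma]{lem:had-easy-ver} to checking that a random \emph{quadratic} vanishes on $\pi=a\circ b\circ s$ (an arbitrary arithmetic circuit cannot be sum-checked directly), handles the quadratic terms through the product polynomial $g_3(x,y)=g_2(x)g_2(y)$ plus a multiplication test, bundles $g_3$ and all partial-sum polynomials into a single low-degree polynomial $g_4$ (\eqref{eq:bundling}, \lref[Claim]{claim:mappings}) with $LDE_a$ embedded inside it, and queries $A$ not on a plane but on a degree-$O(hm)$ manifold built from a curve forced through all the verification points $u_1,\dots,u_M$ and the output points $o_1,\dots,o_{\ell+1}$, plus three random points; its soundness rests on the manifold-vs-point test of \lref[Lemma]{lem:LDT}, not on the plane-point test.

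A second, more technical problem: in \lref[Definition]{def:dpcpdistsound} the idealized pair $(\Vix,\ViPi)$ must be a function of the random string $R$ alone, with the $1-\delta$ guarantee holding for \emph{every} decoding index $j$. In your construction the plane $\pi$ (and hence $A(\pi)$, from which you propose to define $\widetilde x(R)$) depends on $j$, so your idealized proof is a function of $(R,j)$ and does not satisfy the definition. The paper avoids this by anchoring the case analysis on the point prover: the list $Q_1,\dots,Q_L$ depends only on $B$, the events determining whether $\ViPi(R)=\bot$ are phrased in terms of $B(x_1)$ where $x_1$ is chosen independently of $j$, and ambiguity at $x_1$ (two valid list elements agreeing there) is sent to $\bot$. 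Your three error contributions (low-degree-test error, a Schwartz--Zippel/sum-check bound against a wrong low-degree codeword, and an ambiguity term) are indeed the ones appearing in \lref[Lemma]{lem:SC-soundness}, but they can only be assembled after the protocol is restructured along the lines above.
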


\begin{theorem}[Hadamard based dPCP]\label{thm:hadamard} For any finite field $\F$, and
  any $\ell>0$, there is a 2-prover $\ell+1$-answer decodable PCP $\D_{\QF,\F}$
  with respect to the encoding $\QF_\F$ for the language $\ACSAT_\F$
  with the following parameters: On inputs (i) a predicate
  $\Phi:\F^{n}\to\bits$ and (ii) functions
  $F_1,\ldots,F_{\ell}:\F^{n}\to\F$ given by arithmetic circuits over $\F$ whose total size is $N$, the dPCP $\D_{\QF,\F}$ has
\begin{itemize}
\item randomness complexity $O(N^2 \log |\F|)$,
\item answer size $s,s'=O(\ell)$,
\item perfect completeness, and
\item distributional soundness error $\le 1/|\F|^{0.1}$.
\end{itemize}
\end{theorem}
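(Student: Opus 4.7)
The plan is to construct $\D_{\QF,\F}$ by adapting the classical Hadamard-code-based PCP to the decodable-PCP framework, equipped with the distributional soundness notion of \lref[Definition]{def:dpcpdistsound}. Standard arithmetization reduces checking $\Phi(x)=0$ with $\Phi$ of circuit size $N$ to checking that an (augmented) vector $x\in\F^{O(N)}$ simultaneously satisfies $O(N)$ quadratic constraints, and rewrites each $F_i(x)$ as a quadratic expression in $x$. The encoding $\QF_\F(x)$ is the pair $(H(x),H(x\otimes x))$, which has relative distance $\approx 1-1/|\F|$ in each block. The two provers are the projection prover $B$, which stores an alleged $\QF_\F(x)$ entrywise, and the main prover $A$, which on each random string $R$ returns an $O(\ell)$-sized bundle of the $\QF_\F(x)$-values needed for all the checks performed on $R$.

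On random string $R$ the verifier samples: uniform $y_1,y_2\in\F^{O(N)}$ and $Y\in\F^{O(N^2)}$ for BLR-style linearity tests on the linear and quadratic parts of $B$; a pair $y,y'\in\F^{O(N)}$ with $Y=y\otimes y'$ for the consistency check $H(x\otimes x)(y\otimes y')=H(x)(y)\cdot H(x)(y')$; a random $\F$-linear combination of the $O(N)$ quadratic constraints arithmetizing $\Phi$; and $\ell+1$ self-correction offsets used to read off $\QF_\F(x)_j$ and each $F_i(x)$ from $A$'s bundle. The projection prover $B$ is then queried on a single coordinate and the verifier compares this against the corresponding projection of $A$'s answer. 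Perfect completeness is immediate from honest provers, and the stated parameters follow by inspection --- randomness $O(N^2\log|\F|)$ is dominated by the choice of $Y\in\F^{O(N^2)}$, and the answer size is $O(\ell)$ field elements.

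The heart of the proof is distributional soundness. Given any provers $(A,B)$, the approach is to apply the standard Hadamard local list-decoding to $B$: the closest linear function to $B$'s linear part is $y\mapsto\langle\Vi x,y\rangle$ for some $\Vi x\in\F^{O(N)}$, and the consistency test together with list-decoding of $B$'s quadratic part pins down the quadratic function to $Y\mapsto\langle\Vi x\otimes\Vi x,Y\rangle$. Set $\Vix(R):=\Vi x$ (independent of $R$) and let $\ViPi(R)$ be the honest quadratic-Hadamard pair of provers induced by $\Vi x$ whenever $\Vi x\in\SAT(\Phi)$, and $\bot$ otherwise. The bad event that $(A,B)$'s local view disagrees with $\ViPi(R)$'s and yet $\D$ accepts is bounded by a union over: the BLR tests failing to reject $B$ when it is far from linear (tunable to $O(1/|\F|^{0.1})$ by repetition or parameter choice); self-correction at $A$'s queried coordinate returning something other than $\QF_\F(\Vi x)$ at that coordinate, bounded via \lref[Claim]{claim:ambiguous} and \lref[Remark]{rem:agreement} by the $O(|\F|^{-1/3})$ agreement parameter of Hadamard; and the random $\F$-linear-combination test missing $\Phi(\Vi x)\neq 0$, a $1/|\F|$ event.

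The main obstacle is to show that these individual errors combine \emph{additively} into the distributional-soundness bound of $O(1/|\F|^{0.1})$, rather than multiplying as in the classical list-decoding analyses where an unavoidable factor of the list size appeared. This is precisely where the extreme distance of the quadratic Hadamard encoding pays off: by \lref[Definition]{def:agreedist} the $\tau$-local decoding of $B$ is a canonical single-valued object at almost every coordinate, which lets one couple $(A,B)$ with $\ViPi$ query-by-query and bound the bad event by the union bound above.
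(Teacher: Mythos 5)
There is a genuine gap in your soundness argument: you decode $B$ to \emph{the closest} linear function and set $\Vix(R):=\Vi x$ \emph{independent of $R$}, i.e.\ you argue in the unique-decoding regime. That cannot give distributional soundness error $1/\card\F^{0.1}$. Consider a prover $B$ that is a mixture of two honest encodings $\QF_\F$ of two different satisfying assignments $x_1\neq x_2$ (say each on half of the coordinates), with $A$ answering consistently with whichever one covers the queried points. The verifier then accepts with constant probability, and each accepting local view is explained by \emph{one of} the two assignments, but no single fixed $\Vi x$ explains more than about half of them; the bad event in \lref[Definition]{def:dpcpdistsound} for your fixed $\ViPi$ has constant probability. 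This is exactly why the paper works in the list-decoding regime: its idealized pair $(\fakex(R),\fakePi(R))$ is defined \emph{per random string}, as the unique member of a short list $g_1,\ldots,g_L$ of linear functions that agrees with $B$ at the random point $x_1$ (with $\bot$ when there is no such member or more than one), and the error is then a union bound over the list, each term being $O(1/\card\F)$. Your closing appeal to \lref[Claim]{claim:ambiguous} and the local decoding function of \lref[Definition]{def:agreedist} points in the right direction, but it contradicts your explicit choice of an $R$-independent $\Vi x$, and you never set up the object to which that claim would be applied.

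A second, related gap is structural: to make the list-decoding step available you need the large prover to answer with an entire \emph{linear function on a random subspace} $S=\spn(x_1,x_2,x_3,u_1,u_2,u_3,z,o_1,\ldots,o_{\ell+1})$ containing the test points, the output points, and fresh uniformly random points, so that the subspace-vs-point test (the paper's \lref[Lemma]{lem:Had-lintest}) yields a short list of linear functions with error $O(\card\F^{-1/6})$, and so that list elements that are linear but not valid $\QF$ encodings of accepting proofs can be eliminated with probability $O(1/\card\F)$ each (the paper's \lref[Lemma]{lem:QF-linear-soundness}, using the multiplication points $u_1,u_2,u_3$ and the point $z$ encoding a random quadratic from the initial verifier). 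In your protocol $A$ only returns a bundle of $O(\ell)$ designated values and is tied to $B$ by agreement at a single coordinate; a cheating $A$ can fabricate, per random string, an arbitrary internally consistent bundle, and BLR-style three-point tests ``tuned by repetition'' neither fit the one-query-per-prover format nor reach inverse-polynomial error in this regime. Your completeness, randomness, and answer-size accounting are fine, but the soundness proof as proposed would not go through without replacing the closest-codeword coupling by the per-$R$, list-based coupling and without the subspace-valued query to $A$.
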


These theorems are proved in \lref[Section]{sec:components}.

\subsection{Putting it together (Proof of {\lref[Theorem]{thm:maincons}})}

By NP-completeness of \CSAT\ it suffices to prove
\lref[Theorem]{thm:maincons} for \CSAT. Let $\Psi$ be an instance of
$\CSAT$ and let $N$ denote its size.  Let $\epsilon = 20\lg\lg \lg N/9\lg \lg N$ be a
parameter\footnote{In the construction, setting $\epsilon = 20\lg\lg\lg
  N/9\lg \lg N$ will prove the $poly\lg\lg N$-query PCP with inverse
  polynomially soundness error (as stated in the main theorem). It is
  to be noted that setting $\epsilon$ to a constant in $(0,1)$ will
  recover the DFKRS PCP.}. Note that $(\lg N)^\epsilon = (\lg \lg
N)^{20/9}$.  Let $M=2^{(\lg N)^{1-\epsilon}}=N^{1/(\lg\lg
  N)^{20/9}}$. Choose a prime number $p \in (M,2M)$\footnote{Since the
  procedure is allowed to run in polynomial time in $N$, it has enough
  time to examine every number in the range $(M,2M)$ and check if it
  is prime or not.} and let $\F=GF(p)$ be the
finite field of size $p$, which we fix for the rest of the proof. We
may assume wlog. that the predicate $\Psi$ has only AND and NOT
gates. Given this, we can arithmetize $\Phi$ to obtain an arithmetic
circuit $\Phi$ over $\F$ by replacing AND gates by multiplication gates and
NOT gates by $1-x$ gates. Thus, we can view the $N$-sized predicate
$\Phi$ as an $N$-sized arithmetic circuit over the field $\F$.

We construct a PCP for $\Psi$ with the required parameters by
constructing a dPCP for $\Phi$ with respect to the encoding
$\LDE_{\F,h_0}$ for some suitable choice of $h_0$. This dPCP is in
turn constructed by composing a sequence of dPCPs each with smaller
and smaller answer size. Each dPCP in the sequence will be obtained by
composing the prior dPCP (used as an outer dPCP) with an adequate
inner dPCP. The outermost dPCP as well as the inner dPCP in all but
the last step of composition will be obtained from
\lref[Theorem]{thm:sumcheck} by various instantiations of the
parameter $h$. The innermost dPCP used in the final stage of the
composition will be the dPCP obtained
from \lref[Theorem]{thm:hadamard}.\\

\noindent {\sf Stage I:} Let $n_0= N$ and $h_0 = |\F|^{0.1}=2^{0.1
  (\lg N)^{1-\epsilon}} =N^{0.1/(\lg\lg N)^{20/9}}$.  For this choice of $n_0,
h_0$ and $\F$ and $l_0=0$, let $\D^{(0)} := \D_0$ be the dPCP obtained
from \lref[Theorem]{thm:sumcheck}. This will serve as our outermost
dPCP. Let us recall the parameters of this dPCP. Observe that for this
setting $m_0 = \log_{h_0} n_0 = \lg N/\lg h_0= 10(\lg\lg N)^{20/9}$. $\D^{(0)}$
is a 2-prover decodable PCP with respect to the encoding
$\LDE_{\F,h_0}$ for the language $\ACSAT_\F$ with the following
parameters: On inputs $\Phi$ of size $N$ over $\F$, $\D^{(0)}$ has
randomness complexity $R_0= c\cdot m_0\lg |\F| = 10 c \lg N$, answer size $s_0 = 2(m_0 h_0)^2 < 2^{0.3 (\lg N)^{1-\epsilon}} =
N^{0.3/(\lg\lg N)^{20/9}}$ and distributional soundness error $1/|\F|^{0.1}$.

Let $\epsilon' = \epsilon/10 = 2\lg\lg\lg N/9\lg\lg N$. Let $i^*$ be the
smallest integer such that $1-\epsilon-i\epsilon' < 9\epsilon/80$. Note that
$i^* = O(1/\epsilon) =O(\lg\lg N/\lg\lg\lg N)$. For $i = 1, \dots, i^*$,
let $\D_i$ be the dPCP obtained by instantiating the dPCP in
\lref[Theorem]{thm:sumcheck} with parameters $h_i = 2^{(\lg
  N)^{1-\epsilon-i\epsilon'}} = N^{1/(\lg \lg N)^{20/9\cdot (1+i/10)}}$ and $l_i=2i$. We will run dPCP $\D_i$ on
inputs of instance size $n_i = 2^{3(\lg
  N)^{1-\epsilon-(i-1)\epsilon'}} = N^{3/(\lg \lg N)^{20/9\cdot (1+(i-1)/10)}}$. Thus, $m_i = \lg n_i/\lg h_i = 3
(\lg N)^{\epsilon'} = 3 (\lg\lg N)^{2/9}$. Hence, $\D_i$ is a
$(2i+1)$-answer 2-prover dPCP that on
inputs of instance size $n_i$ has randomness complexity $R_i = c m_i \lg |\F| = 3c
(\lg N)^{1-\epsilon+\epsilon'} = 3c\lg N/(\lg \lg N)^{2}$, answer size $s_i = 2(m_i h_i)^2
< 2^{3 (\lg N)^{1-\epsilon-i\epsilon'}} = N^{3/(\lg \lg N)^{20/9\cdot (1+i/10)}}$ and distributional soundness
error $\delta_i=1/|\F|^{0.1}$.

  Observe that our setting of parameters satisfy $s_{i-1} \leq n_i$ and
  $l_{i+1} +1=2(i+1)+1= (l_i +1)+2$. So the answer size of the predicates produced by dPCP
  $\D_i$ are valid input instances for dPCP $\D_{i+1}$, for
  $i=0,\dots,i^*-1$. Hence, we can compose them with each
  other. Consider the dPCPs $\D^{(i)}$ defined as follows:
\begin{align*}
\D^{(i)} := \D^{(i-1)} \twocomp \D_i, \qquad i = 1,\dots, i^*.
\end{align*}

Also note that the code $\LDE_{\F,h_i}$ has block length $|\F|^{m_i}$ and
distance $(1-O(m_ih_i)/|\F|) \geq 1- 1/\sqrt{|\F|}$. Thus, the
agreement parameter $\eta_i$ is at least $1/|\F|^{1/6}$.

Let $\D^{(I)}:= \D^{(i^*)}$ be the final dPCP obtained as above. Observe
that it is a $2i^* =O(1/\epsilon) = O(\lg\lg N/\lg\lg\lg N)$-prover dPCP with respect
to the encoding $\LDE_{\F,h_0}$ for the language $\ACSAT_\F$ with the
following parameters: On inputs $\Phi$ of size $N$ over $\F$,
the dPCP $\D^{(I)}$ has randomness complexity $R^{(I)}$, distributional soundness
error $\delta^{(I)}$ and
answer size $s^{(I)}$ (which are calculated below).
\begin{align*}
R^{(I)} &= R_0 + \sum_{i=1}^{i^*} \left(R_i  + \log
(\text{blocklength}(\LDE_{\F,h_i} )) \right)\\
& = 10 c\lg N + \sum_{i=1}^{i^*} \left(c m_i \lg |\F| +
m_i \lg |\F|\right)\\
& = 10 c \lg N + \sum_{i=1}^{i^*} 3(c+1) \frac{\lg N}{(\lg \lg N)^{2}}\\
& = 11 c \lg N \qquad [\text{ since } i^* \leq \lg\lg N ].\\
s^{(I)} &= s_{i^*}\\
& = 2^{3 (\lg N)^{1-\epsilon-i^*\epsilon'}}\\
& \leq 2^{3 (\lg N)^{9\epsilon/80}}\\
& = 2^{3(\lg\lg N)^{1/4}}.\\
\delta^{(I)} &= \delta_0 + \sum_{i=1}^{i^*} (\delta_i + \eta_i)\\
& = (i^*+1)\cdot \left(\frac{1}{|\F|^{0.1}} + \frac{1}{|\F|^{1/6}}\right)\\
& \leq \frac{1}{|\F|^{0.05}}.
\end{align*}

\noindent {\sf Stage II:} We now compose the dPCP $\D^{(I)}$ constructed in
{\sf Stage I} with another dPCP obtained from
\lref[Theorem]{thm:sumcheck} as follows. Let $\D_{II}$ be the dPCP
obtained from \lref[Theorem]{thm:sumcheck} by setting $h=2$ and
$l=2i^*$. This dPCP will run on inputs of instance size $n_{II}\geq s^{(I)} = 2^{3
  (\lg N)^{9\epsilon/80}}=2^{3(\lg\lg N)^{1/4}}$. Thus, $m_{II} = \lg n_{II}/\lg h = 3(\lg
N)^{9\epsilon/80} = 3 (\lg\lg N)^{1/4}$. Thus, $\D_{II}$ is a 2-query $(2i^*+1)$-answer dPCP  on inputs of instance size $n_{II}$, randomness $R_{II} = c
\cdot m_{II} \log |\F| = 3c (\lg N)^{1-71\epsilon/80} = 3c \lg N/(\lg\lg N)^{71/36}$, answer size $s_{II} =
2(m_{II} h_{II})^2 < O( (\lg N)^{9\epsilon/40}) =O((\lg \lg N)^{1/2})$ and distributional
  soundness error $\delta_{II}=1/|\F|^{0.1}$. Let $\D^{(II)}$ be the dPCP obtained
  by composing dPCP $\D^{(I)}$ obtained in the previous stage with dPCP
  $\D_{II}$, i.e., $\D^{(II)} = \D^{(I)} \twocomp \D_{II}$. The encoding $\LDE_{\F,h}$ has blocklength $|\F|^{m_{II}}$ and
    distance $1-O(m_{II}h)/|\F| \geq 1- 2/|\F|$. Hence, its agreement
    parameter is at least $1/|\F|^{1/6}$. Thus, dPCP $\D^{(II)}$ is a $2(i^*+1)$-prover dPCP with respect
  to the encoding $\LDE_{\F,h_0}$ for the language $\ACSAT_\F$ with
  the following parameters: On inputs $\Phi$ of size $N$ over $\F$,
the dPCP $\D^{(II)}$ has randomness complexity $R^{(II)} = R^{(I)} + R_{II} +
m_{II}\lg |\F|
= O(\lg N)$, distributional soundness
error $\delta^{(II)} = \delta^{(I)} + \delta_{II} +\eta_{II} \leq \frac{1}{|\F|^{0.05}}$ and
answer size $s^{(II)} = s_{II}= O(\sqrt{\lg\lg N})$. \\

\noindent {\sf Stage III:} We now compose dPCP $\D^{(II)}$ with the
Hadamard based dPCP constructed in \lref[Theorem]{thm:hadamard} to
obtain our final dPCP. Let $\D_{III}$ be the Hadamard based dPCP
constructed in \lref[Theorem]{thm:hadamard} with $l=2(i^*+1)$, i.e.,
$\D_{III} = \D_{\QF,\F,2(i^*+1)}$. $\D_{III}$ will be run on instances
of size $n_{III} = O(\sqrt{\lg\lg N})$. Thus, $\D_{III}$ is a
2-prover $(2i^*+3)$-answer dPCP with respect to the encoding $\QF_\F$
for the language $\ACSAT$ with the following parameters: on inputs of
instance size $n_{III}$, it has
randomness complexity $R_{III} = O(n^2_{III} \lg |\F|) = O(\lg N)$, answer size $s_{III} = O(i^*)$ and distributional soundness error
$\delta_{III} = 1/|\F|^{0.1}$. Furthermore, the blocklength of the
encoding is $|\F|^{O(n_{III}^2)}$ and has agreement parameter
$1/\sqrt{|\F|}$. The final dPCP $\D^{(III)}$ is
given by composing $\D^{(II)}$ with $\D_{III}$, i.e., $\D^{(III)} =
\D^{(II)} \twocomp \D_{III}$. Note that $s^{(II)} \leq
n_{III}$.
Thus, the final dPCP $\D^{(III)}$ is a $2(i^*+2)$-prover dPCP with respect
  to the encoding $\LDE_{\F,h_0}$ for the language $\ACSAT_\F$ with
  the following parameters: On inputs $\Phi$ of size $N$ over $\F$,
the dPCP $\D^{(III)}$ has randomness complexity $R^{(III)} = R^{(II)} + R_{III}
+ O(n_{III}^2 \lg |\F|)
= O(\lg N)$, distributional soundness
error $\delta^{(III)} = \delta^{(II)} + \delta_{III} +\eta_{III} \leq  \frac{1}{|\F|^{0.05}}$ and
answer size $s^{(III)} = s_{III}= O(i^*) = O(1/\epsilon)$.

Summarizing, we have constructed a $O(\lg\lg n/\lg\lg\lg n)$-prover
dPCP $\D^{(III}) for \ACSAT_\F$ with respect to the encoding
$\LDE_{\F,h_0}$ with the following parameters: on inputs $\Phi$ of
size $N$,  $\D^{(III)}$ has randomness complexity $O(\lg N)$, answer
size $O(\lg\lg N/\lg\lg\lg N)$ and and distributional soundness error
$N^{1/\poly\lg\lg N}$. This dPCP implies a PCP for \CSAT\ with
parameters as stated in \lref[Theorem]{thm:maincons}. Note that the
answer size is larger by a factor of $\log |\F| = \lg N/\poly\lg\lg N$
since the size of the output predicate is measured in terms of its
Boolean circuit complexity as opposed to arithmetic complexity.
\qed

\subsection{Optimality of our parameter choices}\label{sec:optimal}

In this section, we show the optimality of the parameters (upto
constants) obtained in our \lref[Theorem]{thm:maincons} using the
Reed-Muller based dPCP (\lref[Theorem]{thm:sumcheck}) and the Hadamard based
(\lref[Theorem]{thm:hadamard}) dPCP as building blocks in our
composition paradigm. Of course, if one had an improved building
block, then one can potentially improve on the construction.

Let $N$ be the size of the instance and let $\delta$ be the
soundness error of the construction. Define parameter $\epsilon$ as
follows: $\log(1/\delta) = (\log N)^{1-\epsilon}$. Consider any
sequence of compositions of the Reed-Muller based dPCP and Hadamard
based dPCP. Observe that the size of the Hadamard based dPCP is
exponential in its input instance size. Hence, the first sequence of
composition steps must involve only the Reed-Muller based dPCP wherein
the size of the instance is sufficiently reduced to allow for
composition with the Hadamard based dPCP.

We first argue that one needs to perform at least $\Omega(1/\epsilon)$ steps
of composition of the Reed-Muller based dPCP so that the instance size
is sufficiently small to apply the Hadamard based dPCP.  Suppose we
perform $t$ steps of composition of the Reed-Muller based dPCP wherein
at the $i$-th step the instance size drops from $N_{i-1}$ to $N_i$
(here, $N_0=N$). Since the
error at each step is at most $\delta$, the field size used in each stage of
the Reed-Muller based dPCP must be at least $1/\delta = 2^{{\log
    N}^{1-\epsilon}}$. To maintain polynomial size of the overall
construction, each of the Reed-Muller based dPCPs
used in the $t$ steps of composition must satisfy $|\F_i|^{m_i} =
N^{O(1)}$ where $F_i$ and $m_i$ are the field and dimension used in
the construction of the Reed-Muller based dPCP used in the $i$-th stage
of the composition. Hence, $m_i \leq O( (\log N)^\epsilon)$. Thus,
the reduction in size in the $i$-th step is at most $N_i \geq N_{i-1}^{1/m_i}=
N_{i-1}^{1/(\log N)^\epsilon}$, which implies inductively that the instance
  size after $t$ steps of composition of the Reed-Muller based dPCP is
  at least $2^{{\log N}^{1-t\epsilon}}$. Hence, to obtain a size that allows
    for composition with the Hadamard-based dPCP we must have at least
    $t = \Omega(1/\epsilon)$ steps of composition.

We now account for the total randomness used in these $t =
\Omega(1/\epsilon)$ steps of composition. Since the error in each step
is at most $\delta$, the randomness uses in each step must be at least
$\log(1/\delta) = (\log N)^{1-\epsilon}$. Hence, the total randomness
used in these $t$ steps is at least $t\cdot \log (1/\delta) =
\Omega(1/\epsilon) \cdot (\log N)^{1-\epsilon}$. Since the size of the
entire construction is at most polynomial we must have that
$1/\epsilon \cdot (\log N)^{1-\epsilon}  = O(\log N)$. Solving for
$\epsilon$\footnote{$1/\epsilon \cdot (\log N)^{1-\epsilon}  = O(\log
  N)$ implies that $1/\epsilon \leq O((\log N)^{\epsilon}$ or
  equivalently $1/\epsilon \cdot \log (1/\epsilon) \leq O(\log
\log N)$. This implies that $\epsilon \geq \log \log \log N/\log \log
N$.}, we obtain that $\epsilon \geq \log\log \log N/\log \log
N$. Hence, the best soundness error obtained by a sequence of
composition involving the Reed-Muller and Hadamard based dPCPs is at
least $\delta = 2^{-(\log N)^{1-\epsilon}} = N^{1/\poly \log \log N}$
proving optimality of the \lref[Theorem]{thm:maincons} construction.

\section{Construction of specific dPCPs}\label{sec:components}

In this section, we construct our two building blocks; the
Hadamard-based dPCP (\lref[Theorem]{thm:hadamard}) and the
Reed-Muller-based dPCP (\lref[Theorem]{thm:sumcheck}). Our construction
proceeds by adapting previous constructions of these objects which
guaranteed only list-decoding soundness. We obtain distributional
soundness by observing that if the dPCP satisfies list-decoding
soundness and the encoding has very good distance (nearly 1), then the
dPCP satisfies distributional soundness.

\subsection{Preliminaries}

Let $\F$ be a finite field.

\begin{definition}[Hadamard]
The Hadamard encoding of a string $a\in \F^m$ is a function $h:{\F^m}\to\F$ defined by
\[ \forall \alpha \in \F^m,\quad h(\alpha) = \sum_i \alpha_i a_i.
\]
\end{definition}

\begin{definition}[Quadratic Hadamard]
The Quadratic Hadamard encoding (\QF\ encoding for short) of a string $a\in \F^m$, denoted $QH_a$, is defined to be the Hadamard encoding of the string $w = a \circ b \in \F^{m+{m}^{2}}$ where $b \in \F^{m ^ 2}$ is defined by $b_{im+j}= a_i a_j$ for all $1\le i,j,\le m$ (i.e. $b= a\otimes a$).
\end{definition}
Let $\e_i\in \F^{m+m^2}$ be the unit vector with $1$ on the $i$th coordinate and zeros elsewhere. Observe that if $h=QH_a$ is the quadratic functions encoding of $a$, then for each $1\le i,j\le m$,
\[ h(\e_i) = a_i \qquad\ve \qquad h(\e_{i\cdot m + j}) = a_ia_j  .
\]
Let $H\subset\F$ and denote $h = \card H$.
Fix an arbitrary 1-1 mapping $H\leftrightarrow [h]:=\set{0,1,\ldots,h-1}$. We refer to elements in $H$ as integers in $[h]$ relying on this mapping. For any $m>0$ we map $x=(x_1,\ldots,x_m)\in H^m$ to $\tilde x = x_1 + x_2h + \ldots + x_{m} h^{m-1} + 1 \in [h^m]$.
\begin{definition}[Low Degree Extension]\label{def:LDE}
Given a string $a\in \F^n$, we define its {\em Low Degree Extension} with respect to $H\subseteq \F$, denoted $LDE_a$, as follows. Let $m$ be the smallest integer such that $h^m \ge n$. Let $f:\F^m\to \F$ be the unique function whose degree in each variable is at most $h$, defined on $H^m$ by %
\[\forall x \in H^m,\quad f(x) =
\left\{
  \begin{array}{ll}
    a_{\tilde x} & \tilde x\in [n]; \\
    0 & n< \tilde x \le  h^m .
  \end{array}
\right.
 \]%
and extend $f$ to $\F^m$ by interpolation, and set $LDE_a=f$.
\end{definition}

\begin{claim}\label{claim:two-LDEs}
Let $a\in \F^{h^{m_1}}$, and let $b \in \F^{h^{m_2}-h^{m_1}}$, so that $a\circ b\in \F^{h^{m_2}}$.
If $g_1 = LDE_a$ and $g_2 = LDE_{a\circ b}$ then
\[\forall x_1,\ldots,x_m \in \F^m,\quad g_1(x_1,\ldots,x_{m_1}) = g_2(x_1,\ldots,x_{m_1},\bar 0) .\]
\end{claim}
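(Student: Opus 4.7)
The plan is to reduce the identity to the uniqueness part of \lref[Definition]{def:LDE}. Both sides of the claimed equality are polynomials in $x_1,\ldots,x_{m_1}$ whose degree in each variable is at most $h$: the left side by construction of $g_1 = LDE_a$, and the right side because restricting a polynomial whose individual degrees are at most $h$ by substituting constants for some of the variables can only reduce the individual degrees in the remaining ones. So by the uniqueness clause in \lref[Definition]{def:LDE} (applied in $m_1$ variables), it suffices to check the equality pointwise on $H^{m_1}$.

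For the pointwise check, fix $(x_1,\ldots,x_{m_1}) \in H^{m_1}$. The key observation is that the index associated with this point under the $m_1$-dimensional encoding is the same as the one associated with $(x_1,\ldots,x_{m_1},0,\ldots,0)$ under the $m_2$-dimensional encoding: the extra zero coordinates contribute nothing to the positional sum, so both are equal to $\tilde x \defeq x_1 + x_2 h + \cdots + x_{m_1} h^{m_1-1} + 1$. Since this value lies in $[h^{m_1}]$, which indexes exactly the ``$a$'' part of the concatenation $a\circ b$, we have $(a\circ b)_{\tilde x} = a_{\tilde x}$. Thus
\[
g_2(x_1,\ldots,x_{m_1},\bar 0) \;=\; (a\circ b)_{\tilde x} \;=\; a_{\tilde x} \;=\; g_1(x_1,\ldots,x_{m_1}),
\]
as required on $H^{m_1}$.

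Combining the two steps, $g_1(x_1,\ldots,x_{m_1})$ and $g_2(x_1,\ldots,x_{m_1},\bar 0)$ are two polynomials of the correct individual degrees that agree on all of $H^{m_1}$, hence agree identically on $\F^{m_1}$. There is essentially no obstacle here; the only point worth being careful about is the invocation of ``uniqueness'' of the LDE, which follows from variable-by-variable univariate Lagrange interpolation over the $h$-point set $H$, applied to both sides as polynomials in the remaining $m_1$ variables.
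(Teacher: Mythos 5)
Your proof is correct and follows essentially the same route as the paper's: verify agreement pointwise on $H^{m_1}$ by observing that the index $\tilde x$ is unchanged by appending zero coordinates and lands in the $a$-part of $a\circ b$, then conclude equality on all of $\F^{m_1}$ since both sides have individual degree at most $h$ in each variable. The only difference is that you spell out the index bookkeeping and the interpolation/uniqueness step slightly more explicitly, which the paper leaves implicit.
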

\begin{proof}
For each $(x_1,\ldots,x_m)\in H^m$ we have
\[g_1(x_1,\ldots,x_m) = a_{\tilde x} = (a\circ b)_{\tilde x} =
g_2(x_1,\ldots,x_m,\bar 0).\]
Thus, $g_1$ and $g_2$ coincide for all points in $H^m$.
As a function of $x_1,\ldots,x_m$, $g_1$ and $g_2(x_1,\ldots,x_m,\bar
0)$ have degree at most $h$ in each variable, so they must coincide
for all points in $\F^m$ too.
\end{proof}

\begin{definition}[Curve]\label{def:curve}
Given $k < |\F|$ and a sequence of $k+1$ points $\tau = (z_0,\ldots,
z_k)$ in $\F^m$, define
\[
curve_{\tau}:\F\to\F^m
\]
to be the polynomial function of degree at most $k$ which satisfies $curve_{\tau}(i) = z_i$ for $i=0,\ldots,k$.\\
\end{definition}

\begin{definition}[Manifold]\label{def:manif}
Given $\tau = (z_1,\ldots, z_k)\in \F^m$, and three points $x_1,x_2,x_3\in \F^m$ define $\gamma_{z_1,\dots,z_k;x_1,x_2,x_3}:\F^4\to\F^m$ to be the following degree $k+1$ function
\[
\gamma_{z_1,\dots,z_k;x_1,x_2,x_3}(t_0,t_1,t_2,t_3) = t_0 \cdot curve_{x_1,z_1,\dots,z_k}(t_1) + t_2 x_2 + t_3x_3.
\]
\end{definition}
Observe that $\gamma_{z_1,\dots,z_k;x_1,x_2,x_3}$ contains the points $z_1,\ldots,z_k$ and $x_1,x_2,x_3$.

We now state a low degree test, which has appeared in several places in the literature \cite{RazS1997,DinurFKRS2011,MoshkovitzR2010b}. First, a little notation. Supposed that $Q:\F^m\to\F$ is a function of degree $\le d$, and $\gamma_{z_1,\dots,z_k,x_1,x_2,x_3}(t_0,t_1,t_2,t_3) = t_0 \cdot curve_{x_1,z_1,\dots,z_k}(t_1) + t_2 x_2 + t_3x_3$ is a manifold in $\F^m$ of degree at most $k+1$. Then the function $Q\circ \gamma:F^4\to\F$ has degree at most $d(k+1)$ and can be specified by ${d(k+1)\choose 4}$ coefficients. Given a manifold $\gamma$ and a function $M(\gamma):\F^4\to\F$, we denote for each $x \in Im(\gamma)$
\[M(\gamma)[x] := M(\gamma)(t_1,\ldots,t_4)\quad \hbox{for }t_1,\ldots,t_4\hbox{ such that }\gamma(t_1,\ldots,t_4)=x.\] %
The following lemma
appears in~\cite[Lemma~4.4, Section~10.2 (in
appendix)]{MoshkovitzR2010b} and a similar lemma can be found in \cite[Lecture~9]{Harsha2010}.
\begin{lemma}[Low Degree Test - Manifold
  vs. Point]\label{lem:LDT}\phnote{Check Citation??}
Let $m,k,d>0$, let $\delta = (mkd/\card\F)^{1/8}$, and let $z_1,\ldots,z_k\in \F^m$ be fixed. Let $Q:\F^m\to\F$ be an arbitrary function, supposedly of degree $\le d$. There exists a list of $L \le 2/\delta$ degree $d$ functions $Q_1,\ldots,Q_L:\F^m\to\F$ such that the following holds.
Let $\Gamma$ be a collection of manifolds,
\[ \Gamma = \set{ \gamma_{z_1,\dots,z_k,x_1,x_2,x_3}(t_0,t_1,t_2,t_3) = t_0 \cdot curve_{x_1,z_1,\dots,z_k}(t_1) + t_2 x_2 + t_3x_3 }_{x_1,x_2,x_3}
\]
one per choice of $x_1,x_2,x_3\in \F^m$.
Let $M:\Gamma\to \F^{d (k+1)\choose 4}$ specify for each $\gamma$ the coefficients of a degree-$d(k+1)$ function supposedly equal to $Q\circ \gamma:\F^4\to\F$.
Then,

\[ \Pr_{z,\gamma\ni z} [ Q(z) = M(\gamma)[z]  \ve Q(z) \not \in \set{Q_1(z),\ldots,Q_L(z)} ] \le \delta.
\]
\end{lemma}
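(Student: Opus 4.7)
The plan is to reduce this manifold-vs-point low-degree test to a standard plane-vs-point list-decoding low-degree test (such as Raz-Safra or Arora-Sudan), from which both the list $Q_1,\ldots,Q_L$ and the agreement bound can be extracted. The key observation is that the manifold family embeds a rich collection of affine 2-planes: by freezing $t_0,t_1$ in $\gamma_{z_1,\ldots,z_k,x_1,x_2,x_3}$, one obtains the affine plane $\set{c(t_0,t_1) + t_2 x_2 + t_3 x_3 : t_2,t_3 \in \F}$ spanned by the random directions $x_2,x_3$, where $c(t_0,t_1)$ is the corresponding point on $t_0 \cdot \mathrm{curve}_{x_1,z_1,\ldots,z_k}(t_1)$. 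If $Q$ truly has total degree $d$, then its restriction to such a 2-plane has degree $d$ in each of $t_2,t_3$; the same holds for the restriction of $M(\gamma)$, since $M(\gamma)$ is a total-degree $d(k+1)$ polynomial on $\F^4$.

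First I would extract the list. From $M$ I induce a plane assignment $M'$ as follows: to each random affine 2-plane $P$ in $\F^m$ of the form $\set{z + t_2 x_2 + t_3 x_3}$, I associate the degree-$(d,d)$ polynomial obtained by restricting $M(\gamma)$ to the appropriate 2-plane slice, where $\gamma$'s remaining parameters are chosen (with $x_1$ uniform and $t_0,t_1$ placing $c(t_0,t_1)=z$) in a natural manner. Applying a standard plane-vs-point list-decoding low-degree test to the pair $(Q,M')$ yields the list $Q_1,\ldots,Q_L$ of degree-$d$ polynomials on $\F^m$ with $L \leq 2/\delta$, together with the guarantee that for random $z$ and random plane $P \ni z$, the probability that $M'(P)[z]=Q(z)$ while $Q(z) \notin \set{Q_i(z)}_i$ is at most some polynomial in $md/|\F|$.

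Next I would lift this back to the manifold test. Suppose $Q(z)=M(\gamma)[z]$. If $M(\gamma)$ disagrees substantially with $Q \circ \gamma$ on $\F^4$, then by Schwartz-Zippel (both are of total degree at most $d(k+1)$) the single-point agreement event at $z$ can happen only on a set of measure $O(d(k+1)/|\F|)$, which gets absorbed into the error $\delta$. Outside this bad set, $M(\gamma)$ agrees with $Q \circ \gamma$ on the entire 2-plane slice through $z$, so $M'(P)[z]=Q(z)$ for the induced plane $P \ni z$, and the plane-vs-point conclusion from Step 1 then forces $Q(z)\in\set{Q_i(z)}_i$.

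The main obstacle is the careful bookkeeping required to translate the manifold-vs-point statement into a clean plane-vs-point statement without inflating the soundness error. The $1/8$ exponent in $\delta=(mkd/|\F|)^{1/8}$ reflects compounded polynomial losses from: the underlying plane-vs-point test's error exponent; the Schwartz-Zippel step that extends single-point agreement on $\gamma$ to agreement on a dense slice; and an averaging/union bound ensuring that the distribution over 2-planes induced by random manifolds (which involves the extra free parameters $t_0,t_1$ and the curve through the fixed $z_i$) is close enough to the uniform distribution over affine 2-planes for the standard theorem to apply. A subtle point that must be verified is that the manifold assignment cannot adversarially behave inconsistently across the different 2-plane slices contained in a single $\gamma$, but the total-degree constraint on $M(\gamma)$ makes these slices mutually rigid, which is what drives the reduction.
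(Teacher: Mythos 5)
You should first note that the paper does not actually prove \lref[Lemma]{lem:LDT}: it is imported wholesale from Moshkovitz--Raz~\cite[Lemma~4.4, \S10.2]{MoshkovitzR2010b} (see also the pointer to \cite{Harsha2010}), so there is no in-paper argument to match. Your strategy --- slice the manifold along the $(t_2,t_3)$ directions to get random affine planes through the queried point and invoke a plane-vs-point list-decoding theorem --- is indeed the spirit of the cited analysis, but as written the sketch has genuine gaps rather than just bookkeeping to fill in.

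The first gap is that your induced plane table $M'$ is not a legitimate input to the standard plane-vs-point theorem. The restriction of $M(\gamma)$ to the slice $t_0=1,t_1=0$ depends on the whole parametrization $(x_1,x_2,x_3)$, and many manifolds (different base points $x_1\in P$, different spanning pairs) induce the same affine plane $P$; the standard theorem assumes one fixed polynomial per plane. Picking canonical parameters ``in a natural manner'' does not help, because the event you must bound involves the manifold actually queried, whose base point is the test point $z=x_1$ itself, not the canonical one; an adversarial $M$ may answer consistently with $Q$ at $z$ on the manifold through $z$ while being unrelated to the canonical plane polynomial. Handling parametrization-dependent tables is exactly the technical content of the Raz--Safra/Moshkovitz--Raz consolidation arguments, and your ``averaging/union bound over distributions of planes'' does not substitute for it. The second gap is the Schwartz--Zippel lifting step: $Q$ is an \emph{arbitrary} function, so $Q\circ\gamma$ is not a degree-$d(k+1)$ polynomial and you cannot conclude that agreement of $M(\gamma)$ with $Q$ at the single point $z$ extends to agreement on a dense slice. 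The only low-degree pair you may compare is $M(\gamma)$'s slice restriction against the plane polynomial, and then the checking point $z$ is pinned to the base point of the parametrization rather than being uniform in the slice, so the ``absorbed into $\delta$'' step needs a different argument. Relatedly, the subtlety you flag (inconsistency among slices within one $\gamma$) is not the danger --- $M(\gamma)$ is a single polynomial, so its slices are automatically consistent; the danger is inconsistency across different manifolds sharing a plane or a point, which is precisely what the cited proof works to control.
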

Finally, we state the following lemma, which gives a probabilistic verifier that inputs a predicate $\Phi:\F^n\to\bits$ and a list of functions $F_1,\ldots,F_\ell:\F^n\to\F$, and checks that $(a,b)$ are such that $\Phi(a)=1$ and $b_i = F_i(a)$ for each $i=1,\ldots,{\ell}$ ($b=F(a)$ for short).

\begin{lemma}[Initial Verifier]\label{lem:had-easy-ver}
Given a predicate $\Phi:\F^{{n}}\to\bits$ and functions $F_1,\ldots,F_{{\ell}}:\F^{{n}}\to\F$ whose total circuit complexity is $N$, there is a randomized verifier $V_0$ that uses $O(\log\card\F+\log N)$ random bits and generates a quadratic polynomial $p:\F^m\to\F$ on $m = O(N)$ variables such that, given access to a proof $\pi = a\circ b\circ s \in \F^m$,
\begin{itemize}
\item If $\Phi(a)=1$ and $F_i(a)=b_i$ for each $i=1,\ldots,\ell$, then there is a {\em unique} string $s = s(a,b)$ such that \[\Pr_{p\sim V_0} [p(a,b,s) = 0]=1.\]
\item  If either $\Phi(a)=0$ or $b_i\neq F_i(a)$ for some $1\le i\le \ell$, or $s\neq s(a,b)$, then
    \[\Pr_{p\sim V_0} [p(a,b,s) = 0]\le \frac 2{\card \F}.\]
\end{itemize}
\end{lemma}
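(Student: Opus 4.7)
My plan is to turn the joint condition ``$\Phi(a)=1$ and $b=F(a)$'' into a system of $K=O(N)$ quadratic equations over $\F$ in the variables $z=(a,b,s)\in\F^m$ for some $m=O(N)$, and then collapse that system into a single quadratic test by taking a random $\F$-linear combination of the equations with coefficients drawn from a small-bias distribution of bias $1/\card\F$.

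Arithmetization is standard: introduce one fresh $\F$-valued variable for every internal wire of the arithmetic circuits computing $\Phi$ and $F_1,\ldots,F_\ell$ and collect these into a vector $s_{\mathrm{int}}\in\F^{O(N)}$, which will play the role of $s$. A multiplication gate $w=uv$ contributes the quadratic equation $w-uv=0$; addition, scalar multiplication, and constant gates contribute linear equations; and the $1+\ell$ output wires contribute the linear equations $w_{\mathrm{out}(\Phi)}-1=0$ and $w_{\mathrm{out}(F_i)}-b_i=0$. Let $q_1,\ldots,q_K$ denote the full list, with $K=O(N)$ and each $q_i$ quadratic in $z=(a,b,s_{\mathrm{int}})$. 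Using its random tape, the verifier $V_0$ then samples $\lambda=(\lambda_1,\ldots,\lambda_K)\sim\mathcal{D}$ from a $(1/\card\F)$-biased distribution over $\F^K$ and outputs the quadratic polynomial
\[ p_\lambda(z)\;=\;\sum_{i=1}^{K}\lambda_i\,q_i(z). \]
Since each $\lambda_i\in\F$ is a scalar and each $q_i$ is quadratic, $p_\lambda$ is quadratic in $z$, as required. Explicit $(1/\card\F)$-biased distributions on $\F^K$ with seed length $O(\log K+\log\card\F)=O(\log N+\log\card\F)$ are standard: one builds them over a prime subfield via Reed--Solomon-type constructions and lifts to general $\F$ using trace-composed additive characters, so the randomness budget is met.

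For completeness, if $\Phi(a)=1$ and $F_i(a)=b_i$ for all $i$, take $s(a,b)$ to be the vector of internal wire values obtained by evaluating the two circuits forward on $a$. Every $q_i$ then vanishes at $z=(a,b,s(a,b))$, so $p_\lambda(z)=0$ for every $\lambda$. Uniqueness follows from determinism of the circuits: any $s'\ne s(a,b)$ disagrees with the forward evaluation at some internal wire, which then violates the corresponding gate equation $q_g$, and the soundness estimate below forces the test to fail with positive probability, contradicting $\Pr=1$. For soundness, if $(a,b,s)$ violates at least one $q_{i^\ast}$, then $c:=(q_1(z),\ldots,q_K(z))\in\F^K$ is a nonzero vector. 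Setting $X=\langle\lambda,c\rangle$ and using the Fourier identity $\Pr_\lambda[X=0]=\card\F^{-1}\sum_\chi\E_\lambda[\chi(X)]$ together with the bias estimate $|\E_\lambda[\chi(X)]|\le 1/\card\F$ for every nontrivial additive character $\chi$ of $\F$, we obtain
\[ \Pr_{\lambda\sim\mathcal{D}}[p_\lambda(z)=0]\;\le\;\frac{1}{\card\F}\Bigl(1+(\card\F-1)\cdot\tfrac{1}{\card\F}\Bigr)\;\le\;\frac{2}{\card\F}. \]

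The only delicate ingredient is producing the $(1/\card\F)$-biased distribution over $\F^K$ with the prescribed seed length for an arbitrary (possibly non-prime) finite field $\F$. Everything else---arithmetization, the quadratic form of $p_\lambda$, perfect completeness, and uniqueness of $s(a,b)$---is routine.
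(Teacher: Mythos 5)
Your proposal is correct and takes essentially the same route as the paper's (sketched) proof: $s$ carries all intermediate gate values, each gate yields a linear or quadratic consistency equation over $(a,b,s)$, and $V_0$ outputs a pseudorandom $\F$-linear combination of these equations, which is again quadratic. The only cosmetic difference is that the paper realizes the pseudorandom combination via an error-correcting code while you use a $(1/\card\F)$-biased distribution; these are interchangeable instantiations of the same step, both giving the $1/\card\F+1/\card\F\le 2/\card\F$ soundness bound within the $O(\log N+\log\card\F)$ randomness budget.
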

This verifier would be ideal except for one drawback: in order to evaluate $p(\pi)$ it makes an {\em unbounded} number of queries to the proof $\pi$.

\begin{proof}(sketch)~
The proof of this lemma is standard: $s$ will specify the values of all of the intermediate gates of the circuit computing $\Phi$ as well as the circuits computing $F_1,\ldots,F_\ell$. The validity of each intermediate computation step can be checked by a quadratic or linear equation over the entries in $s$. The verifier $V_0$ will use its randomness to generate a (pseudo)random sum of these equations (using an error correcting code, details are omitted). This can be expressed as a quadratic polynomial over
the set of new variables.
\end{proof}
%

%
%
%
%

\subsection{Hadamard based dPCP}
In this section we construct a dPCP based on the Hadamard encoding, given formally in the following lemma.

\ \\
\noindent {\bf \lref[Theorem]{thm:hadamard} (Restated)} (Hadamard based dPCP) {\em For any finite field $\F$, and
  any $\ell>0$, there is a 2-prover $\ell+1$-answer decodable PCP $\D_{\QF,\F}$
  with respect to the encoding $\QF_\F$ for the language $\ACSAT_\F$
  with the following parameters: On inputs (i) a predicate
  $\Phi:\F^{n}\to\bits$ and (ii) functions
  $F_1,\ldots,F_{\ell}:\F^{n}\to\F$ given by arithmetic circuits over $\F$ whose total size is $N$, the dPCP $\D_{\QF,\F}$ has
\begin{itemize}
\item randomness complexity $O(N^2 \log |\F|)$,
\item answer size $s,s'=O(\ell)$,
\item perfect completeness, and
\item distributional soundness error $\le 1/|\F|^{0.1}$.
\end{itemize}}\ \\

We define the verifier for \lref[Theorem]{thm:hadamard}.

 \paragraph{Decoder Protocol}
On input $\Phi,F_1,\ldots,F_\ell; j,r$, let $V_0$ be the verifier from \lref[Lemma]{lem:had-easy-ver}, and let $\pi\in\F^m$ be the proof that $V_0$ expects. Our decoder $V$ expects the $B$ prover to hold the \QF\ encoding
of $\pi$ and the $A$ prover is expected to give restrictions of $B$ to specified subspaces. It is known that with $O(1)$ queries into $B$ the decoder could check that $B$ is indeed a \QF\ encoding of a valid proof $\pi$, as well as decode any quadratic function of $\pi$. The $A$ prover is used to simulate this while making only one query to $A$ and one to $B$. This is done by computing several query points for the former test, and then taking a random subspace $S$ containing these points as well as a couple of uniformly random ones.

The low degree test (see \lref[Lemma]{lem:Had-lintest} below) guarantees that if $A$'s answer on the subspace $S$ is consistent with $B$'s answer on a random point in $S$, then $B$ is linear (in other words, it is a Hadamard encoding of some string). The decoder will also perform some other tests on values in $S$ which ensure that moreover $B$ is a valid \QF\ encoding of a valid $\pi$.

\begin{enumerate}
\item Computing the query points.
\begin{enumerate}
\item Choose $\beta,\gamma\in \F^m$ uniformly at random, and define
  $u_1,u_2,u_3 \in \F^{m+m^2}$ as follows: \[u_1 = \sum_{i=1}^m \beta_i \e_{i},\quad u_2 = \sum_{i=1}^m \gamma_i\e_{i}, \quad u_3 = \sum_{i=1}^m\sum_{i'=1}^m \beta_i \gamma_{i'}\e_{im+i'}.\] (These are points for the multiplication test: if we already know that $B$ is a Hadamard encoding of some string, then this test will ensure it is moreover a \QF\ encoding. )
\item Draw a random quadratic polynomial $p(t_1,\ldots,t_m) = \alpha_0 + \sum_i \alpha_it_i + \sum_{i,i'}\alpha_{ii'}t_{i}t_{i'}$ from the distribution of $V_0$ (from \lref[Lemma]{lem:had-easy-ver}). To check that $p(\pi)=0$ we define $z\in \F^{m_2}$ (for $m_2=m+m^2$) by
   \[ z = \sum_{i=1}^m \alpha_i \e_i + \sum_{i=1}^m\sum_{i'=1}^m \alpha_{ii'}\e_{im+i'}\]
   (If $B$ were equal to $QH_\pi$ for some string $\pi$ then $B(z) + \alpha_0 = p(\pi)$ so the value of $B(z)$ could be used to check that $p(\pi)=0$).
\item For each $i = 1,\ldots,{\ell}$ let $o_i = \e_{i+{n}}$. Let
  $o_{\ell+1}$ be the point in $\F^{m_2}$ corresponding to $j=
  (\delta_1,\ldots,\delta_{{n}+{n}^2})\in \F^{{n}+{n}^2}$\phnote{Why
    is $j$ in $\F^{n+n^2}$ and not $\F^{m+m^2}$} \inote{because the dPCP needs to return a value in the QF encoding of the input, whose length is $n$, and not of the proof $\pi$.}. (We are using here the fact that $\pi=a\circ b\circ s$ so the $QF$ encoding of $\pi$ contains in it the QF encoding of $a$. Explicitly,
  set \[o_{\ell+1}  = \sum_{i=1}^{{n}} \delta_i \e_i +
  \sum_{i=1}^{{n}} \sum_{i'=1}^{{n}} \delta_{i n + i'} \e_{im +
    i'}.\]\phnote{What is $n_1$?}\inote{changed $n_1$ to $n$}
    (These are the points to be output by the decoder.)
\end{enumerate}
\item Choose $x_1,x_2,x_3\in\F^{m_2}$ uniformly at random, consider
  the $(l+8)$-dimensional linear subspace
\[S = \spn(x_1,x_2,x_3,\;u_1,u_2,u_3,\;z,\;o_1,\ldots,o_{\ell+1})\subset \F^{m_2}.\]
We assume there is a canonical mapping that maps each subspace $S$ to a particular basis $\vec {v_S}=\set{v_1,\ldots,v_{\ell+8}}\subset\F^{m_2}$ for $S$ and send it to the $A$ prover and let $A(\vec {v_S}) \in \F^{\ell+8}$ be the prover's answer. The answer specifies a linear function $\as : S \to \F$ defined by
\[\forall t_1,\ldots,t_{\ell+8}\in \F,\qquad \as\big(\sum_i t_i v_i\big) := \sum_{i=1}^{\ell+8} t_i \cdot A(\vec {v_S})_i \]
\item Send $x_1$ to the $B$ prover and let $B(x_1)$ be its answer.
\item Reject unless \begin{enumerate}
\item\label{itm:cons} $\as(x_1) = B(x_1)$, and
\item\label{itm:t3} $\as(z) + \alpha_0=0$.
\item\label{itm:t2} $\as(u_1)\as(u_2)=\as(u_3)$, and
\end{enumerate}
\item Output $\as(o_1),\ldots,\as(o_\ell)$.
\end{enumerate}
The decoding PCP will follow the protocol above, using its randomness $R$ for selecting $p,\beta,\gamma,x_1,x_2,x_3$, and generate an output $(q,\varphi,f,g)$ as follows:
\begin{itemize}
\item The queries $q$ are $q_0=\vec {v_S}$ to the first prover and $q_1=x_1$ to the second prover.
\item The predicate $\varphi$ - rejects iff at least one of the tests in Items~\ref{itm:t3} and \ref{itm:t2} reject.
\item The function $g$ computes $\as(x_1)$ (for the consistency test in \lref[Item]{itm:cons}).
\item The functions $f_1,\ldots,f_{\ell+1}$ - compute $\as(o_i)$ for $i=1,\ldots,\ell+1$.
\end{itemize}

\begin{lemma}[Perfect Completeness]
The verifier has perfect completeness. Namely, for every $a\in \Phi^{-1}(1)$, there is a proof $\Pi$ such that for every $j\in \F^{{n}+({n})^2}$ and every random string $R\in \bits^{O(N^2\log\card\F)}$, the verifier on input $(\Phi,F;j,R)$ accepts and outputs $F_1(a),\ldots,F_{\ell}(a),QH_a(j)$.
\end{lemma}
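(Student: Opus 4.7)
The plan is to exhibit the honest proof $\Pi = (A,B)$ and verify that each step of the decoder succeeds deterministically. Let $b = F(a) = (F_1(a),\ldots,F_\ell(a))$ and let $s = s(a,b)$ be the unique auxiliary string guaranteed by the completeness clause of \lref[Lemma]{lem:had-easy-ver}, so that $\pi := a\circ b\circ s \in \F^m$ satisfies $p(\pi)=0$ for \emph{every} quadratic polynomial $p$ in the support of $V_0$. Define $B := QH_\pi$, and for each subspace $S$ with canonical basis $\vec v_S=(v_1,\ldots,v_{\ell+8})$, let $A(\vec v_S) := (B(v_1),\ldots,B(v_{\ell+8}))$. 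Since $B$ is linear (it is a Hadamard encoding), the induced function $\as$ on $S$ coincides with $B\restrict{S}$; in particular $\as(x_1)=B(x_1)$, so the consistency check in \lref[Item]{itm:cons} passes unconditionally.

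Next I would verify the two remaining checks. For \lref[Item]{itm:t3}, by construction of $z$ and the definition of the \QF\ encoding we have
\[
B(z) \;=\; \sum_{i=1}^m \alpha_i \pi_i + \sum_{i,i'=1}^m \alpha_{ii'}\pi_i\pi_{i'} \;=\; p(\pi) - \alpha_0,
\]
so $\as(z)+\alpha_0 = B(z)+\alpha_0 = p(\pi) = 0$, where the last equality is by the completeness of $V_0$ applied to $\pi = a\circ b \circ s$. For \lref[Item]{itm:t2}, the definition of $u_1,u_2,u_3$ together with $B=QH_\pi$ yields
\[
\as(u_1)\as(u_2) \;=\; \Bigl(\sum_i \beta_i \pi_i\Bigr)\Bigl(\sum_{i'}\gamma_{i'}\pi_{i'}\Bigr) \;=\; \sum_{i,i'}\beta_i\gamma_{i'}\pi_i\pi_{i'} \;=\; B(u_3) \;=\; \as(u_3),
\]
as required.

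Finally I would compute the output. For $i=1,\ldots,\ell$, $\as(o_i)=B(\e_{i+n})=\pi_{i+n}=b_i=F_i(a)$, using that in $\pi=a\circ b\circ s$ the $b$-block occupies coordinates $n+1,\ldots,n+\ell$. For the decoding coordinate, $\pi_i = a_i$ for $i\le n$, so
\[
\as(o_{\ell+1}) = B(o_{\ell+1}) = \sum_{i=1}^n \delta_i \pi_i + \sum_{i,i'=1}^n \delta_{in+i'}\pi_i\pi_{i'} = \sum_{i=1}^n \delta_i a_i + \sum_{i,i'=1}^n \delta_{in+i'} a_ia_{i'} = QH_a(j),
\]
by the definition of the quadratic Hadamard encoding of $a$. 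Since all randomness $R$ and all indices $j$ were handled by the same fixed $\Pi$, this establishes perfect completeness. No step is delicate: the only thing to be careful about is the indexing, namely that the $a$-block of $\pi$ sits in the first $n$ coordinates so that the $QH_a(j)$ value is correctly decoded from $B=QH_\pi$, and that the $b$-block is placed exactly where the points $o_1,\ldots,o_\ell$ select it.
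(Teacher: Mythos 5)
Your proposal is correct and follows essentially the same route as the paper: the honest proof $B=QH_\pi$ with $A$ answering the restrictions of $B$ to the canonical bases, linearity giving the consistency check, and direct evaluation of $B$ at $z$, $u_1,u_2,u_3$, and the output points $o_1,\ldots,o_{\ell+1}$ (the paper verifies the multiplication identity on all basis vectors $\e_{i_1m+i_2}$ and invokes linearity, while you evaluate it directly at $u_1,u_2,u_3$ — an inessential difference). The indexing of the $a$- and $b$-blocks inside $\pi$ is handled exactly as in the paper's argument.
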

\begin{proof}
Let $b=F(a)$ and let $s$ be the string promised in \lref[Lemma]{lem:had-easy-ver}. Let $B:\F^{m_2}\to\F$ be the quadratic functions encodings of $\pi = a\circ b\circ s$. For each $\vec {v_S}=  (v_i)_i$, let $A(\vec {v_S}) = (B(v_1),B(v_2),\ldots, B(v_{\ell+8}))$. We claim that $\Pi=(A,B)$ is a valid proof for $a\in \Phi^{-1}(1)$:

By definition $B$ is a linear function on $\F^{m_2}$, so $\as(x) = B(x)$ for all $x\in S$ and in particular the test in \lref[Item]{itm:cons} passes. Also, by definition $B$ is the Hadamard encoding of the string $\sigma = \pi \circ (\pi\otimes \pi)$, so $B(\e_i) = \sigma_i$ for all $1\le i \le m+m^2$. Thus
\[ B(\e_{i_1m+i_2}) = \sigma_{i_1m+i_2} = \pi_{i_1}\cdot \pi_{i_2} = \sigma_{i_1}\cdot\sigma_{i_2} = B(\e_{i_1})\cdot B(\e_{i_2}) \]
which, by linearity, implies that the test in \lref[Item]{itm:t2} passes. Next, for \lref[Item]{itm:t3}, we know that for every $p$ generated by $V_0$,
\[ 0= p(\pi)=\alpha_0 + \sum_{i=1}^{{n}} \alpha_i \pi_i + \sum_{i,i'=1}^{{n}}\alpha_{ii'}\pi_i\pi_{i'}= \alpha_0 + \sum_i \alpha_i B(\e_i)  + \sum_{ii'}\alpha_{ii'}B(\e_{im+i'}) = \alpha_0 + B(z),\]
so $\as(z) + \alpha_0 = B(z)+\alpha_0 = 0$ as required. It is finally easy to check that
\[\as(\e_{i+n}) = B(\e_{i+{n}}) = \pi_{i+{n}} = b_i,\qquad
i=1,\ldots,\ell.\] Finally, for the $\ell+1$st output,\inote{changed $n'$ to $n$}
\[\as(o_{\ell+1}) = B(o_{\ell+1}) = \sum_{i=1}^{n} \delta_i B(\e_i) + \sum_{i=1}^{n} \sum_{i'=1}^{n} \delta_{i n +i'} B(\e_{im+i'})  = QH_a(j)\] where the last equality is due to the fact that the \QF\ encoding of $\pi=a\circ b\circ s$ contains the \QF\ encoding of $a$. More precisely, $QH_a(\e_{i n +i'}) = B(\e_{im+i'})$ for all $0\le i\le n$ and $1\le i' \le n$.
\end{proof}

\begin{lemma}[Distributional Soundness]\label{lem:Had-soundness}
The verifier above has soundness error at most $\delta = \card \F^{-0.1}$. Namely, given $(\Phi,F)$ for every proof $\Pi=(A,B)$, there are functions $\fakePi(\cdot),\fakex(\cdot)$ such that
\begin{itemize}
  \item For each $R$, either $\Phi(\fakex(R))=1$ and $\fakePi(R)$ is a valid proof for ``$x\in SAT(\Phi)$'' or $\fakePi(R)=\bot$.
  \item For every $j$, there is probability at least $1-\epsilon$ that when
  $R$ is chosen randomly and $V$ is run on $(\Phi,F;j,R)$ it
  either rejects, or $\fakePi(R)$ is a proof that completely agrees with
  the answers of the provers $A,B$ on the queries of $V$ (in
  which case $V$'s output is consistent with $\fakex(R)$).
\end{itemize}
\end{lemma}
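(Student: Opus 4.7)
The plan is to construct the idealized proof $\ViPi$ in three stages, one for each group of checks performed by the verifier. First, extract from the pair $(A,B)$ a short list of candidate global linear functions $B^*_1,\ldots,B^*_L : \F^{m_2}\to\F$ by list-decoding $B$ using the consistency check $\as(x_1)=B(x_1)$. Second, restrict to those list elements $B^*_i$ that are themselves \QF\ encodings of some $\pi_i^*\in\F^m$; this is what the multiplication check $\as(u_1)\as(u_2)=\as(u_3)$ enforces on random $\beta,\gamma$. Third, restrict further to those $\pi_i^*$ that satisfy $p(\pi_i^*)=0$ for every $p$ in the support of $V_0$; by the soundness of \lref[Lemma]{lem:had-easy-ver} such a $\pi_i^*$ must have the form $a\circ b\circ s(a,b)$ with $\Phi(a)=1$ and $b=F(a)$. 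For each random string $R=(p,\beta,\gamma,x_1,x_2,x_3)$, I would then set $\ViPi(R)$ to be the canonical valid proof corresponding to the unique $B^*_i$ (if any) whose restriction to $S$ equals $\as$ and which survives the two validity checks above, and $\bot$ otherwise; $\Vix(R)$ is set to the corresponding $a$.

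The main engine is the list-decoding step for the subspace-vs-point consistency test. The Hadamard code over $\F^{m_2}$ has relative distance $1-1/|\F|$, so \lref[Claim]{claim:ambiguous} applied with $\mu=1/|\F|$ and $\tau=(4/|\F|)^{1/3}$ gives agreement parameter $\eta=O(1/|\F|^{1/3})$ and a list of size at most $L=O(|\F|^{1/3})$. I would argue, by a standard subspace-vs-point analysis in the spirit of \lref[Lemma]{lem:LDT} specialized to degree-$1$ functions, that whenever the consistency test passes, except with probability $O(1/|\F|^{1/3})$ over $R$ the prover's answer $\as$ equals $B^*_i|_S$ for some $i$ in this list. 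This is the step where most of the technical work lies, because $S$ is not uniformly random but rather random through several fixed points (namely $u_1,u_2,u_3$, $z$, and the $o_i$'s, some of which are themselves functions of the random coins). The three random directions $x_1,x_2,x_3$ provide enough fresh randomness to make the standard averaging argument go through.

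Given the list, the remaining two error terms are elementary Schwartz--Zippel-style bounds. If $B^*_i$ is globally linear but \emph{not} a quadratic Hadamard encoding, then there exist indices $k,k'$ with $B^*_i(\e_{km+k'})\neq B^*_i(\e_k)B^*_i(\e_{k'})$, so by linearity the multiplication test $\as(u_1)\as(u_2)=\as(u_3)$ becomes a nonzero quadratic identity in $\beta,\gamma$ that fails with probability $\le 2/|\F|$; a union bound over the list costs $O(L/|\F|)$. Similarly, if $B^*_i=QH_{\pi_i^*}$ but $\pi_i^*$ is not a valid proof, \lref[Lemma]{lem:had-easy-ver} gives $\Pr_p[p(\pi_i^*)=0]\le 2/|\F|$, and hence the equation test $\as(z)+\alpha_0=0$ passes with probability at most $2/|\F|$ per list element, contributing another $O(L/|\F|)$. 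Summing, the distributional soundness error is $O(1/|\F|^{1/3})+O(L/|\F|)\le 1/|\F|^{0.1}$ for $|\F|$ large enough. On random strings $R$ where the decoding succeeds, the constructed $\ViPi(R)$ agrees with $A$ on the basis of $S$ (since $A(\vec{v_S})$ specifies exactly $B^*_i|_S$ on the basis by our selection rule) and with $B$ on $x_1$ (since $B^*_i$ is the local decoding of $B$), giving the required local agreement between $\Pi$ and $\ViPi$.
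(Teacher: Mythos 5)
Your overall skeleton (list-decode $B$, prune to \QF\ encodings of valid proofs, per-element Schwartz--Zippel bounds for the multiplication test and the $z$-test) matches the paper's, but your construction of the idealized proof has a genuine flaw: you select $\ViPi(R)$ as the canonical valid proof attached to the unique $B^*_i$ whose restriction to $S$ equals $\as$. The subspace $S$, and hence the prover answer $\as$, depends on the decoded index $j$ through the output point $o_{\ell+1}$, so what you define is a function of $(R,j)$, not of $R$ alone. The lemma (and \lref[Definition]{def:dpcpdistsound}) requires a single pair $\fakePi(\cdot),\fakex(\cdot)$ depending only on $R$ for which the guarantee holds \emph{for every} $j$, and this is not a formality: in the composition analysis (\lref[Lemma]{lem:distsound}) the consistency-query step bounds the probability that $\Vi y^\inn_{(R_\outr,j_\outr)}(R_\inn)\neq \Vi C(u)$ yet its encoding matches $C^*(u,j_\inn)$ at a random $j_\inn$ via \lref[Claim]{claim:ambiguous}, which needs the inner idealized decoded value to be determined independently of $j_\inn$. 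The paper's proof keys the selection on the value $B(x_1)$ against the list precisely because $x_1$ is a function of $R$ alone (and then pays a separate, cheap price --- probability at most $2/\card\F$ per list element --- for the event that $\as\neq g_i|_S$ while $\as(x_1)=g_i(x_1)$). With your rule you would additionally have to argue that the selected codeword is the same for all $j$ with high probability, which you do not do.

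Second, the engine of your argument is asserted rather than proved, and it is stronger than what the paper imports. \lref[Lemma]{lem:Had-lintest} (the statement actually cited) only guarantees that, except with probability $O(\delta)$, non-rejection implies $B(x_1)\in\set{g_1(x_1),\ldots,g_L(x_1)}$ for a list produced by the test analysis; it does not say that $\as$ agrees with some list element on all of $S$, nor that the list coincides with the combinatorial $\tau$-agreement list of $B$ from \lref[Claim]{claim:ambiguous}. Your claim that passing the single consistency check forces $\as=B^*_i|_S$ for some combinatorially-decoded $B^*_i$, with the subspace constrained to pass through the randomness-dependent points $u_1,u_2,u_3,z,o_1,\ldots,o_{\ell+1}$, is exactly the technical content you would have to supply (you acknowledge this is ``where most of the technical work lies''), so as written this is a missing core step rather than a permissible citation. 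Your remaining per-list-element bounds --- the multiplication identity in $\beta,\gamma$ and the $z$-test via \lref[Lemma]{lem:had-easy-ver} --- are correct and essentially reproduce \lref[Lemma]{lem:QF-linear-soundness}.
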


\begin{proof}
Fix $\Pi = (A,B)$. Given $B$, let $g_1,\ldots,g_L$ be as in the low degree test below, \lref[Lemma]{lem:Had-lintest}. Let
\[ L^* = \sett{ i\in [L]}{g_i=QH_\pi\hbox{ for }\pi=a\circ b\circ s\hbox{  s.t.  }\Pr[V_0^\pi\hbox{ accepts}]=1}.\]
 Set $\fakePi(R)=\bot$ if events $E1$ or $E2$ occurred, where
\begin{enumerate}
  \item[E1:] $B(x_1) \not \in \set{g_i(x_1)\;|\;i\in L^*}$.
  \item[E2:] there is more than one index $i\in L^*$ for which $B(x_1)=g_i(x_1)$.
\end{enumerate}
Otherwise, there is a unique $i\in L^*$ such that $B(x_1)=g_i(x_1)$. By assumption $g_i$ is the \QF\ encoding of some $\pi = a\circ b \circ s$ for which $\Phi(a)=1$ and $F(a)=b$ and $s=s(a,b)$. So we set $\fakex(R)=a$ and set $\fakePi(R) = (A_R,B_R)$ to be a valid proof for $a\in\Phi^{-1}(1)$ so that $B_R = g_i$.

Now fix an arbitrary $j\in \F^{n+n^2}$, and let $R$ be chosen uniformly at random. We claim that the probability that the verifier accepts and yet the view of $\Pi$ and of $\fakePi(R)$ differ is very small. We analyze two cases.
\begin{itemize}
  \item {\bf Accept and $\fakePi(R)=\bot$:} This event can be bounded by
  \[ \Pr[ \hbox{Accept and }E1] +\Pr [ E2 ]  \le \max\left( 2/\card\F^{1/6}  , \frac {4L}{\card\F } \right)+ \binom{L}2 /\card\F \]
  where the second item is bounded due to the large distance of the Hadamard code, and the first item is bounded as follows. If $B(x_1) \not\in \set{g_i(x_1)\;|\; i\in [L]}$ then \lref[Lemma]{lem:Had-lintest} implies that the probability of acceptance is small. If however $B(x_1) = g_i(x_1)$ for some $i\in [L]\setminus L^*$ then for each $i\in [L]\setminus L^*$ \lref[Lemma]{lem:QF-linear-soundness} shows that the acceptance probability is small, and we take a union bound over all such $i$.
  \item {\bf Accept and $\fakePi(R)|_q \neq \Pi|_q$}: We defined $\fakePi(R)$ so that $B_R(x_1) = B(x_1) =g_i(x_1)$ for some $i\in L^*$. So this event occurs if $A_S \neq g_i|_S$. We observe that this event is contained in $\cup_{i\in L^*} E_i$ where $E_i$ is the event that $A_S \neq g_i|_S$ yet $A_S(x_1) = g_i(x_1)$. For each $i$ this event has probability at most $2/\card\F$, and we take a union bound over $i\in L^*$.
\end{itemize}
\end{proof}

The proof of soundness is based on the following lemma, which has
appeared in several places in the literature. The following lemma
appears in~\cite[Proposition 11.0.3]{MoshkovitzR2010b}.

\begin{lemma}[Subspace vs. Point - linearity testing list decoding soundness]\label{lem:Had-lintest}
Let $\delta=2/\card\F^{1/6}$. Given a pair of provers $A,B$, there is a list of $L \le 2/\delta^3$ linear functions $g_1,\ldots,g_L:\F^{m_2}\to\F$ such that the probability that the decoder does not reject yet $B(x_1) \not\in \set{g_1(x_1),\ldots,g_L(x_1)}$ is at most $O(\delta)$.
\end{lemma}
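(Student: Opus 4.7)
The plan is to apply a standard subspace-vs-point linearity-testing list-decoding analysis in the spirit of Raz--Safra and Moshkovitz--Raz (the lemma is essentially Proposition~11.0.3 of \cite{MoshkovitzR2010b}). The high-level idea is that every ``heavy'' linear function that matches the prover $A$ on a non-negligible fraction of subspaces should appear on the list $g_1,\ldots,g_L$; then whenever the consistency check $\as(x_1)=B(x_1)$ passes while $B(x_1)$ lies outside this list, the event is an ``accidental'' coincidence of low probability.

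Concretely, first I would set a threshold $\tau = \Theta(\delta)$ and define, for each point $y\in\F^{m_2}$, the plurality set $\tilde G(y):=\{\alpha:\Pr_{S\ni y}[\as(y)=\alpha]\ge\tau\}$. Using a self-correction argument --- namely, picking a random subspace $S^{*}$ that simultaneously contains $y$, $y'$, and $y+y'$, which is feasible since the subspaces the decoder uses have dimension $\ell+8\ge 8$, and then comparing linear restrictions on $S^*$ against restrictions on two further random subspaces through $y$ and through $y'$ respectively --- one shows that heavy values paste together coherently into global linear functions $g_1,\ldots,g_L:\F^{m_2}\to\F$. This is the same argument as in the classical BLR linearity test, adapted to the list-decoding regime.

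Next, to bound the list size by $2/\delta^3$, I would use the fact that any two distinct linear functions agree on at most a $1/\card\F$-fraction of points. An inclusion-exclusion argument on the agreement sets, entirely analogous to the proof of \lref[Claim]{claim:ambiguous} from the preliminaries, gives $L \le 2/\tau^{c}$ for a small constant $c$; with the right choice of $\tau$ this yields $L\le 2/\delta^3$. To conclude, I would bound the probability that $\as(x_1)=B(x_1)$ yet $B(x_1)\notin\{g_i(x_1)\}_i$ by the probability that $\as(x_1)$ itself takes a non-heavy value; this is at most $L\cdot\tau + O(\delta) = O(\delta)$ by construction, completing the proof.

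The main obstacle is that the distribution over $S$ generated by the decoder is \emph{not} uniform over subspaces of dimension $\ell+8$: $S$ is constrained to contain the fixed vectors $z,u_1,u_2,u_3,o_1,\ldots,o_{\ell+1}$ together with three uniformly random points $x_1,x_2,x_3$. One therefore has to check that after conditioning on the fixed directions, there is still enough residual randomness in $S$ through a given query point $y$ to execute the ``glueing'' step of self-correction, and that the resulting induced distribution over subspaces through $y$ is close enough to the uniform one for the standard low-degree-test machinery to apply. This technical check is precisely what is carried out in~\cite{MoshkovitzR2010b}, and we invoke it here.
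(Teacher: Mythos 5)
The paper gives no proof of this lemma at all: it is imported verbatim from the literature, with a citation to \cite[Proposition~11.0.3]{MoshkovitzR2010b}. Your sketch of the plurality/self-correction/list-size argument is the standard one, and since you too ultimately defer the only genuinely delicate point (the constrained, non-uniform distribution over the subspaces $S$ forced to contain $z,u_1,u_2,u_3,o_1,\ldots,o_{\ell+1}$) to that same reference, your proposal is in essence the same as the paper's treatment.
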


The following claim shows that if $B$'s answers are a {\em linear} function, then the verifier rejects unless $B$ is a \QF\ encoding of a valid proof.
\begin{lemma}\label{lem:QF-linear-soundness}
  Suppose that $B:\F^{m_2}\to\F$ is a {\em linear function}. Let $\pi = a\circ b\circ s$ be defined by
  \[a = B(\e_1)\ldots B(\e_{{n}})\ve  b = B(\e_{{n}+1}),\ldots,B(\e_{{n}+{\ell}}) \ve  s = B(\e_{{n}+\ell+1}),\ldots,B(\e_{m}).\] Assume that either $\Phi(a)=0$ or $F(a)\neq b$ or $s\neq s(a,b)$ or $B \neq QH_\pi$. Then, for all provers $A$, for all $j$, $\Pr_R[V^{A,B}(\Phi,F;j;R)\hbox{ accepts}]\le 4/\card\F$.
\end{lemma}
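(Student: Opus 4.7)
The plan is to split the acceptance event into three sub-cases, according to which of the verifier's three non-consistency checks is ``responsible'' for rejecting, and bound each by $O(1/|\F|)$. Writing $B$ as the Hadamard encoding of some string $w \in \F^{m_2}$ (which exists because $B$ is linear), note that $\pi = a \circ b \circ s$ agrees with $w$ on the first $m$ coordinates by definition of $a,b,s$. The hypothesis then splits into two possibilities: either (i) $B \neq QH_\pi$, equivalently $w_{im+j} \neq \pi_i \pi_j$ for some $i,j \in [m]$, or else (ii) $B = QH_\pi$ but one of $\Phi(a)=0$, $F(a)\neq b$, $s \neq s(a,b)$ holds.

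The first step will show that if the verifier accepts, then with probability at least $1 - O(1/|\F|)$ the prover's functional satisfies $\as = B|_S$. Indeed, if $\as \neq B|_S$ then $\as - B|_S$ is a nonzero linear functional on $S$ whose kernel is a hyperplane of relative density $1/|\F|$, so the consistency check \lref[Check]{itm:cons} only passes when $x_1$ lies in this hyperplane. The delicate point is that $x_1$ is not an arbitrary uniform element of $S$ but one of the random generators used to define it; however, because $\vec{v_S}$ depends only on the subspace, the prover cannot condition its answer on $x_1$ beyond what $S$ reveals. I would argue by a short calculation that, conditioned on $S$, the marginal of $x_1$ is uniform on $S$ up to $O(1/|\F|)$ corrections from the generic event that $x_1,x_2,x_3$ span $S$ modulo the deterministic generators $\spn(u_1,u_2,u_3,z,o_1,\ldots,o_{\ell+1})$, giving the required bound. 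This is the main obstacle in the proof, and I expect it to absorb most of the technical work.

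Once $\as = B|_S$ is established, the remaining two steps are short. In case (i), the multiplication check \lref[Check]{itm:t2} becomes $(\sum_i \beta_i \pi_i)(\sum_j \gamma_j \pi_j) = \sum_{i,j} \beta_i \gamma_j w_{im+j}$, i.e.\ $\beta^{\top} M \gamma = 0$ for $M_{ij} := \pi_i \pi_j - w_{im+j}$. Since $M \neq 0$, the standard Schwartz--Zippel argument for bilinear forms in independent uniform $\beta, \gamma \in \F^m$ bounds the passing probability by $2/|\F|$. In case (ii) the multiplication check is trivially satisfied, so I would rely on the sum check \lref[Check]{itm:t3}: with $\as = B|_S$ the test $\as(z) + \alpha_0 = 0$ computes $\alpha_0 + \sum_i \alpha_i \pi_i + \sum_{i,i'} \alpha_{ii'} \pi_i \pi_{i'} = p(\pi)$, and \lref[Lemma]{lem:had-easy-ver} bounds the chance this vanishes by $2/|\F|$. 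A union bound over the three failure modes then yields rejection with probability at least $1 - 4/|\F|$, matching the claim.
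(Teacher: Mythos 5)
Your proposal is correct and follows essentially the same argument as the paper: the same case split ($B=QH_\pi$ with a bad $\pi$ versus $B\neq QH_\pi$), the same use of \lref[Lemma]{lem:had-easy-ver} for the check in \lref[Item]{itm:t3}, the same bilinear Schwartz--Zippel bound for \lref[Item]{itm:t2}, and the same near-uniformity of $x_1$ in $S$ (the paper's ``$1/\card\F + neg$'' term) for the consistency check in \lref[Item]{itm:cons}. The only difference is presentational --- you bound $\Pr[\text{accept} \wedge \as\neq B\vert_S]$ first and then handle the algebraic checks under $\as=B\vert_S$, whereas the paper per case first discards the bad algebraic event and then lets the consistency check catch $\as\neq B\vert_S$ --- and both orderings give the same $2/\card\F+2/\card\F=4/\card\F$ accounting (your closing phrase ``union bound over the three failure modes'' should be read per case, since only two modes are relevant for any fixed input).
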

\begin{proof}
Assume first that $B = QH_\pi$, but either $\Phi(a)=0$ or $F(a)\neq b$ or $s\neq s(a,b)$. By \lref[Lemma]{lem:had-easy-ver}, when choosing a random quadratic $p$, there is at most probability $2/\card\F$ that $p(\pi) = 0$. So let $p$ be such that $p(\pi)\neq 0$. Since $B$ is the \QF\ encoding of $\pi$, by the definition of $z$ %
\[ B(z)+\alpha_0 = p(\pi) \neq 0.\]
However, $V$ accepts so \lref[Item]{itm:t3} passing implies that $\as(z)+\alpha_0 = 0$. This means that as linear functions on $S$, $\as \not\equiv B|_S$. It remains to observe that conditioned on $S$, $x_1$ is drawn almost uniformly from $S$, so the probability that \lref[Item]{itm:cons} does not reject is at most $1/\card\F + neg \le 2/\card\F$.
Altogether the total probability of accepting in this case is at most $4/\card\F$.

We move to analyze the case where $\Phi(a)=1$ and $F(a)=b$ and $s=s(a,b)$ but $B\neq QH_\pi$. There must be some indices $i_1,i_2$ such that $B(\e_{i_1})B(\e_{i_2})\neq B(\e_{i_1m+i_2})$. We need to upper bound the probability over random $\beta,\gamma$ that the following expression is zero:
\[B(u_1)B(u_2) - B(u_3) = \sum_{i=1}^m \sum_{i'=1}^m  \beta_i \gamma_{i'} ( B(\e_{i})B(\e_{i'}) - B(\e_{im+i'})) .\]
Let us fix the value of $\beta_{i}$ (for $i\neq i_1$) and $\gamma_{i'}$ (for $i'\neq i_2$) arbitrarily. The remaining expression becomes a non-zero quadratic polynomial in $\beta_{i_1},\gamma_{i_2}$, so it can be zero with probability at most $2/\card\F$ over the choice of $\beta_{i_1},\gamma_{i_2}$.

Consider the event where it is not zero. Since \lref[Item]{itm:t2} accepts, $\as(u_1)\as(u_2) = \as(u_3)$ so $\as(u_i) \neq B(u_i)$ for some $i\in \set{1,2,3}$. So as linear functions $\as\neq B|_S$ and the probability of \lref[Item]{itm:cons} passing is at most $1/\card\F + neg \le 2/\card\F$ (where $neg$ is a small probability introduced because $x_1$ is only almost uniform in $S$ conditioned on $S$). Altogether we get a bound of $4/\card\F$ in this case as well.
\end{proof}

\subsection{Reed-Muller based dPCP}
\newcommand{\wX}{{\widehat X}}
\newcommand{\wY}{{\widehat Y}}
\newcommand{\wA}{{\widehat A}}
\newcommand{\wB}{{\widehat B}}
\newcommand{\wpsi}{{\widehat \psi}}
\newcommand{\az}{{\sc all-zero~}}

\inote{Soundness works for params s.t. $h^2 m^3 \card\F^{1/8}\ll  \card \F^{-0.1}$.}
\noindent {\bf \lref[Theorem]{thm:sumcheck} (Restated)} (Reed-Muller based dPCP) {\em For any finite field $\F$,
  and parameter $h$ such that $1 < h < |\F|^{0.01}$ and any $\ell>0$, there
    is a 2-prover $\ell+1$-answer decodable PCP $\D$ with respect
    to the
    encoding $\LDE_{\F,h}$ for the language $\ACSAT_\F$ with the following parameters:
    On inputs (i) a predicate $\Phi:\F^{n}\to\bits$ and (ii) functions
    $F_1,\ldots,F_{\ell}:\F^{n}\to\F$ given by arithmetic circuits over $\F$ whose total size is $N$, the dPCP $\D$ has (let $m=\log N/\log h$)
\begin{itemize}
\item randomness complexity $O(\log {N}+ m\log |\F|) = O(\log {N} + \log \card\F)$,
\item answer size $s,s'=O(m(m+\ell))$,
\item and distributional soundness error $1/|\F|^{0.1}$.
\end{itemize}}\ \\

Fix $H\subseteq \F$ throughout this section and denote $h=\card H$.

We construct a PCP decoder that receives as proof a sequence of low degree polynomials that allow it to simulate the actions of the initial verifier $V_0$ (from \lref[Lemma]{lem:had-easy-ver}) using fewer queries. We first construct this sequence of polynomials $g_1,g_2,g_3,g_4$, then describe a ``verification protocol'' checking that a given sequence of polynomials have the intended form, and finally describe the PCP decoder.

\paragraph{Constructing the low degree functions}
Let $\Phi,F_1,\ldots,F_\ell$ be the input.
\begin{enumerate}
\item Suppose $a\in\Phi^{-1}(1)$ and let $b_i=F_i(a)$ for all $i=1,\ldots,\ell$, and let $s=s(a,b)$ so that the initial verifier $V_0$ accepts $\pi=a\circ b\circ s$ with probability $1$.
    Wlog we assume that $n = \card a$ is a power of $h$, and also $n_1=\card\pi+1$ is a power of $h$. This can be arranged by padding $a$ with zeros and then padding $\pi$ with zeros and changing $V_0$ accordingly.
\item Let $m_1 = m = \log_h n$ and define $g_1 = LDE_a:\F^{m_1}\to\F$ (see \lref[Definition]{def:LDE}),
\item Let $m_2=\log_h n_1$ and let $g_2 = LDE_{\pi\,\circ\,1}:\F^{m_2}\to\F$ be the low degree extension of the string obtained by appending a $1$ to $\pi$. By \lref[Claim]{claim:two-LDEs}
    $g_2(x_1,\ldots,x_{m_1},\bar 0) = g_1(x_1,\ldots,x_{m_1})$.
    Let $z_0\in H^{m_2}$ be the point associated with the last element in $\pi\circ 1$, i.e. such that $g_2(z_0)=1$.
\item\label{itm:g3} Let $\m = 2m_2$ and let $g_3:\F^{\m}\to \F$ be defined by $g_3(x,y) = g_2(x)\cdot g_2(y)$. Note that the degree of $g_3$ is at most $\m{h}$.
\item\label{itm:quad}  Let $P_0$ be the set of all quadratic polynomials generated by $V_0$ on input $(\Phi,F)$. Fix some $p\in P_0$, $p(t_1,\ldots,t_{n_1}) = p_0+\sum_i p_i t_i + \sum_{ii'} p_{ii'} t_it_{i'}$. Define the function $\hat p:\F^\m\to \F$ as follows. For $1\le i \le h^{m_2}$ let $\vec i $ be the corresponding element in $H^{m_2}$ (see discussion before \lref[Definition]{def:LDE}). For each $i<i'\in H^{m_2}\setminus\set {z_0}$, set
\[ \hat p(z_0,z_0) = p_0,\quad \hat p(z_0,\vec i) = p_i,\quad \hat p(\vec i,\vec i') = p_{ii'} ,\quad \hat p(z)=0\hbox{ for all other }z\in H^\m.
\]
Extend $\hat p$ from $H^\m$ to $\F^\m$ by interpolation. The degree of $\hat p$ is at most $\m{h}$.

This definition ensures that for $\sigma := (\pi\circ 1)  \in \F^{n_1}$,
\begin{equation}\label{eq:p-sumsto-zero}
p(\sigma)= p_0+\sum_i p_i \sigma_i + \sum_{ii'} p_{ii'} \sigma_i \sigma_{i'} = \sum_{x\in H^\m} \hat p(x) \cdot g_3(x)= 0.
\end{equation}
\item\label{def:sc-polys} Define low degree functions $s^p_1,\ldots,s^p_\m:\F^\m\to\F$ as nested partial sums of $g_3$ as follows.
\[
 s^p_\m(x) =  \hat p(x)\cdot  g_3(x) \qquad \ve \qquad
 s^p_{i-1}(x) = \sum_{h\in H} s^p_{i}(x_1,\ldots,x_{i-1},h,0,\ldots,0) \quad 1<i\le \m \]
The degree of each $s^p_i$ is at most $2\m h$.
\item Bundling: From the polynomials $g_3$ and $s_i^p$ for each $i$ and $p\in P_0$, we will now create one single polynomial $g_4$ that `bundles' them together. Let us number them as $q_1,\ldots,q_T$ for $T = \m\card{P_0}+1$ and let $t = \ceil{\log_{{h}}(T+1)}=O(\m)$.

    Let $m_4 = \m+t$ and define $g_4:\F^{m_4}\to \F$ by
\begin{equation}\label{eq:bundling} g_4(y_1,\ldots,y_t,x) = \sum_{i=1}^T q_i(x)\cdot w_i(y_1,\ldots,y_t)
\end{equation}
where $w_i:\F^t\to\F$ is the degree $t{h}$ polynomial for which $w_i(y_1,\ldots,y_t) =1$ iff $y_1,\ldots,y_t$ is the $H$-ary representation of $i$, and zero for all other $y_1,\ldots,y_t \in H^t$. The degree of $g_4$ is at most \[d=ht+2h\m =O(hm).\]
\end{enumerate}

By construction the function $g_4$ contains inside of it (as restrictions) the functions $g_1,g_2,g_3$:
\begin{claim}\label{claim:mappings} There is a sequence of 1-1 (linear) mappings
\[\F^{m_1} \stackrel{\sigma_1}{\to} \F^{m_2} \stackrel{\sigma_2}{\to} \F^{\m} \stackrel{\sigma_3}{\to} \F^{m_4} \]
such that for each $x\in \F^{\m}$ we have $g_4(\sigma_3(x)) = g_3(x)$; and
for each $x\in \F^{m_2}$ we have $g_3(\sigma_2(x)) = g_2(x)$; and
for each $x\in \F^{m_1}$ we have $g_2(\sigma_1(x)) = g_1(x)$. \qed
\end{claim}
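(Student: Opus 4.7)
The plan is to exhibit the three mappings explicitly, one for each step, and verify the claimed identities by direct comparison with the definitions of $g_1,g_2,g_3,g_4$. Each mapping will be an affine embedding (the parenthetical ``linear'' above should be read as affine-linear, since the targets involve fixed base points such as $z_0$), and 1-1-ness will be immediate in each case because the mapping varies some coordinates freely while pinning the remaining ones to fixed values.

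First I would handle $\sigma_1:\F^{m_1}\to\F^{m_2}$. Define $\sigma_1(x_1,\ldots,x_{m_1}) = (x_1,\ldots,x_{m_1},0,\ldots,0)$, padding with $m_2-m_1$ zeros. The required identity $g_2(\sigma_1(x)) = g_1(x)$ is then nothing but the conclusion of \lref[Claim]{claim:two-LDEs} applied to the strings $a$ and $\pi\circ 1$, since by construction the latter starts with $a$ (recall $\pi=a\circ b\circ s$) and $g_1=\LDE_a$, $g_2=\LDE_{\pi\circ 1}$.

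Next I would handle $\sigma_2:\F^{m_2}\to\F^{\m}$. Define $\sigma_2(x) = (x,z_0)\in \F^{m_2}\times\F^{m_2}=\F^{\m}$, where $z_0\in H^{m_2}$ is the point fixed in the construction of $g_2$ with $g_2(z_0)=1$. Then by the definition of $g_3$ from Item~\ref{itm:g3},
\[
g_3(\sigma_2(x)) = g_3(x,z_0) = g_2(x)\cdot g_2(z_0) = g_2(x).
\]
Finally, for $\sigma_3:\F^{\m}\to\F^{m_4}$, let $i_3\in[T]$ be the index assigned to $g_3$ in the enumeration $q_1,\ldots,q_T$ of the bundled polynomials, and let $y^\star\in H^t$ be the $H$-ary representation of $i_3$. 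Set $\sigma_3(x) = (y^\star,x)\in \F^t\times\F^{\m}=\F^{m_4}$. By the defining property of the indicator polynomials $w_i$, we have $w_i(y^\star)=\delta_{i,i_3}$ on $H^t$, so \eqref{eq:bundling} gives
\[
g_4(\sigma_3(x)) = \sum_{i=1}^T q_i(x)\cdot w_i(y^\star) = q_{i_3}(x) = g_3(x).
\]

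There is really no obstacle here beyond bookkeeping: the mappings are dictated by the construction of $g_2,g_3,g_4$, and injectivity is immediate because each $\sigma_i$ has a left inverse (just project onto the free coordinates). The only subtlety worth flagging is the affine (rather than strictly linear) nature of $\sigma_2$ and $\sigma_3$ arising from the use of the fixed points $z_0$ and $y^\star$; this does not affect the intended use of the claim, which is to present $g_1,g_2,g_3$ as restrictions of the single ``bundled'' polynomial $g_4$ along explicit low-dimensional affine subspaces.
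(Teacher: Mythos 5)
Your proposal is correct and follows essentially the same route as the paper: exhibit the three coordinate-embedding maps explicitly and verify the identities directly from the definitions of $g_2,g_3,g_4$ (using \lref[Claim]{claim:two-LDEs} for $\sigma_1$ and the indicator polynomials $w_i$ for $\sigma_3$). The only (immaterial) difference is that you set $\sigma_2(x)=(x,z_0)$ while the paper uses $(z_0,x)$; both work since $g_3(x,y)=g_2(x)\cdot g_2(y)$, though the paper's convention matches the verification protocol, which reads off $g_2$ via $\tilde g_3(z_0,\cdot)$.
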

\begin{proof}
We map a point $x_1 \in \F^{m_1}$ to $(x_1,\bar 0)\in \F^{m_2}$. We map a point $x_2 \in \F^{m_2}$ to $(z_0,x_2)\in \F^\m$. We map a point $x_3\in \F^\m$ to $(y,x_3)\in \F^{m_4}$ in the domain of $g_4$, where $y\in\F^t$ is the index of $g_3$ in the bundling.
\end{proof}
Answers of the $B$ prover will correspond to point-evaluations of $g_4$, and answers of the $A$ prover will correspond to restrictions of $g_4$ to certain low-degree curves. A low degree curve, see \lref[Definition]{def:curve}, is specified by a tuple of points in $\F^{m_4}$ through which it passes. This tuple contains
 \begin{itemize}
 \item Points for the verification check protocol, see below
 \item Output points: The are $\ell+1$ points whose values will give $\ell+1$ answers for the decoder. These answers are the values of $b_1,\ldots,b_\ell$ and the $j$-th element in the encoding $LDE_a$.

    Let $v_1,\ldots,v_\ell\in \F^{m_2}$ be the points in the domain of $g_2$ that correspond to $b_1,\ldots,b_{\ell}$. For each $i$, let $o_i = \sigma_3(\sigma_2(v_i))$ be the corresponding point in the domain of $g_4$ (as in the claim above).

    Let $v_{\ell+1}\in \F^{m_1}$ be the $j$th point in the domain of $g_1=LDE_a$ (we assume some canonical numbering of the indices of the LDE encoding). Let $o_{\ell+1} = \sigma_3(\sigma_2(\sigma_1(v_{\ell+1})))$ be the corresponding point in the domain of $g_4$.

 \item Random points
 \end{itemize}
\paragraph{Verification Protocol}
The verification protocol accesses functions $\tilde g_3$ and $\set{{\tilde  s}_i^p}_{i,p}$ and checks (locally) that they have the correct form, as intended in the construction, i.e. that there is some valid proof $\pi$ such that they are equal to $g_3$ and $s_i^p$ as in the construction above.

  \begin{enumerate}
  \item Check that $\tilde g_3(z_0,z_0)=1$.
  \item Choose two random points $x,y\in \F^{m_2}$, so that $(x,y)\in\F^{\m}$. Check that $\tilde g_3(z_0,x)\cdot \tilde g_3(z_0,y) = \tilde g_3(x,y)$.
  \item Choose a random quadratic $p$ by simulating $V_0$, and compute $\hat p:\F^\m\to\F$ as in \lref[Item]{itm:quad} in the expected proof.
  \item\label{itm:sumcheck-step} Do the sumcheck: Choose a random point $x = (x_1,\ldots,x_\m)\in \F^\m$, and do
  \begin{enumerate}
    \item\label{itm:4a} Check that $\tilde s^p_\m(x) = \hat p(x) \cdot \tilde g_3(x)$
    \item\label{itm:4b} For each $1<i\le \m$ check that $\tilde s^p_{i-1}(x_1,\ldots,x_{i-1},\bar  0) = \sum_{h\in H} \tilde s_i^p(x_1,\ldots,x_{i-1},h,\bar 0)$
    \item\label{itm:4c} Check that $\sum_{h\in H} \tilde s_1^p(h,\bar 0)=0$.
  \end{enumerate}
  \end{enumerate}
The verification protocol accesses $k=(h+1)m_3+5$ points in the
domains of $\tilde g_3$ and $\tilde s_i^p$. Let $u_1,\ldots,u_k\in
\F^{m_4}$ be the corresponding points in the domain of $g_4$ (as in
\lref[Claim]{claim:mappings}).

\paragraph{The PCP Decoder protocol}
\begin{enumerate}
\item Compute the output points $o_1,\ldots,o_{\ell+1}$ as above.
\item Use the randomness $R$ to compute $u_1,\ldots,u_M$ using the verification protocol.
\item Use the randomness $R$ to choose $x_1,x_2,x_3\in \F^{m_4}$ uniformly. Let $\gamma =
\gamma_{o_1,\ldots,o_{\ell+1},u_1,\ldots,u_M;x_1,x_2,x_3}$ be the manifold as in \lref[Definition]{def:manif}, such that $\gamma$ contains the points $o_1,\ldots,o_{\ell+1},u_1,\ldots,u_M$ as well as $x_1,x_2,x_3$ and has degree at most $\ell+1 + k+1 = \ell+O(hm)$. Send $\gamma$ to $A$ and let $A$'s answer $A(\gamma)$ be the coefficients of a function $\F^4\to\F$ whose degree is at most $d' \le d(M+\ell+2)$. This function is supposed to equal $B\circ \gamma$.

Let $\ag:Im(\gamma)\to\F$ be defined for each $x\in Im(\gamma)$ as $\ag(x) := A(\gamma)(t)$ where $x = \gamma(t)$. Clearly $\ag$ can be computed from $A(\gamma)$.
\item Send $x_1$ to the $B$ prover and let $B(x_1)$ be its answer. Reject unless
 $\ag(x_1) = B(x_1)$.
\item Simulate the checks of \lref[Item]{itm:sumcheck-step} using the values $\ag(u_1),\ldots,\ag(u_{M})$. Reject unless all of the checks succeed.
\item Output $\ag(o_1),\ldots,\ag(o_{\ell+1})$.
\end{enumerate}
To summarize, the PCP decoder computes $(q,\varphi,f,g)$ as follows:
\begin{itemize}
\item The queries $q$: $q_0=\gamma$ is the query to the $A$ prover and $q_1=x_1$ is the query to the $B$ prover.
\item The predicate $\varphi$ rejects unless all of the checks in \lref[Item]{itm:sumcheck-step} pass.
\item The function $g$ computes $\ag(x_1)$ (for the consistency test).
\item The functions $f_1,\ldots,f_{\ell+1}$ - compute $\ag(o_i)$ for $i=1,\ldots,\ell+1$.
\end{itemize}
This completes the description of the PCP decoder, and we proceed to prove its correctness.

\begin{lemma}[Perfect Completeness]\label{lem:sc-completeness}
The PCP decoder has perfect completeness. Namely, for every $a\in \Phi^{-1}(1)$, there is a proof $\Pi$ such that for every $j\in \F^{m}$ and every random string $R$, the verifier on input $(\Phi,F;j,R)$ accepts and outputs $F_1(a),\ldots,F_{\ell}(a),LDE_a(j)$.
\end{lemma}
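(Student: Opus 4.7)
The plan is to construct honest provers $A,B$ directly from the polynomial $g_4$ and then verify step-by-step that every check of the decoder passes. Given $a\in \Phi^{-1}(1)$, set $b_i=F_i(a)$ and $s=s(a,b)$, so that $V_0$ accepts $\pi=a\circ b\circ s$ with probability $1$; then build $g_1,g_2,g_3,g_4$ and the sumcheck polynomials $\{s_i^p\}$ exactly as in the construction. Define $B:\F^{m_4}\to\F$ by $B(x):=g_4(x)$, and for each manifold $\gamma$ in the support of the decoder define $A(\gamma)$ to be the coefficient vector of the composed polynomial $g_4\circ\gamma:\F^4\to\F$; a degree count ($\deg g_4\le d$ and $\deg\gamma\le \ell+O(hm)$) shows this composition has degree at most $d'=d(M+\ell+2)$, so $A(\gamma)$ fits in the allocated answer size. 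With this choice, $\ag(x)=g_4(x)$ for every $x\in\mathrm{Im}(\gamma)$.

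First I would verify the consistency test: $\ag(x_1)=g_4(x_1)=B(x_1)$ by construction. Next I would handle the verification-protocol checks using \lref[Claim]{claim:mappings} to translate evaluations of $\tilde g_3$ and $\tilde s_i^p$ back to evaluations of $g_4$, which are in turn evaluations of $g_3$ and $s_i^p$. The check $\tilde g_3(z_0,z_0)=1$ follows from $g_3(z_0,z_0)=g_2(z_0)^2=1$. The multiplication check is immediate from $g_3(x,y)=g_2(x)g_2(y)$ together with $g_2(z_0)=1$. For the sumcheck, item~(a) is just the defining identity $s^p_\m(x)=\hat p(x)\cdot g_3(x)$; item~(b) is the defining recursion $s^p_{i-1}(x_1,\ldots,x_{i-1},\bar 0)=\sum_{h\in H}s^p_i(x_1,\ldots,x_{i-1},h,\bar 0)$; and item~(c), $\sum_{h\in H}s^p_1(h,\bar 0)=0$, follows by telescoping the recursion to obtain $\sum_{x\in H^\m}\hat p(x)g_3(x)$, which equals $p(\pi\circ 1)=0$ by \eqref{eq:p-sumsto-zero} (using that $V_0$ accepts $\pi$ with probability $1$ on every $p\in P_0$).

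Finally I would read off the outputs. By \lref[Claim]{claim:mappings}, $\ag(o_i)=g_4(o_i)=g_3(\sigma_2(v_i))=g_2(v_i)=b_i=F_i(a)$ for $i=1,\ldots,\ell$, and $\ag(o_{\ell+1})=g_4(o_{\ell+1})=g_1(v_{\ell+1})=\mathrm{LDE}_a(j)$ by our choice of $v_{\ell+1}$ as the $j$-th canonical point in the domain of $g_1=\mathrm{LDE}_a$. Putting these together yields the desired output tuple $(F_1(a),\ldots,F_\ell(a),\mathrm{LDE}_a(j))$ with acceptance probability $1$, for every $R$ and every $j$.

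There is no substantive obstacle: the only mild point to be careful about is the degree bookkeeping, ensuring that the composition $g_4\circ\gamma$ genuinely fits inside the answer size bound $s$, and verifying that the telescoping of the sumcheck lands exactly on the identity \eqref{eq:p-sumsto-zero}. Both are mechanical given the definitions in items~\ref{itm:g3}--\ref{def:sc-polys} of the construction.
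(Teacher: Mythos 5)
Your proposal is correct and follows essentially the same route as the paper's (much terser) proof: take $B$ to answer according to $g_4$ built from the accepting $\pi=a\circ b\circ s$, take $A(\gamma)$ to be the restriction $g_4\circ\gamma$, and observe that all consistency, multiplication, and sumcheck tests pass identically while the output points decode to $F_1(a),\ldots,F_\ell(a),\mathrm{LDE}_a(j)$ via \lref[Claim]{claim:mappings}. Your write-up simply fills in the verification details (including the telescoping to \eqref{eq:p-sumsto-zero}) that the paper leaves implicit.
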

\begin{proof}
   If $\Phi(a)=1$ and $F(a)=b$ then there is a proof $\pi=abs\in \F^{n_1}$ such that for every quadratic $p$ generated by the initial verifier $V_0$, $p(\pi)=0$. Compute from $\pi$ the function $g_4$ as described in the ``expected proof'' section above, and let $B$ answer according to $g_4$. The checks in \lref[Item]{itm:sumcheck-step} will always succeed. It remains to take $A$ to be the restrictions of $B$ to the manifolds and then the consistency checks will pass and the verifier will always output as required.
\end{proof}

\begin{lemma}[Distributional Soundness]\label{lem:SC-soundness}
The verifier above has soundness error at most $\delta = \card \F^{-0.1}$. Namely, given $(\Phi,F)$ for every proof $\Pi=(A,B)$, there are functions $\fakePi(\cdot),\fakex(\cdot)$ such that
\phnote{Is $\Phi(x) = 0$ or 1?}
  \begin{itemize}
  \item For each $R$, either $\Phi(\fakex(R))=1$
  and $\fakePi(R)$ is a valid proof for ``$x\in SAT(\Phi)$'' or $\fakePi(R)=\bot$.
  \item For every $j$, there is probability at least $1-\epsilon$ that when
  $R$ is chosen randomly and $V$ is run on $(\Phi,F;j,R)$ it
  either rejects, or $\fakePi(R)$ is a proof that completely agrees with
  the answers of the provers $A,B$ on the queries of $V$ (in
  which case $V$'s output is consistent with $\fakex(R)$).
\end{itemize}
\end{lemma}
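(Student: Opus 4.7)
The plan is to follow the same template as the proof of Lemma~\ref{lem:Had-soundness}, replacing the linearity/subspace-vs-point test by the manifold-vs-point low degree test (Lemma~\ref{lem:LDT}), and replacing the \QF-structure check by a sumcheck analysis. Fix provers $\Pi=(A,B)$. Applying Lemma~\ref{lem:LDT} to $B$ with the specific family $\Gamma$ of manifolds $\gamma_{o_1,\ldots,o_{\ell+1},u_1,\ldots,u_M;x_1,x_2,x_3}$ that the decoder generates (with $z_1,\ldots,z_k$ being the fixed points $o_1,\ldots,u_M$), and with the degree-$d$ restrictions coming from $A$, yields a list $Q_1,\ldots,Q_L$ of degree-$d$ polynomials on $\F^{m_4}$ with $L\le 2/\delta_0$ for $\delta_0 = (m_4 k d/\card\F)^{1/8} \ll \card\F^{-0.1}$, such that outside a $\delta_0$-fraction of random strings, either $\ag(x_1)\ne B(x_1)$ (the consistency check fails) or $B(x_1)\in\set{Q_i(x_1)}_{i\in[L]}$.

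For each $Q_i$, use Claim~\ref{claim:mappings} to read off candidate restrictions $\tilde g_3=Q_i\circ\sigma_3$, $\tilde g_2=Q_i\circ\sigma_3\circ\sigma_2$, $\tilde g_1=Q_i\circ\sigma_3\circ\sigma_2\circ\sigma_1$, and the candidate sumcheck polynomials $\tilde s^p_j$ extracted from the bundling coordinates. Define $i\in L^*$ to be \emph{good} if the string $a$ obtained by evaluating $\tilde g_1$ on $H^{m_1}$ lies in $\Phi^{-1}(1)$, the string $b$ obtained by evaluating $\tilde g_2$ at the output positions equals $F(a)$, the full string $\pi=a\circ F(a)\circ s(a,b)$ produces the same $g_4$ (and thus $Q_i=g_4$), and the $\tilde s^p_j$ equal the honest partial sums. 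Set $\fakePi(R)=\bot$ unless there is a unique $i\in L^*$ with $B(x_1)=Q_i(x_1)$; in that case, let $\fakex(R)=a$ and let $\fakePi(R)=(A_R,B_R)$ with $B_R=Q_i$ and $A_R$ giving, on each manifold $\gamma$, the coefficients of $Q_i\circ\gamma$.

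Now bound $\Pr[\,V\text{ accepts and }\Pi|_q\neq\fakePi(R)|_q\,]$ by a union of events, entirely in parallel to Lemma~\ref{lem:Had-soundness}: \emph{(a)} the low-degree-test miss, of probability $\le\delta_0$; \emph{(b)} ambiguity, i.e.\ two distinct $Q_i,Q_{i'}$ agree at a random $x_1$, of probability $\le\binom{L}{2}d/\card\F$; \emph{(c)} a mismatch between $A(\gamma)$ and $Q_i|_\gamma$ at $x_1$ even though $B(x_1)=Q_i(x_1)$, of probability $\le L\cdot d'/\card\F$ (two distinct degree-$d'$ univariate polynomials agree on at most $d'/\card\F$ fraction of the line through $x_1$, and $x_1$ is almost uniform on $\gamma$); and \emph{(d)} a bad match $B(x_1)=Q_i(x_1)$ with $i\in[L]\setminus L^*$. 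All four terms are $\ll\card\F^{-0.1}$ by our choice of parameters ($h<\card\F^{0.01}$, $d=O(hm)$, etc.).

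The one genuinely new ingredient, and the main obstacle, is \emph{(d)}, which is the analogue of Lemma~\ref{lem:QF-linear-soundness}. The argument is the standard sumcheck/low-degree soundness: if $i\notin L^*$, then one of several things fails. If $\tilde g_3(z_0,z_0)\ne 1$, the first check rejects deterministically. If $\tilde g_3\not\equiv\tilde g_2\otimes\tilde g_2$ as polynomials in $\F^\m$ (where $\tilde g_2(x):=\tilde g_3(z_0,x)$), then since both sides have degree $\le 2\m h$, the Schwartz--Zippel bound shows the multiplication check rejects except with probability $\le 2\m h/\card\F$. Once $\tilde g_3=\tilde g_2\otimes\tilde g_2$, the decoded string $\sigma\in\F^{n_1}$ is just the restriction of $\tilde g_2$ to $H^{m_2}$; with probability $\ge 1-2/\card\F$ over $p\sim V_0$ we have $p(\sigma)\ne 0$ by Lemma~\ref{lem:had-easy-ver}, so by \eqref{eq:p-sumsto-zero} the true sum $\sum_{x\in H^\m}\hat p(x)\tilde g_3(x)$ is nonzero, and the standard $\m$-round sumcheck soundness on degree-$2\m h$ polynomials forces rejection of step \lref[Item]{itm:sumcheck-step} except with probability $\le 2\m^2 h/\card\F$. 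Summing these bounds over $i\in[L]\setminus L^*$ and comparing with $\card\F^{-0.1}$ completes the argument.
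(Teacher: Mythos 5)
Your overall strategy is the same as the paper's: extract a short list $Q_1,\ldots,Q_L$ of degree-$d$ polynomials from the manifold-vs-point test (\lref[Lemma]{lem:LDT}), define $\fakePi(R)$ via the unique ``good'' list element agreeing with $B$ at $x_1$, and bound the error by the LDT miss, list ambiguity at $x_1$, the $A$-vs-$Q_i$ mismatch on the manifold, and a per-$i$ rejection bound for bad list elements (the paper delegates that last bound to \lref[Lemma]{lem:sumcheck}; you inline essentially the same sumcheck argument). However, your step (d) has a genuine gap, caused by your \emph{syntactic} definition of the good set $L^*$: you require $Q_i$ to equal the honest bundled polynomial $g_4$ exactly, including that all $\tilde s^p_j$ coincide with the honest partial sums. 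The verifier never constrains those components everywhere: the honest $s^p_{i-1}$ depends only on its first $i-1$ coordinates, and \lref[Item]{itm:4b} and \lref[Item]{itm:4c} evaluate $\tilde s^p_{i-1}$ only at points whose trailing coordinates are $\bar 0$. Hence there exist low-degree bundles $Q_i$ that decode to a perfectly valid $\pi=a\circ F(a)\circ s(a,b)$, pass every check in \lref[Item]{itm:sumcheck-step} with probability $1$ (when $A$ answers with $Q_i\circ\gamma$), and output consistently with $LDE_a$, yet differ from the honest $g_4$ on the unchecked slices. Such $i$ lie outside your $L^*$, but none of the failure modes you enumerate in (d) applies to them: the $z_0$ check and multiplication check pass identically, and since the decoded $\sigma$ is valid, \lref[Lemma]{lem:had-easy-ver} gives $p(\sigma)=0$ for \emph{every} $p$, so the sumcheck soundness yields nothing. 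Consequently your construction sets $\fakePi(R)=\bot$ on an event where the verifier may accept with probability close to $1$, which directly violates the distributional-soundness requirement you are trying to verify; the bound you claim for (d) is simply false for these $Q_i$.

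The fix is what the paper does: define the good set \emph{semantically} --- $Q_i$ is good iff, when $B$ answers according to $Q_i$, there exists an $A$ prover making the decoder always accept and always output consistently with $LDE_a$ for some $a\in\Phi^{-1}(1)$ (the paper's set $I$). With that definition the problematic bundles above are counted as valid proofs and are used to define $\fakePi(R)$ (taking $B_R=Q_i$ and $A_R$ its restrictions), and the per-$i$ rejection bound is only needed when the decoded $(a,b)$ is genuinely bad, which is exactly the hypothesis of \lref[Lemma]{lem:sumcheck}. Alternatively, you could weaken $L^*$ to require honesty only of the restrictions the verifier actually uses, but then you must separately argue that every such $Q_i$ still supports a valid proof in the sense of \lref[Definition]{def:dpcpdistsound}. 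The remaining parts of your argument (the LDT application, the ambiguity bound via the distance of degree-$d$ polynomials, and the $A(\gamma)$-vs-$Q_i|_\gamma$ bound) match the paper's proof and are fine at the paper's level of rigor.
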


\begin{proof}
Fix $\Pi = (A,B)$. Given $B$, let $Q_1,\ldots,Q_L$ be degree $\le d$ functions as in \lref[Lemma]{lem:LDT}. We say that $Q_i$ is a valid proof for $a\in \Phi^{-1}(1)$ when the $B$ prover answers according to $Q_i$, there is an $A$ prover causing the verifier to always output consistently with $LDE_a$. Let $I\subset [L]$ be the indices for which $Q_i$ is a valid proof for some $a,b$.

For each $R$ note that in the verifier protocol $x_1$ is chosen (based on $R$ but) independently of $j$. Set $\fakePi(R)=\bot$ if events $E1$ or $E2$ occurred, where
\begin{enumerate}
  \item[E1:] $B(x_1) \not \in \set{Q_i(x_1)\;|\;i\in I}$.
  \item[E2:] there is more than one index $i\in I$ for which $B(x_1)=Q_i(x_1)$.
\end{enumerate}
Otherwise, there is a unique $i\in I$ such that $B(x_1)=Q_i(x_1)$. By assumption $Q_i$ is a valid proof for some $a\in\Phi^{-1}(1)$ so we set $\fakex(R)=a$ and set $\fakePi(R) = (A_R,B_R)$ to be a valid proof for $a$.

Now fix an arbitrary $j\in \F^{m}$, and let $R$ be chosen uniformly at random. We claim that the probability that the verifier accepts and yet the view of $\Pi$ and of $\fakePi(R)$ differ is very small. We analyze two cases.
\begin{itemize}
  \item {\bf Accept and $\fakePi(R)=\bot$:} This event can be bounded by
  \[ \Pr[ \hbox{Accept and }E1] +\Pr [ E2 ]  \le \max\left( O(\card\F)^{-0.1})  , O({Lmd}/{\card\F }) \right)+ \binom{L}2\cdot d /\card\F \]
  where the second item is bounded due to the large distance between degree $d$ functions, and the first item is bounded as follows. If $B(x_1) \not\in \set{Q_i(x_1)\;|\; i\in [L]}$ then \lref[Lemma]{lem:LDT} with parameters $m=m_4,k'=k+\ell= O(hm),d$ implies that the probability of acceptance is at most
  \[(mk'd/\card\F)^{1/8} = O(h^2 m^3 \card\F)^{1/8}\le \card \F^{-0.1}\]
(the last inequality is true since $h \le \card\F^{0.01}$ and for large enough $n$ since $m = \log n/\log h$ and $\card \F \gg \poly\log n$.)

  If however $B(x_1) = Q_i(x_1)$ for some $i\in [L]\setminus I$ then for each $i\in [L]\setminus I$ \lref[Lemma]{lem:sumcheck} below shows that the acceptance probability is at most $O(md/\card\F)$, and we take a union bound over all such $i$.
  \item {\bf Accept and $\fakePi(R)|_q \neq \Pi|_q$}: We defined $\fakePi(R)$ so that $B_R(x_1) = B(x_1) =Q_i(x_1)$ for some $i\in I$. So this event occurs if $\ag \neq Q_i|_\gamma$. We observe that this event is contained in $\cup_{i\in I} E_i$ where $E_i$ is the event that $\ag \neq Q_i|_\gamma$ yet $\ag(x_1) = Q_i(x_1)$. For each $i$ this event has probability at most $dd'/\card\F$, and we take a union bound over $i\in I$. The total probability of error in this event is at most $Ldd'/\card\F$.
\end{itemize}
\end{proof}

%
%
\begin{lemma}[Soundness against a low degree prover]\label{lem:sumcheck}
Suppose that $B:\F^{m_4}\to\F$ is a function of degree at most $d = ht+2h\m$, and let $g,\set{s_i^p}$ be its unbundling. Suppose further that $g$ is consistent with $a,b$ such that either $\Phi(a)=0$ or $b\neq F(a)$. Then, for all provers $A$, the probability that the verifier accepts is at most $ O(md/{\card \F})$.
\end{lemma}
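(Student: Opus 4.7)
The plan is to carry out the standard sumcheck soundness analysis, adapted to handle both the bundling of polynomials into $B$ and the fact that sumcheck values are read through the $A$ prover rather than queried directly. First I would unbundle $B$ via the bundling formula~\eqref{eq:bundling} together with the fact that the $w_i$ are indicator polynomials on $H^t$, producing a low-degree polynomial $\tilde g_3$ and low-degree polynomials $\tilde s_i^p$ for each sumcheck round $i$ and each quadratic $p\in P_0$; each has degree at most $d$. Using the hypothesis that $\tilde g_3$ is consistent with $(a,b)$, I extract the implicit string $\pi = a\circ b\circ \tilde s$ by reading $\tilde g_2(\cdot):=\tilde g_3(z_0,\cdot)$ on $H^{m_2}$. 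Since either $\Phi(a)=0$ or $b\neq F(a)$, Lemma~\ref{lem:had-easy-ver} ensures that $p(\pi)\neq 0$ except with probability $2/|\F|$ over the random quadratic $p\sim V_0$, and I condition on this event.

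Next I would perform the round-by-round sumcheck analysis. Let $s_i^p$ denote the ``intended'' sumcheck polynomials built from $\tilde g_3$ and $\hat p$ as in step~\ref{def:sc-polys} of the construction; by design $\sum_{h\in H} s_1^p(h,\bar 0) = p(\pi) \neq 0$, so if the prover's $\tilde s_1^p$ passes check~\ref{itm:4c} then $\tilde s_1^p \neq s_1^p$ as polynomials. By Schwartz--Zippel on a degree-$d$ univariate polynomial, the random sumcheck coordinate $x_1$ separates them except with probability at most $d/|\F|$; when they differ at $x_1$, check~\ref{itm:4b} at the next round forces $\tilde s_2^p(x_1,\cdot,\bar 0)\neq s_2^p(x_1,\cdot,\bar 0)$, and so on. Iterating over all $\m$ rounds, with probability at least $1-O(md/|\F|)$ we arrive at $\tilde s_\m^p(x)\neq s_\m^p(x) = \hat p(x)\tilde g_3(x)$ at the fully sampled point, violating check~\ref{itm:4a}.

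The remaining ingredient is to argue that reading the sumcheck values through $\ag$ rather than directly from $B$ does not let the adversary evade this argument. Both $\ag$ and $B\circ \gamma$ are polynomials on $\F^4$ (of degrees at most $d'$ and $d\cdot \deg(\gamma)$ respectively), so if they differ as polynomials, the random choice of $x_2,x_3$, which translates the manifold independently of the anchor $x_1$, forces the consistency check $\ag(x_1)=B(x_1)$ to fail by a Schwartz--Zippel argument, except with probability $O(d'/|\F|)$ that is absorbed into the $O(md/|\F|)$ bound. Once $\ag = B\circ\gamma$, every check is effectively performed against values of $B$ itself, and the round-by-round sumcheck argument applies verbatim. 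Combining the three error terms ($2/|\F|$ for the choice of $p$, $O(md/|\F|)$ for the sumcheck rounds, and $O(d'/|\F|)$ for the manifold-vs-point consistency) gives the claimed $O(md/|\F|)$ bound.

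The step I expect to be the main obstacle is the last one: $x_1$ plays the dual role of being the $B$-query point and the anchor of the curve inside $\gamma$, so the consistency test has to be analyzed by leveraging the independent randomness in $x_2$ and $x_3$ rather than in $x_1$ itself; this requires verifying that as $x_2,x_3$ vary, the restriction of $B\circ\gamma$ to the fixed parameter $(1,0,0,0)\in\F^4$ varies sufficiently to pin down any deviation of $A$'s strategy from the honest one.
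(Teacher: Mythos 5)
Your core sumcheck analysis coincides with the paper's proof of this lemma: unbundle the low-degree $B$ into $\tilde g_3$ and the $\tilde s^p_i$, note that a random $p\sim V_0$ satisfies $p(\pi)\neq 0$ except with probability $O(1/\card\F)$ by \lref[Lemma]{lem:had-easy-ver}, and then argue (round-by-round in your write-up, ``at least one of the $\m$ polynomial identities must fail, and each is tested at a random prefix of $x$'' in the paper's) that the checks of Items~\ref{itm:4a}--\ref{itm:4c} can all pass with probability at most $O(\m d/\card\F)$. That part matches.

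The genuine gap is exactly the step you flagged as the main obstacle: the attempt to force $\ag=B\circ\gamma$ from the single consistency check using the randomness of $x_2,x_3$. The prover $A$ answers after seeing the entire query $\gamma$, and the parametrization of $\gamma$ reveals the anchor: $x_1=\gamma(1,0,0,0)$. So a cheating $A$ can always return a degree-$d'$ polynomial that equals $B$ at the parameter point $(1,0,0,0)$ (the consistency test then passes with probability $1$) while taking arbitrary values at the parameter points of $u_1,\ldots,u_M$ and $o_1,\ldots,o_{\ell+1}$ --- e.g.\ the values of an honestly bundled $g_4$ of some valid proof, which makes all simulated checks pass as well. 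Schwartz--Zippel does not apply here: there is no fixed pair of distinct polynomials evaluated at a random point, since $A(\gamma)$ is chosen as a function of $\gamma$, the evaluation parameter $(1,0,0,0)$ is fixed, and $B(x_1)$ does not vary with $x_2,x_3$ at all --- your closing sentence asks for precisely a variation that does not exist. Consequently the lemma cannot be proved, as you attempt, by binding an arbitrary $A$ to $B$ through this one test; and indeed the paper does not try to. Its proof of \lref[Lemma]{lem:sumcheck} analyzes the verifier's checks as if they were evaluated on (the unbundling of) the low-degree $B$ itself, ignoring $A$ entirely, and the coupling between $A$'s answers and the list of low-degree functions is deferred to the distributional-soundness proof (\lref[Lemma]{lem:SC-soundness}), where it is treated together with the manifold-vs-point test of \lref[Lemma]{lem:LDT} via the events $E_i$ (``$\ag\neq Q_i|_\gamma$ yet $\ag(x_1)=Q_i(x_1)$''). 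So either restate the lemma as a claim about the checks performed on $B$'s own values, or drop the bridging step and place the $A$-versus-$B$ argument at the level of \lref[Lemma]{lem:SC-soundness}, as the paper does.
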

%
%
\begin{proof}(of \lref[Lemma]{lem:sumcheck})
Assume that $\Phi(a)=0$ or $F(a)\neq b$ and denote $\sigma=abs$. The probability that a random quadratic $p$ drawn according to $V_0$ (from \lref[Lemma]{lem:had-easy-ver}) will satisfy $p(\sigma)=0$ is at most $O(1/\card\F)$. Suppose $p(\sigma)\neq 0$. This means that
\begin{equation}\label{eq:quadzero}
\sum_{x\in H^\m } \hat p(x) g(x) \neq 0 .
\end{equation}
Observe that if the check in \lref[Item]{itm:4c} passes then either
\begin{equation}\label{eq:sc-m}
s_\m^p \,\neq\, \hat p \cdot g,
\end{equation}
or, for some $i$, as functions of $x_1,\ldots,x_{i-1}$,
\begin{equation}\label{eq:sc-i}
s^p_{i-1}(x_1,\ldots,x_{i-1},\bar  0) \neq \sum_{h\in H} s_i^p(x_1,\ldots,x_{i-1},h,\bar 0).
\end{equation}
Otherwise,
\[ \sum_{x\in H^\m } \hat p(x) g(x) =  \sum_{x_1,\ldots,x_\m\in H } s_\m^p(x_1,\ldots,x_\m) = \]
\[ = \sum_{x_1,\ldots,x_{\m-1} \in H } s_{\m-1}^p(x_1,\ldots,x_{\m-1},0) = ... = \sum_{x_1\in H } s_{\m-1}^p(x_1,\bar 0) = 0
\]
contradicting \eqref{eq:quadzero}.
The verifier checks each of these $\m$ equalities in \eqref{eq:sc-m} and \eqref{eq:sc-i}) on a random point (in Items~\ref{itm:4a} and \ref{itm:4b}), so the probability of acceptance is at most $\m \cdot \frac d{\card \F}$.
\end{proof}
%
%
%
%



{\small
\bibliographystyle{prahladhurl}
\bibliography{DHK}

}

\end{document}


1) streamline the soundness proof in the composition section
2) section 6
3) add from google-doc - why can't this approach give better parameters